\documentclass[12pt,notitlepage,a4paper]{article}
\usepackage{amsfonts}
\usepackage{float}
\usepackage{subfigure}
\usepackage[left=0.5in,right=0.5in,top=0.5in,bottom=0.75in]{geometry}
\usepackage{graphics}
\usepackage{enumerate}
\usepackage{hyperref}
\usepackage{tikz}
\usetikzlibrary{decorations.markings,quantikz}
\usepackage{ulem}
\usepackage{mathtools}
\mathtoolsset{showonlyrefs}
\usepackage{amsmath}
\usepackage{graphicx}
\usepackage{enumerate}
\usepackage{amssymb}
\usepackage{authblk}
\usepackage{color}
\usepackage{amsthm}
\usepackage[framemethod=tikz]{mdframed}
\usepackage[noend]{algpseudocode}

\newcommand{\appref}[1]{\hyperref[#1]{{Appendix~\ref*{#1}}}}
\newcommand{\be}{\begin{eqnarray} \begin{aligned}}
\newcommand{\ee}{\end{aligned} \end{eqnarray} }
\newcommand{\benn}{\begin{eqnarray*} \begin{aligned}}
\newcommand{\eenn}{\end{aligned} \end{eqnarray*}}
\newcommand*{\textfrac}[2]{{{#1}/{#2}}}

\newcommand*{\cA}{\mathcal{A}} 
\newcommand*{\cB}{\mathcal{B}}
\newcommand*{\cC}{\mathcal{C}}

\newcommand*{\cF}{\mathcal{F}}

\newcommand*{\cH}{\mathcal{H}}

\newcommand{\CC}{\mathbb{C}}
\newcommand*{\cL}{\mathcal{L}}

\newcommand*{\cS}{\mathcal{S}}

\newcommand{\ZZ}{\mathbb{Z}}

\newcommand{\calL}{{\cal L }}
\newcommand{\calH}{{\cal H }}
\newcommand{\ra}{\rangle}
\newcommand{\la}{\langle}
\newcommand{\calF}{{\cal F }}

\newcommand*{\tr}{\mathop{\mathrm{tr}}\nolimits}

\newcommand{\bc}{\begin{center}}
\newcommand{\ec}{\end{center}}


\newtheorem{theorem}{Theorem}[section]
\newtheorem{lemma}[theorem]{Lemma}

\newtheorem{corollary}[theorem]{Corollary}



\usepackage{amsfonts}

\def\01{\{0,1\}}

\newcommand{\ketbra}[2]{|#1\rangle\langle#2|}


\newcommand{\multgate}{\operatorname{CX}}

\newcommand*{\sysA}{\mathsf{A}}
\newcommand*{\sysB}{\mathsf{B}}
\newcommand*{\sysC}{\mathsf{C}}
\newcommand*{\der}{\mathsf{d}}

\newcommand*{\idG}{\mathsf{e}} 
\begin{document}

\renewcommand\Affilfont{\fontsize{10}{14}\itshape}

\author[1]{Sergey Bravyi}
\author[2]{Isaac Kim}
\author[3]{Alexander Kliesch}
\author[3]{Robert Koenig\footnote{Corresponding author: robert.koenig@tum.de}}
\affil[1]{IBM Quantum, IBM T.J. Watson Research Center, Yorktown Heights, USA}
\affil[2]{Department of Computer Science, University of California, Davis, USA}
\affil[3]{Munich Center for Quantum Science and Technology \& Zentrum Mathematik, Technical University of Munich, Germany}
\date{}
\title{Adaptive constant-depth circuits for\\
manipulating non-abelian anyons}

\maketitle

\begin{abstract}
We consider Kitaev's quantum double model based on a finite group $G$
and describe quantum circuits for (a) preparation of the ground state,
(b) creation of anyon pairs separated by an arbitrary distance, and
(c) non-destructive topological charge measurement. We show that for any solvable group $G$
all above tasks can be realized  by constant-depth adaptive circuits
with geometrically local unitary gates and mid-circuit measurements.
Each gate may be chosen adaptively depending on previous measurement outcomes.
Constant-depth circuits are well suited for implementation on a noisy 
hardware since it may be possible to execute the entire circuit
within the qubit coherence time.
Thus our results could facilitate an experimental study of exotic phases of matter with a non-abelian 
particle statistics.
We also show that adaptiveness is essential for our circuit construction. Namely,
task (b) cannot be realized by non-adaptive constant-depth local circuits
for any non-abelian group $G$. This is in a sharp contrast with abelian anyons which can be
created and moved over an arbitrary distance by a depth-$1$ circuit
composed of generalized Pauli gates. 
\end{abstract}

\tableofcontents

\section{Introduction}

Kitaev's quantum double model~\cite{kitaev} is a paradigmatic example of an exactly solvable model exhibiting topological order. For a finite group~$G$, the construction provides a spin-lattice model whose  low-energy physics is described by the non-chiral topological quantum field theory associated with Drinfeld's quantum double~$D(G)$. As such, it provides an ideal testbed for the study of topological quantum computation, and, in particular, the differences between non-abelian and abelian anyons.

Its most well-known manifestation -- the toric~\cite{kitaev} or surface code~\cite{bravyi1998quantum} based on the abelian group~$G=\mathbb{Z}_2$ --  is currently among primary contenders when it comes to experimental realizations, see e.g.,~\cite{satzingeretal21}. The preeminence of this abelian model can be attributed, in part, to the fact that it falls into a well-studied family of quantum error-correcting codes: It is a CSS-stabilizer code with local generators. As such, it permits the application of a plethora well-studied techniques for fault-tolerant quantum computing, including, in particular, efficient syndrome extraction and decoding as well as magic state distillation. Futhermore, its constituent physical degrees of freedom are  qubits since $|\mathbb{Z}_2|=2$, making this model especially amenable to experimental efforts.

Despite the promise of the surface codes for topological quantum computation, models with non-abelian anyons may still offer distinct advantages. In particular, unlike with abelian anyons, universal computation can be achieved by purely topological means when the associated braid group representation is sufficiently rich. The required  operations are
\begin{enumerate}[(i)]
\item\label{it:statecreation}
  the creation/initialization of a ground (``vacuum'') state having no excitations
\item
  the braiding or exchange of two particles
\item
  the fusion of two particles: this amounts to bringing two anyons to the same site and performing a measurement of the joint topological charge
\item\label{it:specificparticleantiparticle}
  the creation of specific particle-antiparticle pairs.
\end{enumerate}
Mochon~\cite{Mochon04} has shown that for anyons described by~$D(G)$ with a solvable but not nilpotent group~$G$, these operations suffice to realize universal computation\footnote{Mochon additionally assumes  that a supply of calibrated electric/magnetic charge ancillas is available. This is implied by operation~\eqref{it:specificparticleantiparticle}, where in contrast to~\cite{Mochon04}, we assume that the specific pair  created is not random.}. This applies, in particular, to the group~$S_3$. The power of $D(S_3)$-anyons to realize universal computation has prompted the construction of explicit protocols implementing these operations, including experimental proposals for their realization with atoms trapped in an optical lattice, see~\cite{AguadoetalPRL08,Brennen2009}. Unlike the corresponding operations for the surface code, however, the protocols proposed in~\cite{AguadoetalPRL08,Brennen2009} require quantum circuits whose depth scales with the system size, respectively the spatial separation  between anyons. It is natural to ask if this increased complexity is in fact necessary when dealing with non-abelian anyons.

To see how non-abelian and abelian anyons may differ in their (circuit) complexity of their associated operations, consider the problem~\eqref{it:statecreation} of initial state preparation. There is no significant difference here when considering unitary  circuits: A circuit consisting of nearest-neighbor two-qubit unitary gates (a unitary local circuit)  preparing the ground state of the surface code from a product state has been proposed in~\cite{dennisetal02}. To prepare a ground state on an~$L\times L$-lattice, the circuit has depth of order~$O(L^2)$. An analogous result applies to the preparation of a ground state of any quantum double model, see~\cite{Brennen2009,aguado11}. An $O(L)$-depth circuit for the surface code 
was given in~\cite{Higgott2021optimallocalunitary}, and linear depth circuits for the surface code and certain Levin-Wen models were also obtained in~\cite{pollmann2021}.  In~\cite{BravyiHastingsVerstraete06}, it is shown that a circuit depth of order~$\Omega(L)$ is necessary for any local unitary circuit. This lower bound applies to any topologically ordered system, irrespective of whether or not the corresponding anyons are abelian.
With non-local two-qudit gates, the circuit depth can be lowered to $O(\log L)$~\cite{AguadoVidal08} for any quantum double model; this matches the lower bound of~\cite{aharonov2018quantum} for $G=\mathbb{Z}_2$.

A more efficient way of preparing states with~$D(\mathbb{Z}_2)$-topological order involves the use of measurements. Indeed,  as with any stabilizer code, a code state (i.e., a ground state in the case of the surface code) is obtained from an arbitrary initial state by 
\begin{enumerate}[(a)] 
\item\label{it:syndromeextraction}
  measuring a complete set of  commuting stabilizer generators, obtaining a syndrome~$s$
\item\label{it:correctionoperation}
  applying an associated unitary correction operation~$C(s)$ to the post-measurement state. 
\end{enumerate}
We note that such an approach could in principle also be taken for any quantum double model. In particular, the required measurements in step~\eqref{it:syndromeextraction} can also be chosen to be local  as in any commuting local projector code. The distinguishing feature of the surface code is the simplicity of step~\eqref{it:correctionoperation}: The required correction~$C(s)$ is a Pauli operator which can  be determined from~$s$ by an efficient classical decoding algorithm. In particular, application of~$C(s)$ is easily accomplished by a depth-$1$ unitary circuit consisting of single-qubit Pauli gates only. In terms of the excitations (anyons) of the surface code, this process can  be summarized as follows. The syndrome~$s$ reveals the locations of the excitations (violations of the stabilizer constraints). The applied correction~$C(s)$ (determined, e.g., by minimal matching as in~\cite{dennisetal02}) is a product of ribbon-operators where each ribbon connects two excitations. In the surface code, such ribbon operators are tensor products of Pauli-$X$, Pauli-$Y$ or Pauli-$Z$-operators along the ribbon, see~\cite{kitaev} or~\cite{dennisetal02}. The simplicity of this preparation scheme is a consequence of the simple form of these ribbon operators for the case~$G=\mathbb{Z}_2$.

For a general finite group~$G$, each anyon is labeled by an irreducible representation~$\rho$ of the Drinfeld double~$D(G)$ of~$G$. For each (open) ribbon~$\xi$ on the lattice, there are associated ribbon operators~$\{F_\xi^{\rho;\alpha}\}_\alpha$ which create a corresponding particle/antiparticle pair at the two endpoints of~$\xi$. Here~$\alpha$ controls degrees of freedom localized at these endpoints. We refer to Section~\ref{sec:quantumdoublemodel} for details. We note that
a ribbon operator~$F_\xi^{\rho;\alpha}$ can also be understood as realizing a process where a pair is created locally at one endpoint, and one of the anyons is moved to the other endpoint of the ribbon~$\xi$. In particular, for non-abelian models, such operations can have a non-trivial logical action on encoded quantum information
in the presence of other (preexisting) anyons because of non-trivial braiding relations. The idea that -- since the quantum double model is an error-correcting code -- such logical operations require a circuit depth which scales with the length of the ribbon is folklore, see e.g.,~\cite[Section~8]{haahpreskill} for a related argument. Here we give a rigorous proof of this fact for non-abelian quantum double models, see Theorem~\ref{thm:nogotheorem} below. It gives the following statement, where we use the term {\em extensive} to refer to a scaling linear in the system size (or, more precisely, the code distance as given, e.g., by the separation of holes in the surface). 
\begin{corollary}\label{cor:nogoresult}
  For any non-abelian group~$G$, there are ribbon operators~$F_\xi^{\rho;\alpha}$ whose implementation by a local unitary circuit requires an extensive circuit depth.
\end{corollary}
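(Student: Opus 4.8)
The plan is to derive Corollary~\ref{cor:nogoresult} directly from Theorem~\ref{thm:nogotheorem} by exhibiting one family of ribbon operators that falls under its hypotheses. Since $G$ is non-abelian it has a conjugacy class $C$ with $|C|\ge 2$; I would take $\rho$ to be the purely magnetic charge of $D(G)$ associated with $C$ (with trivial centralizer representation), which carries non-trivial braiding data — in particular quantum dimension $|C|>1$ — and which is available precisely because $G$ is non-abelian, so it should be a type to which Theorem~\ref{thm:nogotheorem} applies. I would then fix an open ribbon $\xi$ of length $\ell=\Theta(L)$ on the lattice (equivalently, one joining two holes of the surface at separation $\ell=\Theta(L)$) and consider the associated ribbon operators $\{F_\xi^{\rho;\alpha}\}_\alpha$.

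The reduction then runs as follows. By the construction recalled in Section~\ref{sec:quantumdoublemodel}, applying $F_\xi^{\rho;\alpha}$ to the ground state $|\psi_0\rangle$ produces, up to normalization, a state carrying a flux/antiflux pair of type $\rho$ at the two endpoints of $\xi$, hence separated by distance $\ell$. A local unitary circuit $V$ that implements $F_\xi^{\rho;\alpha}$ is in particular a \emph{non-adaptive} geometrically local circuit, and $V|\psi_0\rangle$ is precisely a state carrying such a widely separated non-abelian anyon pair; thus $V$ realizes task~(b) (creation of a widely separated anyon pair) at separation $\ell$ for a non-abelian charge. Theorem~\ref{thm:nogotheorem} then forces $\mathrm{depth}(V)=\Omega(\ell)=\Omega(L)$, which is extensive — the assertion of the corollary.

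Two points will need care on this reduction. First, a single $F_\xi^{\rho;\alpha}$ is in general not unitary, so I must pin down what ``implementation by a local unitary circuit'' should mean; the safe reading is a unitary agreeing with $F_\xi^{\rho;\alpha}$ on the ground space, if necessary together with a bounded ancilla register (for instance a pair-creation operation controlled on the $\rho$-sector), after which one observes that a shallow circuit for it still outputs the separated pair and hence still triggers Theorem~\ref{thm:nogotheorem}. Second, I must verify that the chosen $\rho$ genuinely satisfies the hypotheses of Theorem~\ref{thm:nogotheorem}, matching the anyonic data it uses (presumably the non-trivial monodromy or quantum dimension of the class-$C$ flux); this is also the precise place where the non-abelian case departs from the abelian one, in which the corresponding ribbon operators are tensor products of generalized Pauli gates and therefore depth-$1$. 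Of these, the first is the more delicate, but I expect it to stay routine; the genuine difficulty — that no constant-depth, indeed no $o(\ell)$-depth, geometrically local circuit can transport a non-abelian charge across distance $\ell$, presumably argued via a Lieb--Robinson lightcone together with the impossibility of localizing the braiding data of $\rho$ near the endpoints of $\xi$ — is exactly the content of Theorem~\ref{thm:nogotheorem}, which I am assuming here.
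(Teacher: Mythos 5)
Your reduction has the right overall shape---Corollary~\ref{cor:nogoresult} is indeed obtained by exhibiting a single pair $(C,\pi)$ to which Theorem~\ref{thm:nogotheorem} applies and taking a ribbon $\xi$ of extensive length---but your choice of anyon is precisely one the theorem does \emph{not} cover. You take a non-trivial conjugacy class $C$ (so $|C|\ge 2$) with the \emph{trivial} centralizer representation $\pi$. For that choice $\Gamma_\pi(r)=1$ for every $r\in E(C)$, so $|\Gamma_\pi(r)_{1,1}|=1$ identically and hypothesis~\eqref{eq:Kproperty} fails; the same happens for any one-dimensional $\pi$, in particular whenever $E(C)$ is abelian. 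For $G=S_3$ the centralizers of the two non-trivial classes are $\mathbb{Z}_3$ and $\mathbb{Z}_2$, so \emph{no} pair with $|C|\ge 2$ satisfies~\eqref{eq:Kproperty} at all. The quantum dimension $|C|>1$ is not the quantity the theorem sees: what it needs is a matrix element with $|\Gamma_\pi(r)_{1,1}|<1$, which by Lemma~\ref{lem:conditiondimension} is guaranteed once $d_\pi>1$. So the sentence ``it should be a type to which Theorem~\ref{thm:nogotheorem} applies'' is exactly where the argument breaks: as written you cannot invoke the theorem for your $\rho$, and whether purely magnetic ribbon operators require extensive depth is simply not established by the paper's result.

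The fix---and what the paper does---is to work in the electric rather than the magnetic sector: choose $C=\{\idG\}$, so that $E(C)=G$, and let $\pi$ be an irrep of $G$ with $d_\pi>1$. Such a $\pi$ exists precisely because $G$ is non-abelian (if every irrep were one-dimensional, the group algebra $\mathbb{C}[G]$, and hence $G$, would be abelian). Lemma~\ref{lem:conditiondimension} then yields~\eqref{eq:Kproperty}, and Theorem~\ref{thm:nogotheorem} applies directly to $F_\xi^{(\{\idG\},\pi)}=F_\xi^{(\{\idG\},\pi);((1,1),(1,1))}$ for an open ribbon $\xi$ of extensive length, which is Corollary~\ref{cor:nogomain} and hence Corollary~\ref{cor:nogoresult}. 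With that substitution your remaining worries largely disappear: the theorem is already phrased as a statement about any local unitary circuit mapping a ground state $\Psi$ to a state proportional to $F_\xi^{(C,\pi)}\Psi$, so no separate discussion of how to ``implement'' a non-unitary operator, of ancillas, or of Lieb--Robinson-type light cones is needed at the level of the corollary.
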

Corollary~\ref{cor:nogoresult} is a no-go result ruling out the possibility of
applying ribbon operators using constant-depth circuits in non-abelian anyon models. This is a novel operational separation between  non-abelian and abelian topological order.

It is tempting to think that the conclusion of  Corollary~\ref{cor:nogoresult} extends to the problem~\eqref{it:statecreation}, preventing initialization by constant-depth quantum circuits because of the difficulty of implementing the ribbon operators involved in the correction step~\eqref{it:correctionoperation}. In fact, step~\eqref{it:correctionoperation} involves the additional challenges of classical  decoding the corresponding code (i.e., finding the right product of ribbon operators to apply), a problem which is non-trivial~\cite{wottonetal14}.  Furthermore,  in a non-abelian model, a single application of a ribbon operator is typically not sufficient  to remove a pair of excitations since anyon pairs do not need to fuse to the vacuum. Nevertheless, we find that for certain non-abelian quantum double models (including the case~$G=S_3$), efficient initialization is possible by using an approach  different from~\eqref{it:syndromeextraction}--\eqref{it:correctionoperation}. In fact, all operations~\eqref{it:statecreation}--\eqref{it:specificparticleantiparticle} can be realized by what we call {\em constant-depth local adaptive circuits}.  This notion captures the operations involved in the procedures used for the surface code: We allow a constant number of the alternating layers of 
\begin{enumerate}[(a)]
\item
constant-depth quantum circuits  that may use auxiliary qubits and consist of local unitaries and single-qubit measurements and
\item efficient, possibly non-local classical computations (based on measurement results).
\end{enumerate}
Adaptivity refers to the fact that quantum operations may be classically controlled by measurement outcomes and computational results obtained in previous layers. In other words, all involved quantum operations are local and realized by constant-depth circuits; this is supplemented by efficient non-local classical processing. 

In more detail, our result applies to any solvable group~$G$.
Recall that a group~$G$ is solvable if it can be mapped to the trivial group by iteratively factoring out normal  abelian subgroups. We show the following:
\begin{theorem}(informal) Suppose $G$ is a solvable group.  Then there is a constant-depth local adaptive circuit for each of the following tasks:\label{thm:informalmain}
  \begin{enumerate}[(i)]
  \item
    Creation of a ground state from a product initial state.
  \item
     For any anyon label~$\rho$ and local action specified by~$\alpha$, application of the ribbon operators $F_\xi^{\rho;\alpha}$ associated with an open ribbon~$\xi$ of possibly extensive length.
      Here the circuit has support (i.e., acts non-trivially) only on qudits along the ribbon~$\xi$.
     
  \item
    Execution of the topological charge measurement of any region encircled by a closed ribbon~$\sigma$ (possibly of extensive length). This is given by a POVM~$\{K_\sigma^\rho\}_{\rho}$ whose outcomes are anyon labels~$\rho$, and whose POVM elements~$K_\sigma^\rho$  have support on the ribbon~$\sigma$, see Section~\ref{sec:quantumdoublemodel}. The circuit realizing this  measurement
    has support only on qudits along~$\sigma$.
        \end{enumerate}
\end{theorem}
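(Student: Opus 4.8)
The plan is to reduce each of the three tasks to a sequence of operations on abelian quantum double models, exploiting the derived series $G = G_0 \triangleright G_1 \triangleright \cdots \triangleright G_m = \{\idG\}$ with each quotient $G_i/G_{i+1}$ abelian. The base case is $G$ abelian: here ground-state preparation, ribbon operators, and charge measurements are all realized by depth-$1$ generalized Pauli circuits plus efficient classical processing, exactly as in the toric code — this is essentially the folklore reviewed in the introduction and I would state it as a lemma. For the inductive step, I would set up a correspondence between the $D(G)$-model and a "coarser" model built from the quotient $G/G_N$ (where $G_N$ is the last nontrivial term, so $G_N$ is abelian and central-free but normal) together with a "finer" layer of $D(G_N)$-type degrees of freedom on each edge. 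Concretely, writing each group element via a section of $G \to G/G_N$, the Hilbert space $\mathbb{C}[G]^{\otimes E}$ factors (non-canonically) as $\mathbb{C}[G/G_N]^{\otimes E} \otimes \mathbb{C}[G_N]^{\otimes E}$, and I would show the vertex and plaquette projectors, as well as the ribbon operators $F_\xi^{\rho;\alpha}$, can be organized so that the $G/G_N$-part is handled by the inductive hypothesis and the residual $G_N$-part is abelian and hence handled by the base case.

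For task (i), the strategy is: prepare the ground state of the $D(G/G_N)$-model on the lattice by the inductive hypothesis (constant depth, adaptive); then, conditioned on nothing, "inflate" each edge by tensoring in the $G_N$-register and projecting onto the $D(G_N)$-ground state along each plaquette — but because acting with $G_N$ on an edge is modulated by the already-prepared $G/G_N$-configuration, the correction operators needed to clean up the residual syndrome are, after conditioning on the measurement record, a product of abelian ($G_N$-valued) ribbon operators whose locations a classical computer determines by matching. Applying such a product is depth-$1$. For task (ii), I would decompose $F_\xi^{\rho;\alpha}$: the irrep $\rho$ of $D(G)$ restricts/induces from data on $D(G/G_N)$ and $D(G_N)$; moving an anyon along $\xi$ then amounts to (a) moving the "$G/G_N$-part" along $\xi$ via the inductive circuit, supported on $\xi$, followed by (b) a $G_N$-ribbon operator along $\xi$ whose group-valued content is conditioned classically on the outcomes of step (a). Crucially the support stays on $\xi$ at every stage. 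Task (iii), charge measurement around $\sigma$, is the "controlled" analogue: measure the coarse charge using the inductive-hypothesis measurement on $\sigma$, then, conditioned on that outcome, measure the residual abelian $G_N$-charge with a depth-$1$ circuit on $\sigma$; the pair of outcomes reassembles into a label $\rho$ of $D(G)$, and one checks the induced POVM is $\{K_\sigma^\rho\}$.

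Since only $O(m) = O(1)$ levels of the derived series are traversed, and each level contributes a constant number of adaptive layers, the total depth is constant. The main obstacle I anticipate is making the "semidirect-product-like" bookkeeping of the factorization $\mathbb{C}[G] \cong \mathbb{C}[G/G_N]\otimes\mathbb{C}[G_N]$ genuinely local and compatible with the ribbon/vertex/plaquette structure: $G_N$ is normal but the section $G/G_N \to G$ is not a homomorphism, so conjugation mixes the two factors, and one must verify that the mixing is "short-range" (affects only the edges meeting a given vertex/plaquette) so that the correction operators factor as products of bona fide ribbon operators rather than something with extensive support. A secondary subtlety is handling the non-trivial multiplicities $\alpha$ and the fact that in the non-abelian setting a single ribbon application need not fuse a pair to vacuum — but for the tasks as stated (applying a \emph{specified} $F_\xi^{\rho;\alpha}$, not removing arbitrary excitations) this is a matter of tracking the representation-theoretic data through the induction rather than a genuine obstruction. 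I expect the detailed verification of locality of the recursion to be where most of the technical work lies.
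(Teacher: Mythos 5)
There is a genuine gap, and it sits at the heart of tasks (ii) and (iii). A basic ribbon operator $F_\xi^{h,g}$ acts on the qudits along $\xi$ by multiplying the $j$-th ``direct'' qudit with the conjugated element $\hat{y}_{j-1}^{-1}h\hat{y}_{j-1}$, where $\hat{y}_j=y_1\cdots y_j$ is the partial product of the ``dual'' qudit values along the ribbon, taken in the \emph{full} group $G$. Even if $h$ lies in your abelian bottom layer $G_N$, the conjugator $\hat{y}_{j-1}$ ranges over all of $G$, so your ``residual abelian $G_N$-ribbon operator'' still requires knowing all the partial products $\hat y_1,\dots,\hat y_{L-1}$ in $G$. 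Your plan to supply this data ``conditioned classically on the outcomes of step (a)'' cannot work: the state is a superposition over ribbon configurations $(y_1,\dots,y_L)$, and measuring enough to determine the $\hat y_j$ classically would collapse that superposition and destroy the state. The partial products must be computed \emph{coherently}, in constant depth, along an extensively long ribbon. This is exactly the central technical engine of the paper: a Barrington-style recursion (Theorem~\ref{thm:mainbarrington}) reducing the monotone-product unitaries $V_{\sysC^L}$, $\multgate^{\Leftarrow}$, $CCU$ over a solvable $G$ to the corresponding unitaries over cyclic groups $\mathbb{Z}_d$, where they are Clifford and hence implementable in constant adaptive depth by stabilizer-state preparation plus gate teleportation. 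Without this ingredient (or a substitute for it), your induction has nothing to recurse on for tasks (ii) and (iii); note also that your proposed factorization of anyon labels into $D(G/G_N)$- and $D(G_N)$-data is not available in general, since conjugacy classes and centralizers of $G$ do not factor through the extension.

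Two secondary issues. First, your recursion for task (i) runs in the opposite direction from the paper's: you quotient by the abelian bottom $G_N$ and recurse on $G/G_N$, whereas the paper picks $H\trianglelefteq G$ with $G/H$ cyclic, prepares $\Psi(H)$ recursively, and lifts to $\Psi(G)$ using the clean decomposition $\eta'(e)=a^{\theta(e^+)}\eta(e)a^{-\theta(e^-)}$ (Lemma~\ref{lem:exactforms}), in which the quotient part is an honest exact $\mathbb{Z}_d$-form and the $H$-part is exact up to a conjugation twist that depends only on the tail vertex. Your direction forces you to carry the extension $2$-cocycle of the non-homomorphic section $G/G_N\to G$ through the vertex/plaquette/ribbon structure; you flag this as the main obstacle, and the paper's choice of recursion is precisely what makes it disappear. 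Second, the operators $F_\xi^{\rho;\alpha}$ are not unitary, and for non-abelian $G$ no unitary linear combination over the local labels exists (Appendix~\ref{app:unitarycombinations}); ``applying'' one therefore means a probabilistic implementation that lands on a random local label $({\bf u},{\bf v})$, followed by deterministic \emph{unitary} corrections localized at the two endpoints (Lemmas~\ref{lem:localmappingunitary} and~\ref{lem:probabilisticimplementability}, the latter using an Uhlmann argument to upgrade the non-unitary label-changing operators of~\cite{BombinDelgado} to unitaries). Your proposal treats $\alpha$ as bookkeeping, but this probabilistic-then-correct mechanism is an essential part of making task (ii) deterministic.
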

The operations considered in Theorem~\ref{thm:informalmain} can be used to  realize braiding/movement of anyons by (repeated) pair creation and topological charge measurement.

Let us conclude by mentioning prior work as well as some open problems. The power of measurements and adaptive operations for reducing the depth of quantum circuits  has been recognized in the seminal work by H\o{}yer and Spalek~\cite{hoyerspalek} -- some of our constructions are motivated by their fan-out gate. This work has lead to a discovery of constant-depth adaptive circuits realizing
any unitary in the Clifford group~\cite{gottesmanchuang99,jozsa2006introduction}, multiple control Toffoli gate and integer arithmetic circuits~\cite{hoyerspalek,takahashi2016collapse}, and Quantum Fourier Transform~\cite{hoyerspalek}.
These are important subroutines employed by numerous quantum algorithms. 
Moreover, it is known that the entire quantum part of Shor's factoring algorithm~\cite{shor1999polynomial} 
can be parallelized to a constant depth using adaptive circuits~\cite{browne2010computational}.
Measurements and  adaptive operations also play a central role in synthesizing quantum circuits over fault-tolerant 
gate sets using gate teleportation~\cite{gottesmanchuang99} and repeat-until-success techniques~\cite{paetznick2013repeat}.
In a recent breakthrough work
Liu and Gheorghiu~\cite{liu2021depth} established an efficiently verifiable quantum advantage
for certain classically hard interactive tasks that can be solved by constant depth adaptive quantum circuits.
The apparent power of low-depth adaptive circuits has led Jozsa to conjecture~\cite{jozsa2006introduction} that, in fact,
any polynomial depth quantum computation can be efficiently simulated by 
logarithmic depth adaptive quantum circuits~\cite{jozsa2006introduction}.
However, a more recent work~\cite{coudron2020computations} cast doubt on Jozsa's conjecture by demonstrating that 
the Welded Tree Problem~\cite{childs2003exponential} is provably hard for the logarithmic depth adaptive circuits
(as measured by query complexity) even though this problem admits an efficient quantum algorithm
based on quantum walks. Likewise, it is not known whether the depth of quantum algorithms for simulating 
unitary time evolution of many-body Hamiltonians 
can be significantly reduced  using measurements and adaptive operations.

In a recent work Piroli et al~\cite{cirac2021} examined constant-depth quantum circuits
assisted by LOCC (local operations and classical communication) and equivalence classes of many-body
quantum states convertible to each other by such circuits. It was demonstrated that certain
highly entangled  states exhibiting topological quantum order become trivial in the LOCC-assisted classification
framework. We note however that constant depth local adaptive circuits considered in the present work
are strictly weaker than LOCC-assisted circuits of~\cite{cirac2021} since we only allow $O(1)$ 
rounds of mid-circuits measurements whereas \cite{cirac2021} allows a linear number of rounds.
Closer to the topic of this paper, 
existing non-unitary protocols for preparing states with non-abelian topological order
include those  of~\cite{Brennen2009,aguado11} for creating quantum double states associated with~$S_3$, as well as the work~\cite{pollmann2021} which provides procedures for preparing ground states of the Levin-Wen model~\cite{levin2005string}.  In contrast to these procedures, which require extensive circuit depth,  
Verresen et al.~\cite{verresen2022efficiently} reported adaptive constant-depth circuits preparing the ground state of
the quantum double~$D(S_3)$ and $D(D_4)$, where $D_4$ is the dihedral group.  Subsequently, it was observed in~\cite{2112.01519} that this generalizes to any solvable group. Here we give explicit adaptive constant-depth circuits for ground state preparation for arbitrary solvable groups and additionally discuss the creation and manipulation of excited states.  We note that
constant-depth circuits also figure prominently in~\cite{zhulavasanibarkeshli}, where braiding with non-abelian anyons in the Levin-Wen model is achieved by constant-depth circuits by certain dynamic lattice deformations. Here we follow a different approach and ask for implementations of ``standard'' braiding operations etc.~without changing the underlying lattice.

Several immediate problems remain open. Perhaps the most intriguing one is  whether the operations realized here for a solvable group~$G$ may also be realized with local operations in constant adaptive depth in the case of a non-solvable (non-abelian) group. It is conceivable
that there are fundamental complexity-theoretic obstructions to this similar to those found in~\cite{grierschaeffer20}) related to Barrington's theorem~\cite{barrington}. More generally, one may ask for more general classes of topologically ordered systems where the considered operations (preparation, anyon creation, braiding, and topological charge measurements) can be realized in adaptive constant depth. For the Levin-Wen model, this  amounts to identifying relevant properties of the underlying tensor category.
A preliminary work shows that the ground state of the double semion model~\cite{levin2005string}
can be prepared in constant adaptive depth by measuring syndromes of suitable augmented
stabilizer generators composed from the elementary plaquette and vertex stabilizers.
Finally, it remains to be seen to what extent these procedures can be made robust to noise: Without suitable error correction mechanisms, the non-local operations  realized here will likely be highly susceptible to errors. Because of their
low complexity, they  may nevertheless be useful, e.g., for proof-of-principle demonstrations in the near term.

\subsubsection*{Outline}
In Section~\ref{sec:quantumdoublemodel}, we review the quantum double model.
In Section~\ref{sec:nogoresult}, we establish the no-go result for the unitary realization of ribbon operators in non-abelian anyon models.
We then discuss the design of constant-depth adaptive circuits for solvable groups~$G$. In Section~\ref{sec:statepreparation}, we construct such a circuit preparing a ground state of the quantum double model. In Section~\ref{sec:anyonicribbonimplementation}, we show how to implement anyonic ribbon operators. In Section~\ref{sec:implementingtopologicalchargemeasurements}  we discuss how to realize topological charge measurements.

\section{The quantum double model\label{sec:quantumdoublemodel}}
Here we give only a brief introduction to the quantum double model focusing on the concepts relevant to our work. We refer to~\cite{kitaev} as well as~\cite{BeigiShor,BombinDelgado} for more details.

\subsection{Definition of the quantum double model}
Let $G$~be a finite group with identity element denoted~$\idG\in G$. The quantum double model associated with~$G$ is defined by a planar graph or lattice~$\cL=(V,E)$ with a qudit of dimension~$|G|$ associated with every edge~$e\in E$. We use an orthonormal basis~$\{\ket{g}\}_{g\in G}$ for the Hilbert space~$\mathbb{C}^{|G|}$ of every qudit. We assume that every edge~$e\in E$ is oriented such that one endpoint of~$e$ is designated the head and the other endpoint is designated as the tail. We shall write~$e^+$ and $e^-$ for the head and the tail of~$e$. The definition of the quantum double Hamiltonian involves operators $\{A_s^g\}_{g\in G}$ and operators~$\{B^h_s\}_{h\in G}$ associated with every site~$s$ of the lattice. Here a site is defined as a pair $s=(v,p)$, where $p$~is a plaquette (face) of~$\cL$ and $v\in V$ is vertex adjacent to~$p$. The action of these operators is given
as in Fig.~\ref{fig:siteoperators} (with analogous definitions for non-square lattices). Different orientations of the edges can be accounted for by the following simple rule: Reversal of the direction of an arrow amounts to replacing the corresponding basis vector~$\ket{g}$ by $\ket{g^{-1}}$, i.e., applying inversion in the group.

 \begin{figure}
 \centering
\subfigure[The site-operator $B^h_s$]{\includegraphics[width=0.7\textwidth]{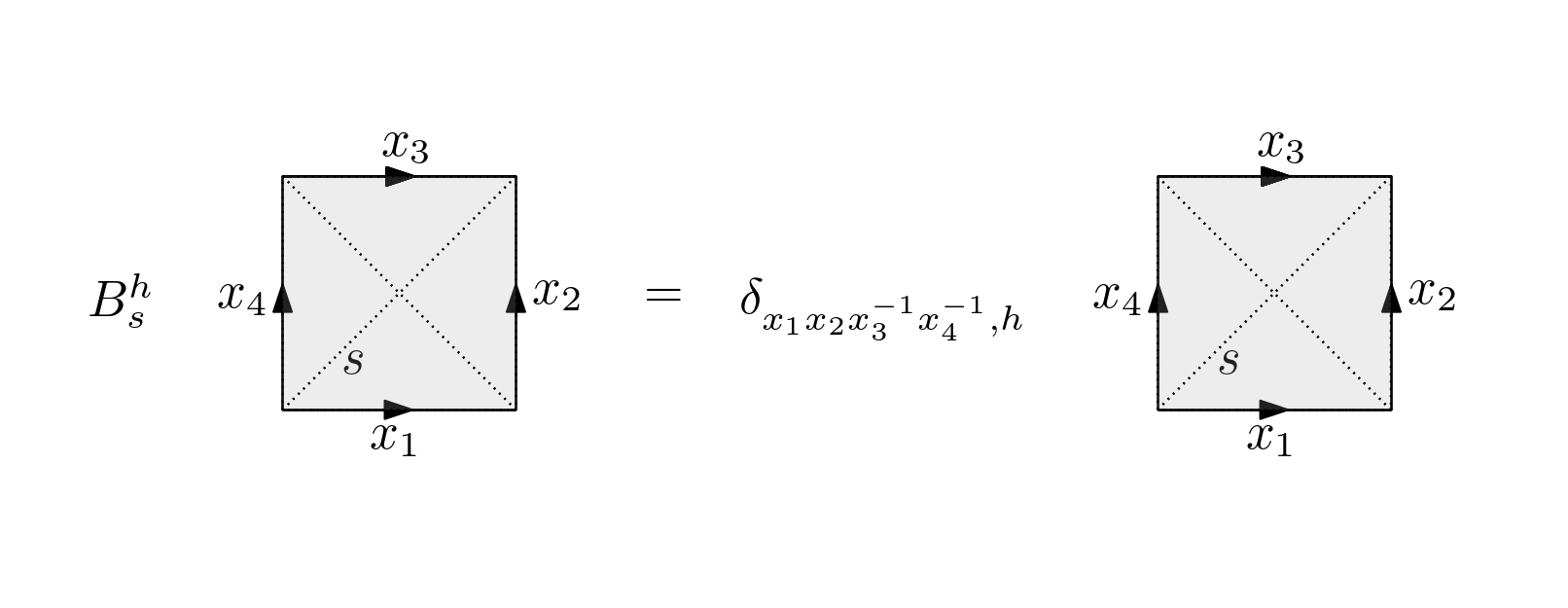}}\qquad\qquad
\subfigure[The site-operator $A^g_s$]{\includegraphics[width=0.7\textwidth]{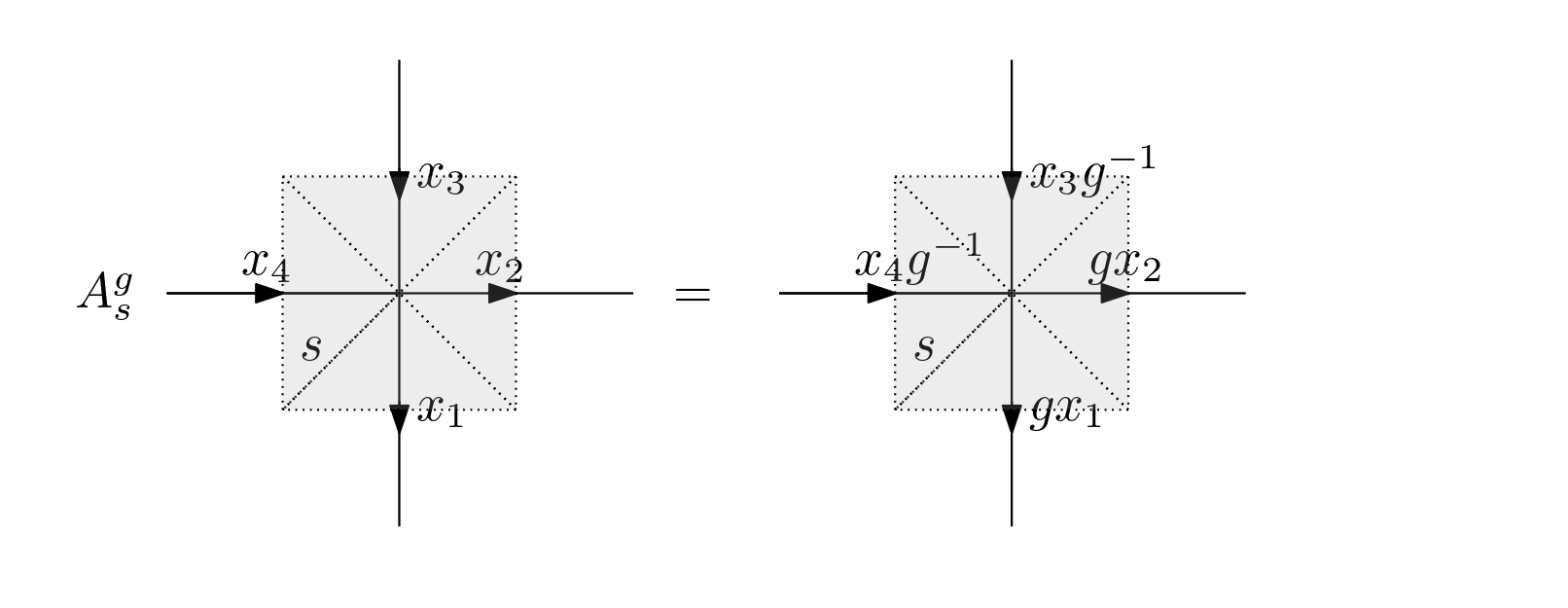}}
\caption{The action of the site-operators $B^h_s$ and $A^g_s$ on computational basis states (denoted~$x$). Here  $g,h\in G$ and $s$ is a site of the lattice.  
Both operators are ribbon operators associated with a closed ribbon (shaded) starting and ending at~$s$. 
\label{fig:siteoperators}}
\end{figure}

For a site $s=(v,p)$ define the site-operators
\begin{align}
  A_s&=\frac{1}{|G|}\sum_{g\in G}A^g_s\\
B_s &= B^\idG_s\ .
\end{align}
The operator $A_s$ only depends on the vertex~$v$ whereas $B_s$ only depends on the plaquette~$p$ of the site~$s$. As a consequence, we often write $A_v:=A_s$ and $B_p:=B_s$ for $s=(v,p)$, and call these the vertex- and plaquette-operators associated with~$v$ and $p$, respectively. The quantum double Hamiltonian is then given by
\begin{align}
H&=-\sum_{v\in V} A_v -\sum_{p\in P} B_p 
\end{align}
where $P$ is the set of plaquettes of~$\cL$. Since all vertex- and plaquette-operators commute, the ground space of~$H$ is the simultaneous eigenspace of these operators to eigenvalue~$1$.

\subsection{Basic ribbon operators}
The discussion of topological operations in the quantum double model necessitates the introduction of three types of ribbon operators. The first type of operator, which we refer to as a basic ribbon operator, is associated
with an open ribbon~$\xi$. Such a ribbon consists of a sequence of sites
 connecting a starting site $s_0=(v_0,p_0)$ to a (distinct) ending site~$s_1=(v_1,p_1)$ by a  contiguous sequence of primary and/or dual triangles. Following the terminology of~\cite{BombinDelgado}, a
 primary triangle is defined by a pair $(e,p)$ where $e\in E$ is an edge and $p\in P$ an adjacent plaquette. In contrast, a dual triangle is defined by a pair $(e,e_*)$ where $e\in E$ is an edge and $e_*\in \{e^+,e^-\}$ is its head or tail. See Fig.~\ref{fig:ribbonbasic} for an illustration of these notions.
 \begin{figure}
 \centering
  \includegraphics[width=0.2\textwidth]{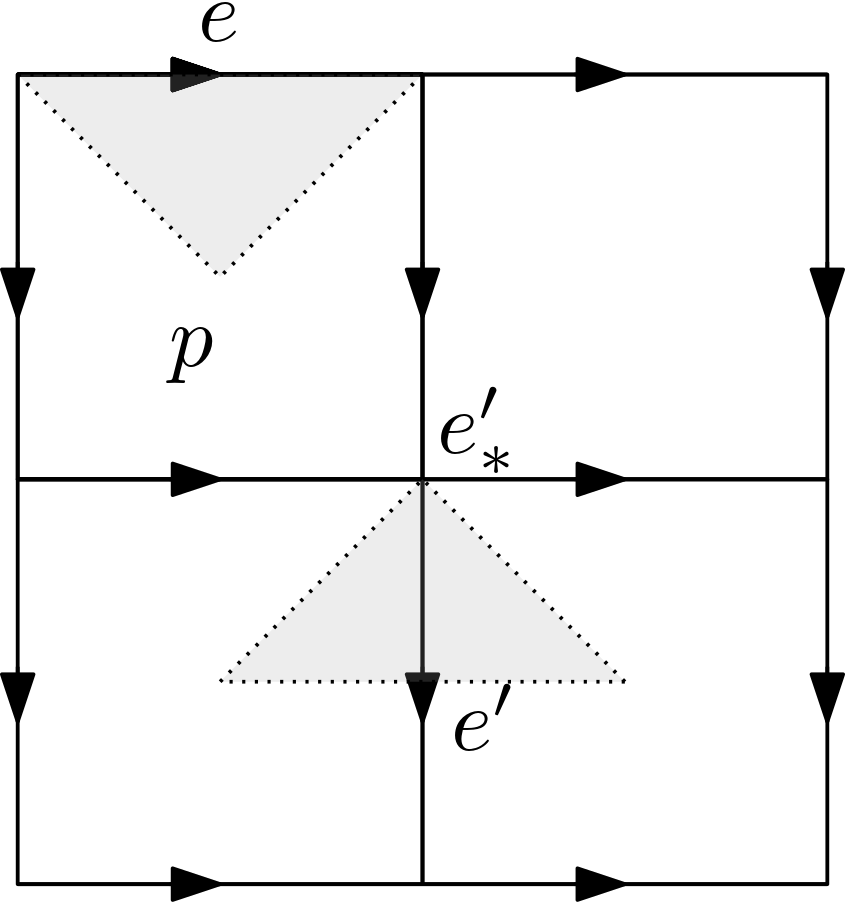}
 \caption{A primary triangle is specified by an edge~$e$ and an adjacent plaquette~$p$, and a dual triangle  by an edge~$e'$ and one of its vertices~$e_*$.\label{fig:ribbonbasic}}
 \end{figure}

 For each pair $(g,h)\in G^2$, there is a basic ribbon operator~$F^{h,g}_\xi$. Its action can be defined elegantly by defining the action for a ribbon consisting of a single (primary or dual) triangle, and subsequently  using the ``co-multiplication'' rule
 \begin{align}
   F_{\xi}^{h,g}&= \sum_{k\in G}F_{\xi_1}^{hk^{-1},g}F_{\xi_2}^{k,g}
 \end{align}
 (see \cite[Eq.~(26)]{kitaev}) to find the ribbon operator $F_\xi^{h,g}$ associated with a long ribbon $\xi=\xi_1\xi_2$. An example of the action of $F_\xi^{h,g}$ is given in Fig.~\ref{fig:ribbonoperatoraction}.
 
 \begin{figure}
  \includegraphics[width=0.9\textwidth]{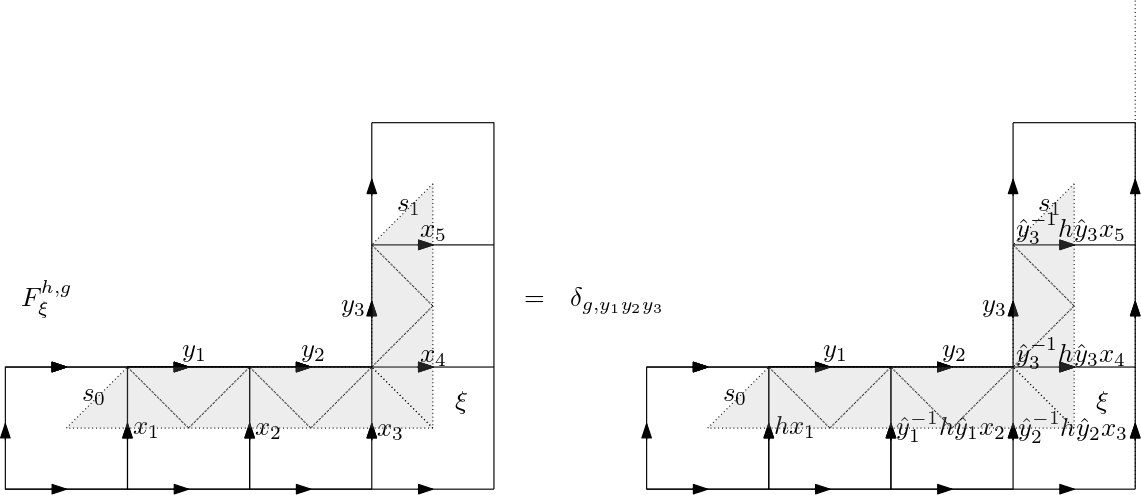}
 \caption{The ribbon operator $F_\xi^{h,g}$ for a ribbon~$\xi$ with starting site~$s_0$ and ending site~$s_1$. The figure illustrates  the action of the operator on a computational basis state
 specified by $(x,y)$.  Here $\hat{y}_j=y_1\cdots y_j$.\label{fig:ribbonoperatoraction}}
 \end{figure}

 The linear span of the operators $\{F^{h,g}_\xi\}_{(h,g)}$, when applied to a ground state~$\Psi$, gives the two-particle subspace~$\cH(s_0,s_1)$ of states that have excitations  at the end-points $s_0,s_1$ only. These are states that are simultaneous $+1$~eigenstates of all site-operators $A_s$ and $B_s$ with $s\not\in\{s_0,s_1\}$.

 For later use, we note the following properties of these basic ribbon operators. Let $\xi$ be a ribbon starting at site~$s_0$ and ending at site~$s_1$.
Then we have the relations
\begin{align}
F_\xi^{h,g}F_\xi^{h',g'}&=\delta_{g,g'}F_{\xi}^{hh',g}\label{eq:stringmultiplication}\\
  (F_\xi^{h,g})^\dagger &=F_\xi^{h^{-1},g}\ \label{eq:adjointstringops}
\end{align}
for all $g,h,g',h'\in G$. For any ground state~$\Psi$, we also have the identity (see e.g.,~\cite[Lemma~3.1]{BeigiShor})
\begin{align} 
\bra{\Psi} F_\xi^{h,g}\ket{\Psi}&=\frac{1}{|G|}\delta_{h,e}\ .\label{eq:groundstateexpect}
\end{align}
This is a consequence of certain commutation relations between the site-operators at the start- and end-points of~$\xi$ with the ribbon operators. At the startpoint~$s_0$ of $\xi$, we have the relations
\begin{align}
B_{s_{0}}^{k} F_{\xi}^{h, g}&=F_{\xi}^{h, g} B_{s_{0}}^{k h}\\
A_{s_{0}}^{k} F_{\xi}^{h, g}&=F_{\xi}^{k h k^{-1}, k g} A_{s_{0}}^{k} \label{eq:startpointcommutex} 
\end{align}
for all $g,h,k\in G$.  Similarly, at the endpoint~$s_1$ of~$\xi$, we have  the relations 
\begin{align}
  B_{s_1}^k F^{h,g}_\xi&= F_\xi^{h,g} B_{s_1}^{g^{-1}h^{-1}gk}\  \label{eq:commutendpoint}\\
A^k_{s_1} F^{h,g}_\xi&=F_\xi^{h,gk^{-1}}A^k_{s_1}\ \label{eq:seccommute}
\end{align}
for all $g,h,k\in G$.


\subsection{Elementary excitations (anyons) and anyonic ribbon operators}
 States of the form $F^{h,g}_\xi\Psi$ with $(h,g)\in G^2$
 form a basis of the two-particle space~$\cH(s_0,s_1)$. The action of the quantum double~$D(G)$, however, is better expressed with respect to a different basis whose elements can be interpreted as proper two-anyon states (and not superpositions thereof). This requires the introduction of a different set of ribbon operators~$\{F^{\rho;\alpha}_\xi\}_{(\rho;\alpha)}$ associated with an open ribbon~$\xi$. We call these {\em anyonic ribbon operators}. To introduce them, we need the following concepts. 

The (Drinfeld) quantum double~$D(G)$ of~$G$ is a Hopf algebra with basis~$\{D_{(h,g)}\}_{(g,h)\in G}$. It is also a $C^*$-algebra defined by the relations
\begin{align}
  D_{(h,g)}D_{(h',g')}&=\delta_{h,gh'g^{-1}}D_{h,gg'}\qquad\textrm{ for all }\qquad g,h,g',h'\in G\\
  D_{(h,g)}^*&= D_{g^{-1}hg,g^{-1}}\qquad\qquad \textrm{ for all }\qquad g,h\in G\ .
\end{align}
The site-operators $\{A^g_s\}_g$ and $\{B^h_s\}_h$ associated with a site~$s$
form a $*$-representation of this algebra via the map $D_{(h,g)}\mapsto A^g_sB^h_s$. For later reference, we note, in particular, the relations
\begin{align}
A^k_{s}A^{k'}_s &=A^{kk'}_s \label{eq:multiplicationas}\qquad \textrm{ and }\\
(A^k_s)^\dagger &=A^{k^{-1}}_s\label{eq:adjointAk}\qquad\textrm{ for every }\qquad k\in G\\
A^{\idG}_s&=I\ .\label{eq:Aesidentity}
  \end{align}

The elementary excitations (anyons) of the quantum double model are labeled by irreducible representations  (irreps) of~$D(G)$. Such an irrep can be specified by a pair $\rho=(C,\pi)$, where
\begin{enumerate}[(i)]
	\item
	$C$ is a conjugacy class of~$G$. We pick an arbitrary element $r_C\in G$ belonging of this conjugacy class such that $C=\{gr_Cg^{-1}\ |\ g\in G\}$, and enumerate the elements of~$C$ as $C=\{c_j\}_{j=1}^{|C|}$. The {\em centralizer} of $C$ is the group
	\begin{align}
		E(C)&=\{g\in G\ |\ gr_C=r_Cg\} .
	\end{align}
	We pick a set $P(C)=\{p_j\}_{j=1}^{|C|}$ of representatives of $G/E(C)$ such that $c_j=p_j r_C p_j^{-1}$.  We will choose these representatives such that $p_1=\idG$, i.e., $c_1=r_C$. 
	Every element $g\in G$ can be written in a unique way as $g=p_j n$ for some $j\in \{1,\ldots,|C|\}$ and $n\in E(C)$.
	\item
	$\pi$ is an irreducible (unitary) representation of~$E(C)$.
	For $g\in E(C)$, we will denote by $\Gamma_\pi(g)$ the corresponding unitary matrix.
\end{enumerate}
We refer to~\cite{bakalovkirillov01} for a specification of the action of~$D_{h,g}$ in an irrep~$\rho=(C,\pi)$.

We are now in position to give the definition of the anyonic ribbon operators~$\{F_\xi^{\rho;\alpha}\}_{(\rho;\alpha)}$ associated with an open ribbon~$\xi$. These were introduced in~\cite{BombinDelgado}; we follow the presentation/notation of~\cite{CongChenWang}. Every such ribbon operator is associated with a tuple~$(C,\pi,{\bf u},{\bf v})$; we write $F_\xi^{(C,\pi);({\bf u},{\bf v})}$ for the corresponding operator. Here $(C,\pi)=:\rho$ is an anyon label as discussed, i.e., it defines an irrep of~$D(G)$.  It remains to specify the labels $({\bf u},{\bf v})=:\alpha$. The definition is as follows:
\begin{enumerate}[(i)]
  \setcounter{enumi}{2}
	\item
	${\bf u}=(i,j)$, ${\bf v}=(i',j')$ are such that $i,i'\in \{1,\ldots,|C|\}$ index elements of the conjugacy class~$C$ and the pair~$(j,j')\in \{1,\ldots,d_\pi\}^{2}$ label matrix entries of the irrep~$\pi$.  Here and below we denote by~$d_\pi$ the dimension of the irrep~$\pi$.
\end{enumerate}
The ribbon operator associated with~$(C,\pi,{\bf u},{\bf v})$ is then given by
\begin{align}
	F_\xi^{(C,\pi);((i,j),(i',j'))}&:=\frac{d_\pi}{|E(C)|}\sum_{k\in E(C)}(\Gamma_{\pi}^{-1}(k))_{j,j'} F_\xi^{(c_i^{-1},p_i k p_{i'}^{-1})}\ .\label{eq:generalribbonoperatoranyonic}
\end{align}

The pair $(C,\pi)$ encodes the global degrees of freedom (in particular, these labels cannot be changed by local operators at the ends of a ribbon). In contrast, $({\bf u}, {\bf v})$ describes local degrees of freedom: these can be changed by applying the local operators $D^{(h,g)}$ (representing the quantum double) at the endpoints. Indeed, as argued in~\cite[Eq.~(23)]{BombinDelgado}, there is an operator~$d^{{\bf u'}}_{{\bf u''}}$ that is local at~$s_0$ (the start point of~$\xi$) and an operator~$d^{{\bf v'}}_{{\bf v''}}$ that is local at~$s_1$ (the end point of~$\xi$) such that
\begin{align}
  d^{{\bf u'}}_{{\bf u''}}d^{{\bf v'}}_{{\bf v''}}
  \ket{(C,\pi);({\bf u},{\bf v})}&=
  \delta_{{\bf u},{\bf u''}}\delta_{{\bf v},{\bf v''}}
  \ket{(C,\pi);({\bf u'},{\bf v'})}\ ,\label{eq:conversionlocaldegrees}
\end{align}
where we defined
\begin{align}
  \ket{(C,\pi);({\bf u},{\bf v})}&:=F_\xi^{(C,\pi);({\bf u},{\bf v})}\ket{\Psi}\label{eq:cpiuvstate}
\end{align}
for a fixed ground state~$\Psi$.

We note that the operators~\eqref{eq:generalribbonoperatoranyonic} are generally not unitary. We show the following in Appendix~\ref{app:unitarycombinations}: For an abelian group~$G$, we can fix any pair~$(C,\pi)$ and obtain a unitary linear combination
of the ribbon operators $\{F_\xi^{(C,\pi);({\bf u},{\bf v})}\}_{{\bf u},{\bf v}}$ corresponding to different local degrees of freedom. In contrast, for~$G=S_3$, such a unitary linear combination does not exist. This is important for how we formalize the task of implementing an operator~$F_\xi^{(C,\pi);({\bf u},{\bf v})}$, see e.g., Theorem~\ref{thm:nogotheorem} below.

\subsection{Topological charge projections\label{sec:topologicalchargeprojection}} 
Projections onto states of different topological charge have been constructed in~\cite[Appendix B.9]{BombinDelgado}.  They are defined for a closed ribbon~$\sigma$ (i.e., one for which start- and endpoint coincide) and are introduced in two steps. 

For a group $H$, let $H^{cj}$  denote the set of conjugacy classes of~$H$.
Let $C\in G^{cj}$ be a conjugacy class of~$G$. We pick an element $D\in E(C)^{cj}$, i.e., a conjugacy class of the centralizer~$E(C)$ of a fixed representative $r_C\in C$ as defined above. For each pair $(C,D)$ with $C\in G^{cj}$ and $D\in E(C)^{cj}$, one introduces the operator
\begin{align}
K_{\sigma}^{(D,C)}:=\sum_{j=1}^{|C|} \sum_{d\in D} F_\sigma^{p_jd p_j^{-1},p_jr_Cp_j^{-1}}\ \label{eq:ksigmadc}
\end{align}
where again $\{p_j\}_{j=1}^{|C|}$ is a fixed set of representatives of~$G/E(C)$ as before.

The operators~\eqref{eq:ksigmadc} are not projections. Instead, another set of operators is defined as follows. Let $(C,\pi)$ be an anyon label (i.e., $C\in G^{cj}$ and $\pi$ is an irrep of~$E(C)$). Then we define~\footnote{In contrast to~\cite{BombinDelgado}, it seems the correct definition here as stated, i.e.,  without a factor~$d_\pi$, see Appendix~\ref{app:checkingorthogonality}.}
  \begin{align}
K_\sigma^{(C,\pi)}&:=\frac{1}{|E(C)|}\sum_{D\in E(C)^{cj}} \overline{\chi_\pi(D)} K_\sigma^{(D,C)}\label{eq:ksigmarc}
\end{align}
Here $\chi_\pi(D)=\chi_\pi(d)$ is the character of~$\pi$ for any representative~$d\in D$.  The operators $\{K_\sigma^{(C,\pi)}\}_{(C,\pi)}$ are pairwise orthogonal projections that sum to the identity. For completeness, we include a proof of this claim in Appendix~\ref{app:checkingorthogonality}. They therefore constitute a POVM whose measurement outcome~$(C,\pi)$ reveals the topological charge (anyon label) of the region enclosed by the ribbon~$\sigma$.

\section{A circuit depth lower bound for anyonic ribbon operators \label{sec:nogoresult}}
Let us use the shorthand
\begin{align}
 F_\xi^{(C,\pi)}:=F_\xi^{(C,\pi),((1,1),(1,1))}\ \label{eq:fxicpi}
\end{align}
such that
  \begin{align}
    F_\xi^{(C,\pi)}&=\frac{d_\pi}{|E(C)|}\sum_{\ell\in E(C)}
(\Gamma^{-1}_\pi(\ell))_{1,1}    F_\xi^{r_C^{-1},\ell}\ .\label{eq:fxicpidef}
    \end{align} 
  We show that some of these operators are  not implementable by a non-adaptive constant-depth unitary when the group is non-abelian, see Corollary~\ref{cor:nogomain} below. It is a simple consequence of the  following result:
  \begin{theorem}\label{thm:nogotheorem}
    Suppose the pair $(C,\pi)$ satisfies
    \begin{align} 
\exists r\in E(C)\qquad \textrm{ such that }\qquad
|\Gamma_\pi(r)_{1,1}|<1\  .\label{eq:Kproperty}
    \end{align}
    Let $U$ be a local unitary circuit implementing $F_\xi^{(C,\pi)}$ in the sense that it maps a ground state~$\Psi$  to a state proportional to~$F_{\xi}^{(C,\pi)}\Psi$. Then~$U$ has an extensive circuit depth.    
    \end{theorem}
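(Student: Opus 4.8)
The plan is to argue by contradiction: suppose $U$ is a local unitary circuit of depth $D$ implementing $F_\xi^{(C,\pi)}$ on the ground state $\Psi$, i.e. $U\Psi = \lambda F_\xi^{(C,\pi)}\Psi$ for some scalar $\lambda$. Since $U$ has depth $D$ and the gates are geometrically local, $U$ has a ``lightcone'': conjugation by $U$ maps any operator supported on a region $R$ to an operator supported on the $O(D)$-neighborhood of $R$. The key observation is that the ribbon $\xi$ connects two sites $s_0, s_1$ that are separated by an extensive distance $\ell$, and there exist site-operators localized near $s_1$ (far from $s_0$) whose commutation relations with $F_\xi^{(C,\pi)}$ are nontrivial — this is exactly what equations~\eqref{eq:commutendpoint}--\eqref{eq:seccommute} encode, and what condition~\eqref{eq:Kproperty} guarantees is ``visible.'' If $D$ is not extensive, the lightcone of a neighborhood of $s_1$ under $U$ stays away from $s_0$, and we will derive a contradiction by comparing expectation values.

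**Key steps, in order.** First I would set up the contradiction hypothesis and the lightcone bound: if $\depth{U} = o(\ell)$, then for any operator $O$ supported in a ball $B$ around $s_1$ of radius $O(1)$, the operator $U^\dagger O U$ is supported in a ball of radius $o(\ell)$ around $s_1$, in particular disjoint from $s_0$ and from most of $\xi$. Second, I would choose a good ``probe'' operator at $s_1$. The relation~\eqref{eq:seccommute}, $A^k_{s_1} F^{h,g}_\xi = F_\xi^{h,gk^{-1}} A^k_{s_1}$, together with the definition~\eqref{eq:fxicpidef} of $F_\xi^{(C,\pi)}$ as a sum over $\ell\in E(C)$ of $F_\xi^{r_C^{-1},\ell}$ weighted by $(\Gamma^{-1}_\pi(\ell))_{1,1}$, shows that conjugating $F_\xi^{(C,\pi)}$ by $A^k_{s_1}$ re-weights the sum (shifting $\ell \mapsto \ell k$ inside the group-sum argument), and this is a genuinely different operator precisely when $|\Gamma_\pi(r)_{1,1}| < 1$ for some $r\in E(C)$ — i.e. when the $(1,1)$ matrix element is not a pure phase, so that the reweighting is not just an overall scalar. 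Third, I would compute a two-sided quantity like $\bra{\Psi} (F_\xi^{(C,\pi)})^\dagger A^k_{s_1} F_\xi^{(C,\pi)} \ket{\Psi}$ using the ribbon algebra relations~\eqref{eq:stringmultiplication}, \eqref{eq:adjointstringops}, \eqref{eq:groundstateexpect}, and separately using the hypothesis $U\Psi \propto F_\xi^{(C,\pi)}\Psi$ to rewrite it as $\bra{\Psi} U^\dagger A^k_{s_1} U \ket{\Psi}$, which by the lightcone bound equals $\bra{\Psi} \tilde A \ket{\Psi}$ for some $\tilde A = U^\dagger A^k_{s_1} U$ localized near $s_1$. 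Fourth — and this is where the global structure enters — I would use that $\Psi$ is a ground state, hence locally indistinguishable in the sense that a local operator far from $s_0$ has an expectation value determined by the ground space alone (topological order / local indistinguishability), and combine this with a second computation in which $s_0$ also carries relevant information, to force the reweighted operator from step two to coincide with $F_\xi^{(C,\pi)}$ itself — contradiction with condition~\eqref{eq:Kproperty}.

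**Main obstacle.** The delicate point is step four: turning ``the lightcone of $A^k_{s_1}$ doesn't reach $s_0$'' into a rigorous contradiction. The cleanest route is probably to exhibit two states $\Psi$ and $\Psi'$ in (an extension of) the ground space — or the ground state and a state with a compensating anyon pair created far from $s_1$ — that are indistinguishable by any operator localized near $s_1$ (by topological order, since the distinguishing ribbon is long), yet for which $F_\xi^{(C,\pi)}$ must act differently, so that a single circuit $U$ cannot implement it on both. Making the choice of $\Psi, \Psi'$ precise and verifying the indistinguishability radius exceeds the lightcone radius $o(\ell)$ is the technical heart; one must be careful that the ribbon operator's nontrivial action is detectable only through an extensive-length probe, which is precisely the content of~\eqref{eq:commutendpoint}--\eqref{eq:seccommute} and the non-unitarity discussion (the fact that no unitary combination of the $F_\xi^{(C,\pi);(\mathbf u,\mathbf v)}$ exists for non-abelian $G$, proved in Appendix~\ref{app:unitarycombinations}, is morally why condition~\eqref{eq:Kproperty} is the right hypothesis). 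I expect the braiding/commutation bookkeeping to be routine once the right pair of reference states is identified; identifying that pair is the crux.
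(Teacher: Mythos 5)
Your first half matches the paper's proof almost exactly: the probe is the same vertex operator $A^k_{s_1}$, and the computation of $\bra{\Psi}(F_\xi^{(C,\pi)})^\dagger A^k_{s_1}F_\xi^{(C,\pi)}\ket{\Psi}=\frac{d_\pi}{|E(C)||G|}\Gamma_\pi(k^{-1})_{1,1}$ via \eqref{eq:seccommute}, \eqref{eq:adjointstringops}, \eqref{eq:stringmultiplication}, \eqref{eq:groundstateexpect} and Schur orthogonality is precisely the paper's Lemma~\ref{lem:KOlemma}. Together with the normalization identities this gives two unit vectors $O_\xi\Psi$ and $A^k_{s_1}O_\xi\Psi$ with overlap of modulus strictly less than $1$ under \eqref{eq:Kproperty}, and pulling back through a shallow $U$ gives $\Psi$ and $(U^\dagger A^k_{s_1}U)\Psi$ with the same overlap, where $U^\dagger A^k_{s_1}U$ is local. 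Up to here you are on the paper's track.

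The gap is in your step four, which is exactly where the contradiction has to come from, and the route you sketch there does not close it. The paper's mechanism is: $A^k_{s_1}$ is a \emph{closed} ribbon operator (a ribbon $\tau$ encircling the single vertex at $s_1$), and its action on $O_\xi\Psi$ is invariant under deformations $\tau\to\tau'$ because $O_\xi$ commutes with the local ground-state projector $\Pi_{\cS(\tau,\tau')}$ (Eq.~\eqref{eq:KtauPiS} and \eqref{eq:deformationidentity}). This deformability is what lets one verify that $(U^\dagger A^k_{s_1}U)\Psi$ satisfies \emph{every} stabilizer constraint: stabilizers far from $\tau$ commute with the localized operator by the lightcone bound, and for stabilizers intersecting its support one first replaces $\tau$ by a deformed $\tau'$ clear of that support. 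Hence $(U^\dagger A^k_{s_1}U)\Psi$ is again a ground state, and uniqueness of the ground state (on the sphere, or correctability of the region in the degenerate case) forces it to equal $\Psi$ up to phase, contradicting $|\Gamma_\pi(k^{-1})_{1,1}|<1$. You never identify this deformability of the probe, and without it you cannot handle the stabilizers inside the lightcone of $A^k_{s_1}$. Moreover, the alternative you propose for the crux --- two states $\Psi,\Psi'$ indistinguishable near $s_1$ on which $F_\xi^{(C,\pi)}$ ``must act differently'' so that one circuit cannot serve both --- does not fit the theorem's hypothesis, which only requires $U$ to implement $F_\xi^{(C,\pi)}$ on a \emph{single} ground state; nothing constrains the action of $U$ on a second reference state. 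So while your setup and Lemma-3.3-style computation are correct, the argument as proposed does not reach a contradiction.
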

  We note that property~\eqref{eq:Kproperty} cannot be satisfied if $E(C)$ is abelian. Indeed, if $E(C)$ is abelian, then any irreducible representation~$\pi$ of $E(C)$ is $1$-dimensional. In particular, a unitary irreducible representation~$\pi$ satisfies $|\Gamma_\pi(r)|=1$ for all $r\in E(C)$. 
  
 A sufficient condition for property~\eqref{eq:Kproperty} to hold is the following.
  \begin{lemma}\label{lem:conditiondimension}
  Suppose $d_\pi>1$. Then~\eqref{eq:Kproperty} is satisfied. 
  \end{lemma}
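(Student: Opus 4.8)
The goal is to show that if $d_\pi > 1$ then there exists $r \in E(C)$ with $|\Gamma_\pi(r)_{1,1}| < 1$, i.e., the top-left matrix entry of the unitary representative is strictly inside the unit disk. The plan is to argue by contradiction: suppose instead that $|\Gamma_\pi(g)_{1,1}| = 1$ for \emph{every} $g \in E(C)$. First I would use the fact that each $\Gamma_\pi(g)$ is a unitary matrix, so all its rows and columns are unit vectors; in particular the first column $\big(\Gamma_\pi(g)_{1,1}, \Gamma_\pi(g)_{2,1}, \ldots, \Gamma_\pi(g)_{d_\pi,1}\big)^\top$ has norm $1$. If the first entry already has modulus $1$, then all other entries $\Gamma_\pi(g)_{k,1}$ for $k \neq 1$ must vanish. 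The same reasoning applied to the first row (a unit vector since $\Gamma_\pi(g)$ is unitary) forces $\Gamma_\pi(g)_{1,k} = 0$ for all $k \neq 1$.

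Thus under the contradiction hypothesis, every $\Gamma_\pi(g)$ is block-diagonal of the form $\begin{pmatrix} \lambda_g & 0 \\ 0 & M_g \end{pmatrix}$ where $\lambda_g \in \mathbb{C}$ with $|\lambda_g| = 1$ and $M_g$ is a $(d_\pi - 1)\times(d_\pi-1)$ unitary. The key step is then to observe that $g \mapsto \lambda_g$ (equivalently, the projection onto the first coordinate) is a one-dimensional subrepresentation of $\pi$: since matrix multiplication of block-diagonal matrices respects the blocks, the span of the first basis vector $e_1$ is an invariant subspace under $\{\Gamma_\pi(g)\}_{g\in E(C)}$. But $\pi$ is irreducible and has dimension $d_\pi > 1$, so it admits no nontrivial proper invariant subspace — contradiction. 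Hence the supposition fails and some $r \in E(C)$ must satisfy $|\Gamma_\pi(r)_{1,1}| < 1$.

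I do not expect any serious obstacle here; the argument is elementary linear algebra plus the definition of irreducibility. The only mild subtlety to double-check is that we genuinely need $e_1$ to span an \emph{invariant} subspace and not merely that $\Gamma_\pi(g) e_1 \parallel e_1$ for each fixed $g$ — but block-diagonality gives invariance of $\mathrm{span}(e_1)$ simultaneously for all $g$, which is exactly what irreducibility rules out. One should also note that the basis in which $\Gamma_\pi$ is expressed is fixed in advance (the $\{\ket{j}\}$ labeling matrix entries of $\pi$), so ``the first basis vector'' is well-defined; the lemma as stated is about this particular realization, which is all that is needed for property~\eqref{eq:Kproperty} and Theorem~\ref{thm:nogotheorem}.
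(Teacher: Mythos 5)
Your proof is correct and follows exactly the paper's argument: assume $|\Gamma_\pi(g)_{1,1}|=1$ for all $g$, use unitarity to conclude that $\mathrm{span}(e_1)$ is invariant, and contradict irreducibility. You merely spell out the unitarity step (vanishing of the rest of the first row and column) that the paper leaves implicit.
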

  \begin{proof}
  Suppose for the sake of contradiction that 
  \begin{align}
  |\Gamma_\pi(r)_{1,1}|=1\qquad\textrm{ for all }\qquad r\in E(C)\ .
  \end{align}
  Then the first basis vector $e_1$ spans an invariant subspace (by unitarity of the representation~$\pi$). This contradicts the irreducibility of~$\pi$.
    \end{proof}

  Let us show more explicitly that for $G=S_3$, there are pairs~$(C,\pi)$ such that~\eqref{eq:Kproperty} is satisfied. Consider the conjugacy class $C=\{\idG\}$. Here $E(C)=S_3$.
  Consider the $2$-dimensional irreducible representation~$\pi$ of $S_3$ obtained by restricting the defining representation on $\mathbb{C}^3$ to the subspace
  \begin{align}
V:= \left\{
    \begin{pmatrix}
      a\\
      b\\
      c
      \end{pmatrix}\ \Big|\ a+b+c=0
    \right\}\ .
  \end{align}
  This representation is unitary when expressed in the orthonormal basis
  \begin{align}
    e_1=\begin{pmatrix}
    \sqrt{2/3}\\
    -1/\sqrt{6}\\
    -1/\sqrt{6}
    \end{pmatrix}\qquad\qquad\qquad
    e_2&=\begin{pmatrix}
    0\\
    1/\sqrt{2}\\
    -1/\sqrt{2}
    \end{pmatrix}\ .
    \end{align}
  Let $r=(1\ 2)\in S_3$ be the transposition of the first two elements. Then  (expressed in this basis)
  \begin{align}
          \Gamma_\pi(r)&=\frac{1}{2}\begin{pmatrix}
    -1 & \sqrt{3}\\
    \sqrt{3} & 1
    \end{pmatrix}\ ,
  \end{align}
  and we conclude that~\eqref{eq:Kproperty} is satisfied.

  More generally,  we can show that \eqref{eq:Kproperty} is always satisfied for some pair $(\{\idG\},\pi)$  when~$G$ is non-abelian. Indeed, assume for the sake of contradiction that this is not the case. Then this implies by 
  Lemma~\ref{lem:conditiondimension} that  $d_\pi=1$ for any irrep~$\pi$ of $E(\{\idG\})=G$. By the Peter-Weyl theorem, the group algebra~$\mathbb{C}[G]$ is isomorphic to a direct sum of matrix algebras determined by irreps, hence this implies that $\mathbb{C}[G]$ and hence~$G$ is abelian, a contradiction.   In summary, this yields the following corollary to Theorem~\ref{thm:nogotheorem}:
  \begin{corollary}\label{cor:nogomain}
    For any non-abelian group~$G$, there is an irrep~$\pi$ of~$G$ such that
    any constant-depth unitary circuit implementing~$F_\xi^{(\{\idG\},\pi)}$ has extensive circuit depth.
  \end{corollary}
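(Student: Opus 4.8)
The plan is to deduce the statement from Theorem~\ref{thm:nogotheorem} by producing, for any non-abelian finite group~$G$, a single irreducible representation~$\pi$ of~$G$ that satisfies the hypothesis~\eqref{eq:Kproperty}. Concretely, I would carry out two reductions and then a short representation-theoretic argument.

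\emph{First reduction.} I would apply Theorem~\ref{thm:nogotheorem} to the conjugacy class $C=\{\idG\}$. For this choice one has $r_C=\idG$ and $E(C)=G$, so the theorem asserts: if an irrep $\pi$ of $G$ admits some $r\in G$ with $|\Gamma_\pi(r)_{1,1}|<1$, then every local unitary circuit mapping a ground state $\Psi$ to a state proportional to $F_\xi^{(\{\idG\},\pi)}\Psi$ has extensive depth; equivalently, no constant-depth such circuit exists. Hence it remains only to exhibit one $\pi$ verifying~\eqref{eq:Kproperty}.

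\emph{Second reduction.} By Lemma~\ref{lem:conditiondimension}, any irrep $\pi$ with $d_\pi>1$ already satisfies~\eqref{eq:Kproperty}. So the corollary reduces to the purely group-theoretic claim that a non-abelian finite group possesses an irreducible representation of dimension at least~$2$. I would prove this via the semisimple decomposition of the group algebra (Peter--Weyl / Artin--Wedderburn), $\mathbb{C}[G]\cong\bigoplus_{\pi}M_{d_\pi}(\mathbb{C})$ with the sum over the irreducible representations of $G$: if every $d_\pi$ were $1$, the right-hand side would be commutative, so $\mathbb{C}[G]$, and hence $G$, would be abelian, contradicting the hypothesis. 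Thus some $\pi_0$ has $d_{\pi_0}\geq 2$; combining Lemma~\ref{lem:conditiondimension} and Theorem~\ref{thm:nogotheorem} for this $\pi_0$ and $C=\{\idG\}$ yields the corollary.

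At this level there is no real obstacle: given Theorem~\ref{thm:nogotheorem}, the corollary is a short assembly whose only external input is the classical fact that non-abelian groups admit higher-dimensional irreps. The substantive work is Theorem~\ref{thm:nogotheorem} itself, and that is where the main difficulty would lie. I would expect its proof to be a light-cone (Lieb--Robinson) argument: assuming a depth-$d$ geometrically local circuit sends $\Psi$ to a state proportional to $F_\xi^{(C,\pi)}\Psi$ for a ribbon $\xi$ of length $L$, one shows that $F_\xi^{(C,\pi)}\Psi$ carries a feature stretched across the full ribbon --- detectable, for instance, via the topological charge measurements $\{K_\sigma^{(C,\pi)}\}$ around loops crossing $\xi$, or via correlations between neighborhoods of the two endpoints --- that a short-range-correlated state such as $\Psi$ cannot acquire under a depth-$o(L)$ circuit, forcing $d=\Omega(L)$. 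The hypothesis $|\Gamma_\pi(r)_{1,1}|<1$ should enter precisely here: it is what obstructs the abelian-type shortcut in which the anyonic ribbon operator, suitably combined, collapses to a depth-$1$ tensor product of single-qudit gates.
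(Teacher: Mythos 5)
Your proposal is correct and follows essentially the same route as the paper: specialize Theorem~\ref{thm:nogotheorem} to $C=\{\idG\}$ (so $E(C)=G$), invoke Lemma~\ref{lem:conditiondimension} to reduce condition~\eqref{eq:Kproperty} to the existence of an irrep with $d_\pi>1$, and obtain that irrep from the Wedderburn/Peter--Weyl decomposition of $\mathbb{C}[G]$, since otherwise $\mathbb{C}[G]$ and hence $G$ would be abelian. Your closing speculation about how Theorem~\ref{thm:nogotheorem} itself is proved is not needed for the corollary and differs in detail from the paper's argument (which uses deformation invariance of a local closed-ribbon operator $K_\tau$ together with uniqueness of the ground state, rather than a correlation/light-cone bound), but this does not affect the correctness of the corollary's proof.
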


  In the remainder of this section, we give the proof of Theorem~\ref{thm:nogotheorem}. We begin by computing certain matrix elements.  
\begin{lemma}\label{lem:KOlemma}
  Let $\Psi$ be a ground state. Let $(C,\pi)$ be a pair satisfying~\eqref{eq:Kproperty} with some $r\in E(C)$. Set $k:=r^{-1}$ and define 
\begin{align}
K&=A^k_{s_1}\label{eq:Kdefinition}\\
O_\xi&=\sqrt{\frac{|E(C)|\cdot |G|}{d_\pi}} F_\xi^{(C,\pi)}\label{eq:Odefinition}\ .
  \end{align}
Then 
\begin{align}
  \|O_\xi\Psi\|&=\|KO_\xi\Psi\|=1\\
  \langle O_\xi\Psi,KO_\xi\Psi\rangle &=\Gamma_\pi(k^{-1})_{1,1}\ .
\end{align}
In particular, $\langle O_\xi\Psi,KO_\xi\Psi\rangle|<1$. 
\end{lemma}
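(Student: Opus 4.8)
The plan is to compute all three quantities by expanding $O_\xi$ in terms of basic ribbon operators $F_\xi^{h,g}$, pushing the site-operator $K = A^k_{s_1}$ past them using the endpoint commutation relation \eqref{eq:seccommute}, and then collapsing everything onto the ground state via \eqref{eq:groundstateexpect}. Concretely, from \eqref{eq:fxicpidef} and \eqref{eq:Odefinition} we have $O_\xi = c\sum_{\ell\in E(C)}(\Gamma_\pi^{-1}(\ell))_{1,1}F_\xi^{r_C^{-1},\ell}$ with $c=\sqrt{|G|\,d_\pi}/|E(C)|$. To get $\|O_\xi\Psi\|^2 = \langle\Psi, O_\xi^\dagger O_\xi\Psi\rangle$, I would use \eqref{eq:adjointstringops} to write $(F_\xi^{r_C^{-1},\ell})^\dagger = F_\xi^{r_C,\ell}$, then apply the multiplication rule \eqref{eq:stringmultiplication}: $F_\xi^{r_C,\ell}F_\xi^{r_C^{-1},\ell'} = \delta_{\ell,\ell'}F_\xi^{r_C r_C^{-1},\ell} = \delta_{\ell,\ell'}F_\xi^{\idG,\ell}$. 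Taking the ground-state expectation and using \eqref{eq:groundstateexpect}, $\bra{\Psi}F_\xi^{\idG,\ell}\ket{\Psi} = 1/|G|$, the double sum collapses to a single sum $|c|^2 \cdot \tfrac{1}{|G|}\sum_{\ell}|(\Gamma_\pi^{-1}(\ell))_{1,1}|^2$. By unitarity, $(\Gamma_\pi^{-1}(\ell))_{1,1} = \overline{(\Gamma_\pi(\ell))_{1,1}}$, and the Schur orthogonality relations give $\sum_{\ell\in E(C)}|(\Gamma_\pi(\ell))_{1,1}|^2 = |E(C)|/d_\pi$. Plugging in $|c|^2 = |G|\,d_\pi/|E(C)|^2$ yields $\|O_\xi\Psi\|^2 = 1$.

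Next, for $\|KO_\xi\Psi\|^2 = \langle O_\xi\Psi, A^{k^{-1}}_{s_1}A^k_{s_1} O_\xi\Psi\rangle$, I would use \eqref{eq:multiplicationas} and \eqref{eq:adjointAk} to note $(A^k_{s_1})^\dagger A^k_{s_1} = A^{k^{-1}}_{s_1}A^k_{s_1} = A^{\idG}_{s_1} = I$ by \eqref{eq:Aesidentity}; hence $\|KO_\xi\Psi\| = \|O_\xi\Psi\| = 1$ with no further work. The only genuinely new computation is the overlap $\langle O_\xi\Psi, KO_\xi\Psi\rangle = \bra{\Psi}O_\xi^\dagger A^k_{s_1}O_\xi\ket{\Psi}$. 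Here I would first use \eqref{eq:seccommute}, which reads $A^k_{s_1}F^{h,g}_\xi = F_\xi^{h,gk^{-1}}A^k_{s_1}$, to move $A^k_{s_1}$ all the way to the right where it hits $\ket{\Psi}$; since $\Psi$ is a ground state, $A^k_{s_1}\ket{\Psi} = \tfrac{1}{|G|}\sum_{k'}A^{k'}_{s_1}\ket{\Psi}$ is not simply $\ket{\Psi}$, but the cleaner route is to observe that in the full expectation we will re-expand $O_\xi^\dagger O_\xi$ after the commutation, so $A^k_{s_1}$ again becomes $A^{\idG}_{s_1}=I$ once combined appropriately — more precisely, after commuting, $O_\xi^\dagger A^k_{s_1}O_\xi$ acting on $\Psi$ becomes (up to the $\Gamma$-coefficients) a sum of terms $F_\xi^{r_C,\ell}F_\xi^{r_C^{-1},\ell'k^{-1}}$ times a residual $A^k_{s_1}$; applying \eqref{eq:stringmultiplication} forces $\ell = \ell' k^{-1}$, and then one uses $A^k_{s_1}F_\xi^{\idG,\ell}\ket{\Psi}$ and \eqref{eq:groundstateexpect} (together with $A_{s_1}$ being a ground-state stabilizer and the commutation \eqref{eq:seccommute} at $h=\idG$) to evaluate $\bra{\Psi}F_\xi^{\idG,\ell}A^k_{s_1}\ket{\Psi} = \bra{\Psi}A^k_{s_1}F_\xi^{\idG,\ell}\ket{\Psi} = \tfrac{1}{|G|}$ since $h=\idG$ makes $A^k$ commute through trivially via \eqref{eq:seccommute}. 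The surviving sum is $|c|^2\cdot\tfrac{1}{|G|}\sum_{\ell}\overline{(\Gamma_\pi^{-1}(\ell))_{1,1}}\,(\Gamma_\pi^{-1}(\ell k^{-1}))_{1,1}$, wait — I would instead keep the coefficients as $(\Gamma_\pi^{-1}(\ell))_{1,1}$ from $O_\xi^\dagger$ (after conjugation) and $(\Gamma_\pi^{-1}(\ell'))_{1,1}$ from $O_\xi$, with the constraint $\ell' = \ell k$ coming out of the multiplication rule, giving $\sum_\ell (\Gamma_\pi(\ell))_{1,1}(\Gamma_\pi^{-1}(\ell k))_{1,1}$. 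Using the homomorphism property $\Gamma_\pi^{-1}(\ell k) = \Gamma_\pi^{-1}(k)\Gamma_\pi^{-1}(\ell)$ and Schur orthogonality $\sum_\ell (\Gamma_\pi(\ell))_{1,1}(\Gamma_\pi^{-1}(\ell))_{m,n} = \tfrac{|E(C)|}{d_\pi}\delta_{1,n}\delta_{m,1}$ (in the unitary normalization), the sum collapses to $\tfrac{|E(C)|}{d_\pi}(\Gamma_\pi^{-1}(k))_{1,1}$, and the prefactors again combine to give exactly $(\Gamma_\pi^{-1}(k))_{1,1} = \Gamma_\pi(k^{-1})_{1,1}$.

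The last line, $|\langle O_\xi\Psi, KO_\xi\Psi\rangle| < 1$, is then immediate: $|\Gamma_\pi(k^{-1})_{1,1}| = |\Gamma_\pi(r)_{1,1}| < 1$ by hypothesis \eqref{eq:Kproperty} (recalling $k = r^{-1}$ and using unitarity, which makes $|\Gamma_\pi(r^{-1})_{1,1}| = |\overline{\Gamma_\pi(r)_{1,1}}| = |\Gamma_\pi(r)_{1,1}|$). I expect the main obstacle to be bookkeeping: tracking the conjugation (adjoint) of the coefficients $(\Gamma_\pi^{-1}(\ell))_{1,1}$ through $O_\xi^\dagger$, correctly pinning down which Schur orthogonality relation applies in the unitary normalization, and — most delicately — verifying that the residual $A^k_{s_1}$ really does act trivially in $\bra{\Psi}F_\xi^{\idG,\ell}A^k_{s_1}\ket{\Psi}$, which hinges on the $h=\idG$ special case of \eqref{eq:seccommute} making $A^k_{s_1}$ commute through $F_\xi^{\idG,\ell}$ unchanged and then $A^k_{s_1}$ acting on the ground state inside an expectation against $\bra{\Psi}$ reducing via $\bra{\Psi}A^k_{s_1} = \bra{\Psi}$ only after averaging — so I would be careful to phrase this step using \eqref{eq:groundstateexpect} applied directly to $F_\xi^{\idG, \ell}$ with the $A$-operator already commuted out, rather than invoking any stabilizer identity for a single $A^k_{s_1}$.
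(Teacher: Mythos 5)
Your overall route is the same as the paper's: expand $F_\xi^{(C,\pi)}$ via \eqref{eq:fxicpidef}, push $A^k_{s_1}$ through with \eqref{eq:seccommute}, collapse the product of basic ribbon operators with \eqref{eq:adjointstringops} and \eqref{eq:stringmultiplication}, evaluate on the ground state with \eqref{eq:groundstateexpect}, and finish with Schur orthogonality; your treatment of $\|KO_\xi\Psi\|$ via $(A^k_{s_1})^\dagger A^k_{s_1}=A^{\idG}_{s_1}=I$ is also exactly the paper's argument. The one place where your plan as stated does not close is the evaluation of the residual term $\bra{\Psi}F_\xi^{\idG,\ell}A^k_{s_1}\ket{\Psi}$, and both of your supporting claims there are off. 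First, it is not true that ``$h=\idG$ makes $A^k_{s_1}$ commute through unchanged'': by \eqref{eq:seccommute} the group label always shifts, $A^k_{s_1}F_\xi^{\idG,\ell}=F_\xi^{\idG,\ell k^{-1}}A^k_{s_1}$ (this happens to be harmless only because \eqref{eq:groundstateexpect} is independent of $g$). Second, your worry that $A^k_{s_1}\Psi\neq\Psi$ for an individual $k$, so that stabilization holds ``only after averaging,'' is backwards: since $A^k_{s_1}A_{s_1}=A_{s_1}$ (by \eqref{eq:multiplicationas}) and $A_{s_1}\Psi=\Psi$, one has $A^k_{s_1}\Psi=\Psi$ for every single $k$. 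This is precisely the identity the paper invokes, and without it (or the equivalent use of $A^k_{s_1}A_{s_1}=A_{s_1}$) ``commuting the $A$-operator out'' never eliminates it --- it only moves $A^k_{s_1}$ onto $\bra{\Psi}$ or $\ket{\Psi}$, where the same single-operator stabilizer fact is still needed. Replacing your proposed workaround by this one-line observation closes the step. Also keep in mind that the hypothesis $k=r^{-1}\in E(C)$ is what guarantees that $\ell k^{-1}$ (equivalently $\ell'=\ell k$) remains in $E(C)$, so the constrained sum still ranges over all of $E(C)$; the paper makes this explicit.

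A further minor slip: from \eqref{eq:Odefinition} and \eqref{eq:fxicpidef} the prefactor is $c=\sqrt{|G|\,d_\pi/|E(C)|}$, i.e.\ $|c|^2=|G|\,d_\pi/|E(C)|$, not $|G|\,d_\pi/|E(C)|^2$ as you wrote; with your value the norm would come out $1/|E(C)|$ rather than $1$. With these corrections your computation reproduces the paper's: $\|F_\xi^{(C,\pi)}\Psi\|^2=\frac{d_\pi}{|E(C)|\cdot|G|}$, the overlap equals $\Gamma_\pi(k^{-1})_{1,1}$, and the final bound $|\Gamma_\pi(k^{-1})_{1,1}|=|\Gamma_\pi(r)_{1,1}|<1$ follows from \eqref{eq:Kproperty} and unitarity exactly as you say.
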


\begin{proof}
  We show that for any~$k\in E(C)$, we have 
 \begin{align}
   \langle \Psi,(F_\xi^{(C,\pi)})^\dagger A_{s_1}^kF_\xi^{(C,\pi)}\Psi\rangle&=
   \frac{d_\pi}{|E(C)|\cdot |G|} \Gamma_{\pi}(k^{-1})_{1,1}\label{eq:akhequation}\\
\|F_\xi^{(C,\pi)}\Psi\|^2&=
    \frac{d_\pi}{|E(C)|\cdot |G|}\label{eq:secondclaimakh}\\
\|A^k_{s_1}F_\xi^{(C,\pi)}\Psi\|^2&=
    \frac{d_\pi}{|E(C)|\cdot |G|}\ .\label{eq:thirdclaimakh}
 \end{align}
 The claim follows from these three identities.
 
 Note that for any $k\in G$, we have 
  \begin{align}     (F^{r_C^{-1},\ell'}_\xi)^\dagger A_{s_1}^k F^{r_C^{-1},\ell}_\xi    &=    (F^{r_C^{-1},\ell'}_\xi)^\dagger F^{r_C^{-1},\ell k^{-1}}_\xi   A_{s_1}^k \qquad\qquad\textrm{by~\eqref{eq:seccommute}}\\
    &= F^{r_C,\ell'}_\xi   F^{r_C^{-1},\ell k^{-1}}_\xi   A_{s_1}^k \qquad\qquad\textrm{by~\eqref{eq:adjointstringops}}\\
&=\delta_{\ell',\ell k^{-1}} F^{\idG,\ell k^{-1}}_\xi     A_{s_1}^{k}\qquad\qquad\textrm{by~\eqref{eq:stringmultiplication}}\ .
      \end{align}  
  With~\eqref{eq:fxicpidef}  we obtain
  \begin{align}
    (F_\xi^{(C,\pi)})^\dagger A_{s_1}^kF_\xi^{(C,\pi)}&=\frac{d_\pi^2}{|E(C)|^2}
  \sum_{\ell\in E(C)} \delta_{\ell k^{-1}\in E(C)}
\overline{(\Gamma^{-1}_\pi(\ell k^{-1}))_{1,1}} (\Gamma^{-1}_\pi(\ell))_{1,1}    F^{\idG,\ell k^{-1}}_\xi     A_{s_1}^{k}\ .
  \end{align}
  Let us now consider the case where $k\in E(C)$ such that $\ell k^{-1}\in E(C)$ for every $\ell\in E(C)$.   Since $A^k_{s_1}A_{s_1}=A_{s_1}$ and thus $A^k_{s_1}\Psi=\Psi$ and because
  of~\eqref{eq:groundstateexpect}, we get
  \begin{align}
    \langle \Psi,(F_\xi^{(C,\pi)})^\dagger A_{s_1}^k F_\xi^{(C,\pi)}\Psi\rangle&=\frac{d_\pi^2}{|E(C)|^2\cdot |G|}
  \sum_{\ell\in E(C)}
  \overline{(\Gamma^{-1}_\pi(\ell k^{-1}))_{1,1}} (\Gamma^{-1}_\pi(\ell))_{1,1}    .
  \end{align}
Using the unitary of~$\pi$ and Schur's orthogonality relations for matrix elements of irreducible representations we have 
  \begin{align}
  \sum_{\ell\in E(C)}
  \overline{(\Gamma^{-1}_\pi(\ell k^{-1}))_{1,1}} (\Gamma^{-1}_\pi(\ell))_{1,1}
  &=\sum_{\ell\in E(C)} \Gamma_\pi(\ell k^{-1})_{1,1}\overline{\Gamma_{\pi}(\ell)_{1,1}}\\
  &=\sum_{r=1}^{d_\pi }\Gamma_{\pi}(k^{-1})_{r,1}\sum_{\ell\in E(C)}\Gamma_{\pi}(\ell)_{1,r}\overline{\Gamma_{\pi}(\ell)_{1,1}}\\
  &=\frac{|E(C)|}{d_{\pi}}\Gamma_{\pi}(k^{-1})_{1,1}\ .
  \end{align}  
  Eq.~\eqref{eq:akhequation} follows.

  Eq.~\eqref{eq:secondclaimakh} follows by setting $k=\idG$ because of~\eqref{eq:Aesidentity}.

  For~\eqref{eq:thirdclaimakh} we have
  \begin{align}
    \|A^k_{s_1} F^{(C,\pi)}_\xi \Psi\|^2 &= \langle \Psi, (A^k_{s_1} F_\xi^{(C,\pi)})^\dagger A^k_{s_1} F^{(C,\pi)}_\xi \Psi\rangle\\
    &= \langle \Psi, (F_\xi^{(C,\pi)})^\dagger A^{\idG}_{s_1} F_\xi^{(C,\pi)}\Psi\rangle \qquad\textrm{by~\eqref{eq:multiplicationas}}\\
    &=\|F_\xi^{(C,\pi)}\Psi\|^2\qquad\textrm{by~\eqref{eq:Aesidentity}}\ .
  \end{align}
  The claim~\eqref{eq:thirdclaimakh} now follows from~\eqref{eq:akhequation}.
\end{proof} 

The vertex operator~$A^k_{s_1}$ in the definition~\eqref{eq:Kdefinition} of~$K$ is an example of a ``smallest'' closed ribbon operator. It is a ribbon operator associated with the  dual closed ribbon consisting only of dual triangles enclosing the single vertex associated with site~$s_1$. Let~$\tau$ denote this dual ribbon. Since~$K$ is
equal  to~$A^k_{s_1}$, we will write $K_\tau=K$  to emphasize that~$K$ is  defined by the ribbon~$\tau$. Importantly, the operator~$K_\tau$ does not depend on the ends (starting/ending site) of~$\tau$. As such, its action on the ground space is invariant under deformations of~$\tau$ in which the end is not fixed (i.e., ``rotations'' of the ribbon, see~\cite{BombinDelgado}). More generally, if $\tau'$ is another closed ribbon and  $\Psi$ a state such that the ribbon~$\tau$ can be deformed into~$\tau'$ without crossing any excitation, then
\begin{align}
K_\tau \Psi&=K_{\tau'}\Psi\label{eq:deformationcondition}
\end{align}
(see~\cite[Eq.~(11)]{BombinDelgado}), where $K_{\tau'}$ is the ribbon operator associated with~$\tau'$. It will be convenient in the following to reformulate condition~\eqref{eq:deformationcondition} as follows: Suppose $\cS(\tau,\tau')$
is a region of qudits such that $\tau$ can be deformed into~$\tau'$  without leaving~$\cS(\tau,\tau')$ (i.e., $\cS(\tau,\tau')$ contains the support of a deformation retract of $\tau$ to~$\tau'$). Let $\Pi_{\cS(\tau,\tau')}$ denote the corresponding local ground state projection, i.e., the product of plaquette- and vertex-operators whose support intersects~$\cS(\tau,\tau')$. Then we have
\begin{align}
K_\tau \Pi_{\cS(\tau,\tau')}&=K_{\tau'}\Pi_{\cS(\tau,\tau')}\ .\label{eq:KtauPiS}
  \end{align}

\begin{proof}[Proof of Theorem~\ref{thm:nogotheorem}]
  We work on the sphere where the ground state~$\Psi$ is unique but comment on how this assumption can be lifted below. According to Lemma~\ref{lem:KOlemma}, it suffices to show that
  implementing~$O_\xi$, i.e., mapping~$\Psi$ to $O_\xi\Psi$, requires at least linear circuit depth.

  Suppose for the sake of contraction that there is a local unitary circuit~$U$ implementing~$O_\xi$ with a non-extensive depth.  Consider the closed ribbon~$\tau$ associated with the endpoint of~$\xi$, and let $K_\tau$ be as in Lemma~\ref{lem:KOlemma}.
  Define \begin{align}
\Psi_1&:=O_\xi\Psi\\
\Psi_2&:=K_\tau\Psi_1=K_\tau O_\xi\Psi\ .
  \end{align}
  By Lemma~\ref{lem:KOlemma}, both states~$\Psi_1,\Psi_2$ are normalized, and
\begin{align}
|\langle \Psi_1,\Psi_2\rangle|&<1\label{eq:psionetwooverlap}
\end{align}
by assumption.   Let $\tau'$ be any ribbon obtained by locally deforming the closed ribbon~$\tau$. Then we have
  \begin{align}
K_{\tau'} O_\xi \Psi &=K_\tau O_\xi \Psi\ ,\label{eq:deformationidentity}
  \end{align}
   In particular,  identity~\eqref{eq:deformationidentity} implies that the state~$\Psi_2$ does not depend on the detailed form of~$\tau$.

  Eq.~\eqref{eq:deformationidentity}
  is a consequence of the following steps:
  \begin{align}
  K_\tau O_\xi \Psi&= 
  K_\tau O_\xi \Pi_{\cS(\tau,\tau')} \Psi \qquad\textrm{ since $\Psi$ is a ground state }\\
  &=K_\tau\Pi_{\cS(\tau,\tau')} O_\xi  \Psi \qquad\textrm{ since $O_\xi$ commutes with $\Pi_{\cS(\tau,\tau')}$ }\\
  &=K_{\tau'}\Pi_{\cS(\tau,\tau')} O_\xi  \Psi \qquad\textrm{ by~\eqref{eq:KtauPiS} }\\
  &=K_{\tau'} O_\xi  \Pi_{\cS(\tau,\tau')}\Psi \qquad\textrm{ since $[O_\xi, \Pi_{\cS(\tau,\tau')}]=0$}\\
  &=K_{\tau'} O_\xi  \Psi \ ,
    \end{align}
    see Fig.~\ref{fig:diagrammaticproof} for a diagrammatic representation. 
    
     \begin{figure}
 \centering
\subfigure[$K_\tau O_\xi \Psi$]{\includegraphics[width=8cm]{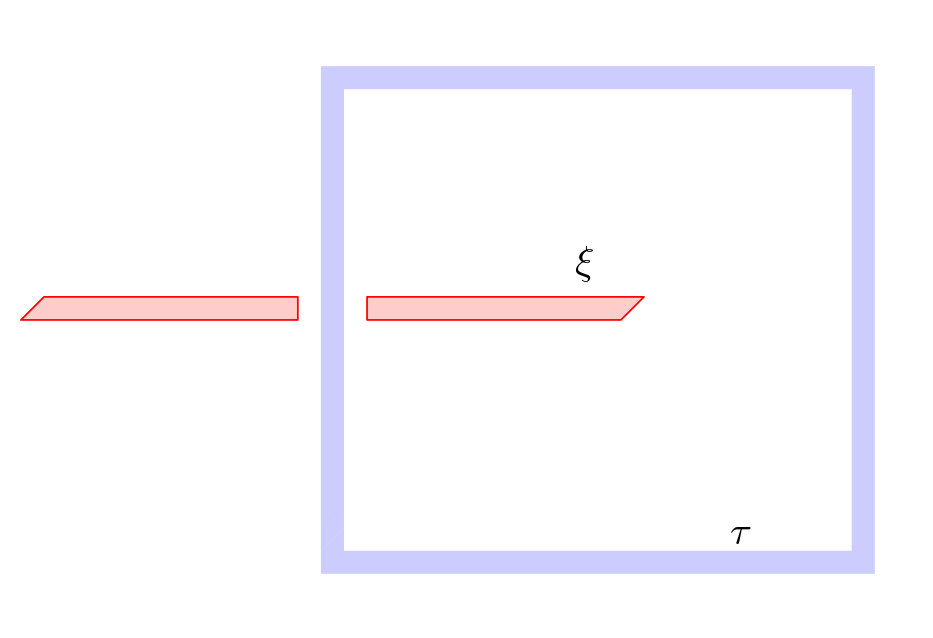}}\qquad\qquad
\subfigure[$K_\tau O_\xi \Pi_{\cS(\tau,\tau')} \Psi$]{\includegraphics[width=8cm]{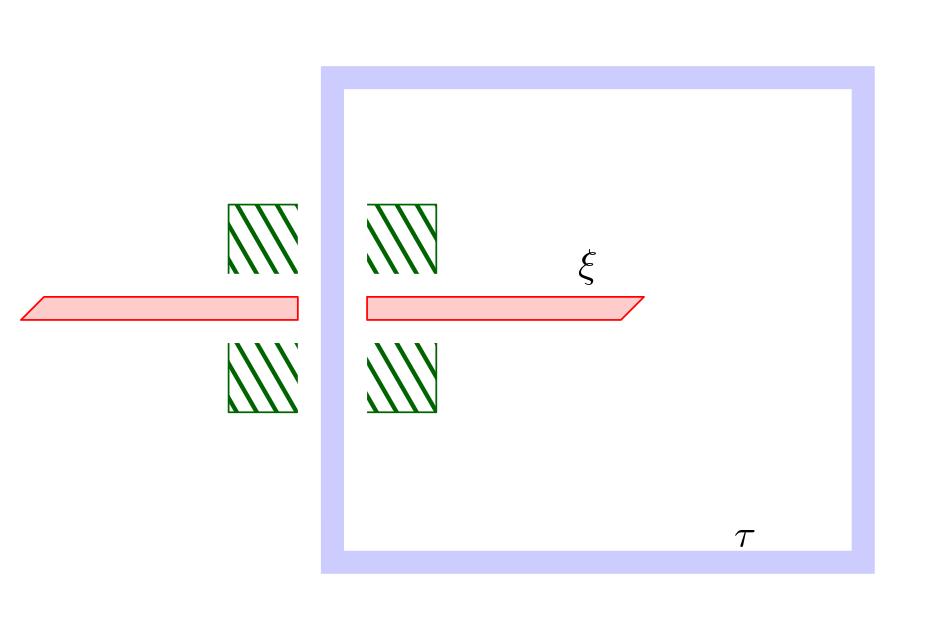}}\\
\subfigure[$K_\tau\Pi_{\cS(\tau,\tau')} O_\xi  \Psi $]{\includegraphics[width=8cm]{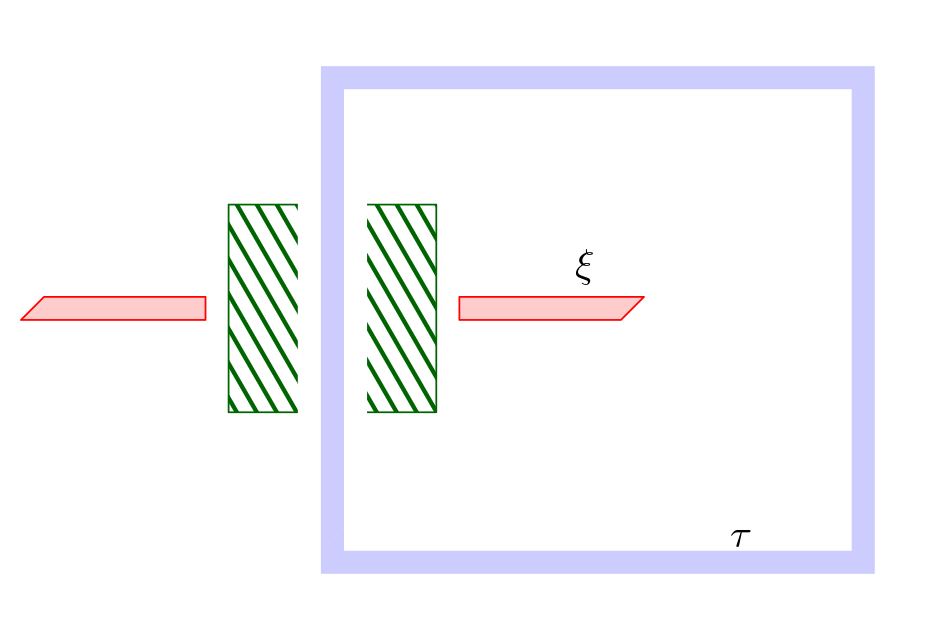}}\qquad\qquad
\subfigure[$K_{\tau'}  \Pi_{\cS(\tau,\tau')}O_\xi \Psi$]{\includegraphics[width=8cm]{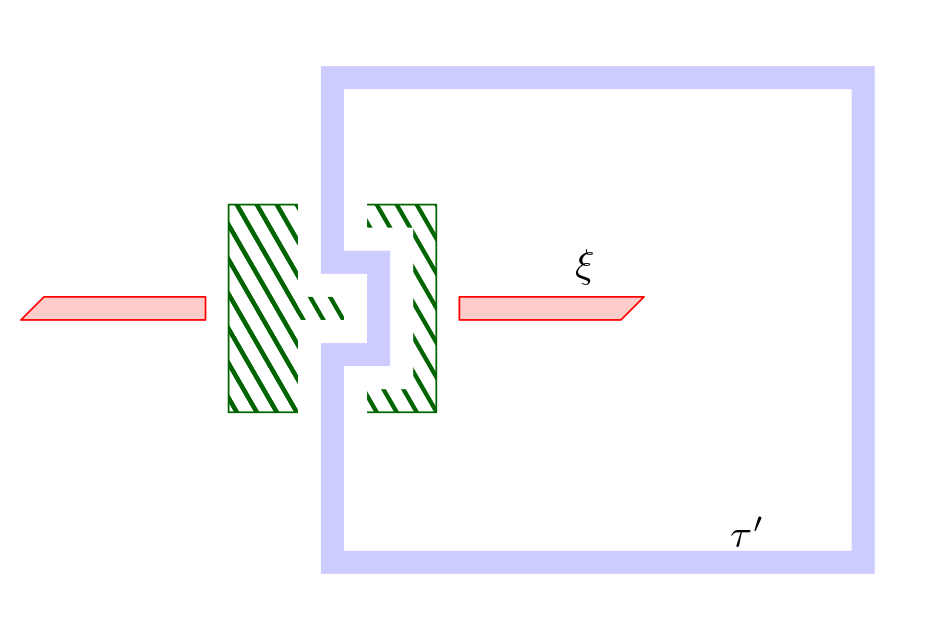}}\\
\subfigure[$K_{\tau'} O_\xi \Psi$]{\includegraphics[width=8cm]{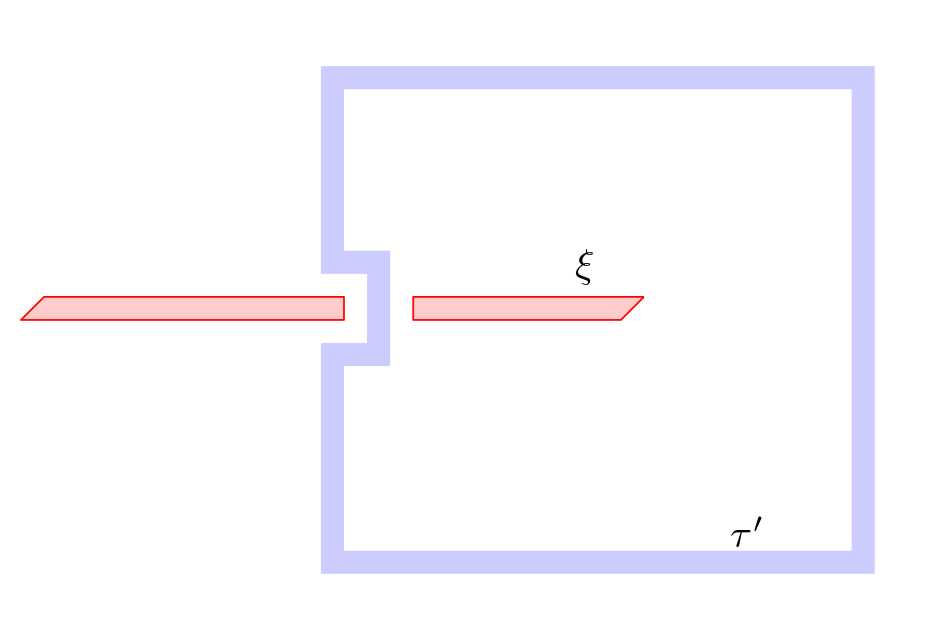}}
\caption{Illustration of the proof of identity~\eqref{eq:deformationidentity} 
\label{fig:diagrammaticproof}}
\end{figure}

Suppose now that $U$ is a constant-depth circuit implementing~$O_\xi$. Consider the states
\begin{align}
\Psi'_1&:=U^\dagger \Psi_1\\
\Psi'_2&:=U^\dagger \Psi_2\ .
\end{align}
Then~\eqref{eq:psionetwooverlap} implies that 
\begin{align}
|\langle \Psi'_1,\Psi'_2\rangle|&<1\label{eq:psipoverlap}
\end{align}
By definition, $\Psi'_1=\Psi$, and 
\begin{align}
\Psi'_2&=U^\dagger K_\tau   \Psi_1\\
&=(U^\dagger K_\tau U)\Psi\ .\label{eq:udaggerku}
\end{align}
In particular, $\Psi'_1$ and $\Psi'_2$ are two states  that differ by a local operator~$U^\dagger K_\tau U$.
By assumption,~$\Psi'_1=\Psi$ is a ground state. We claim that $\Psi'_2$ is also a ground state. With~\eqref{eq:psipoverlap}, this contradicts the uniqueness of the ground state.

To show that $\Psi'_2$ is a ground state, consider any stabilizer~$S$ with support far from~$\tau$. Then expression~\eqref{eq:udaggerku} and the locality of~$(U^\dagger K_\tau U)$ show that~$S\Psi'_2=\Psi'_2$.
More generally,  if a stabilizer~$S$ is such that its support intersects the support of~$(U^\dagger K_\tau U)$, we can simply replace~$\tau$ by a deformed closed ribbon~$\tau'$ such that the corresponding intersection is empty. 

The argument used here generalizes to the case where the ground space is degenerate as e.g., in the case of a lattice embedded in a higher-genus surface. This is because the ground space is an error correcting code. The only required assumption is that the ribbon~$\xi$ and the closed ribbon~$\tau$ are supported inside a contractible region~$R$ not containing any excitations or holes. Such a region is correctable: a logical operator supported inside~$R$ has trivial action on the ground space. It then follows as before that $\Psi_1'$ and $\Psi_2'$ are two ground states that differ only by a global phase, a contradiction to~\eqref{eq:psipoverlap}. 
\end{proof}

\section{Adaptive ground state preparation for a solvable group~$G$ \label{sec:statepreparation}}
In this section, we show that a ground state of the quantum double model can be prepared by an adaptive constant-depth circuit with local gates if the underlying group~$G$ is solvable.

In Section~\ref{sec:preparinggroundstatesthree}, we  exemplify our construction using the group~$G=S_3$ and a square lattice. We note that for $S_3$, a similar procedure was proposed in~\cite{verresen2022efficiently}. In Section~\ref{sec:qdoublestate} we define a quantum double state for any connected planar graph.  This state is the unique ground state of Kitaev's quantum double model on this graph. In Section~\ref{sec:stateprepcircuit} we then construct a constant-depth adaptive  circuit which prepares this state for any solvable group~$G$.

\subsection{Example: Preparing the ground state associated with $G=S_3$\label{sec:preparinggroundstatesthree}}
We begin by illustrating our construction using the symmetric group~$S_3$ and a planar square lattice with ``smooth" boundaries on all sides, see Fig.~\ref{fig:baselattice}.

\begin{figure}[H]
	\centering
	\includegraphics[width=0.3\textwidth]{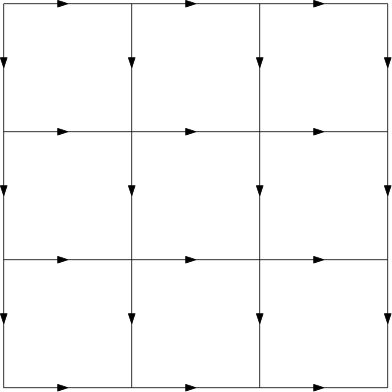}
	\caption{The base (oriented) lattice. Each  edge~$e$ carries a qudit~$\sysC_e\cong\mathbb{C}^{|G|}$.\label{fig:baselattice}}
\end{figure}

Let $r=(1\ 2\ 3)\in S_3$ be the $3$-cycle.  Then 
\begin{align}
	H&=\{\idG,r,r^2\}
\end{align}
is a normal subgroup of~$S_3$ isomorphic to~$\mathbb{Z}_3$. The quotient group~$S_3/H$ is cyclic and isomorphic to~$\mathbb{Z}_2$. More explicitly, let $a=(1\ 2)$ be the $2$-cycle. Then $S_3$ is the disjoint union of cosets
\begin{align}
	S_3=H\cup aH\ .
\end{align}
Every element $g\in S_3$ can thus be written  as 
\begin{align}
	g&=a^{\theta} h \label{eq:uniquelywrittenSthree}
\end{align}
where $\theta\in \mathbb{Z}_2$ and $h\in H$ are uniquely determined by~$g$. 


The constant-depth adaptive circuit for preparing the ground state of~$S_3$ requires systems as follows:
To every edge~$e$ of the lattice, we associate Hilbert spaces
\begin{align}
	\sysB_e &\cong \mathbb{C}^2=\mathbb{C}^{|S_3/H|}\\
	\sysC_e &\cong \mathbb{C}^6=\mathbb{C}^{|S_3|}\ .
\end{align} We also associate a qubit Hilbert~$\sysA_v\cong\mathbb{C}^2$ to every vertex~$v$ of the lattice with orthonormal basis~$\{\ket{\theta}\}_{\theta\in\mathbb{Z}_2}$. The relevant systems and their geometric arrangement 
are illustrated in Fig.~\ref{fig:preparationgroundstatesolvablegroup}
for a general solvable group~$G$. The final state of the protocol on the system~$\sysC$ is the  
ground state of the quantum double~$D(S_3)$ which we denote~$\Psi(S_3)$.

We begin by noting that the ground state $\Psi(\mathbb{Z}_3)$  associated with the quantum double $D(\mathbb{Z}_3)$ 
can be prepared in constant adaptive depth with local (quantum) operations (and non-local classical processing) since~$\mathbb{Z}_3$ is an abelian group. In particular, the state~$\Psi(\mathbb{Z}_3)$ is a stabilizer state and can be prepared by syndrome measurement and subsequent (generalized) Pauli corrections. An example of a corresponding procedure  relying on a graph state can be found, e.g.,  in~\cite{joooetal}.

Our protocol for preparing the state $\Psi(S_3)$ proceeds as follows: 
\begin{enumerate}[(i)]
	\item
	Each qubit~$\sysA_v$   associated with vertex~$v$ is initialized in the state~$\ket{+}:=\frac{1}{\sqrt{2}}(\ket{0}+\ket{1})$.
	\item
	Each qubit~$\sysB_e$ associated with an edge~$e$ is initialized in
	the state~$\ket{0}$.
	\item
	The subsystem $\bigcup_{e\in \calL_1} \sysC_e$ is initialized in the quantum double state $\Psi(\mathbb{Z}_3)$.
	Here we identify  a basis vector $|h\ra \in \CC^3$ associated with a group element $h\in \ZZ_3=\{\idG,r,r^2\}\subset S_3$
	and the respective basis vector $|h\ra \in \CC^{|S_3|}$.
\end{enumerate}

Our preparation procedure uses 
a controlled-left-multiplication gate~$CX^{L}:\mathbb{C}^2\otimes \mathbb{C}^{|S_3|}\rightarrow\mathbb{C}^2\otimes\mathbb{C}^{|S_3|}$ where the group element~$g\in S_3$ contained in the target register is left-multiplied by the $2$-cycle~$a$ if the control bit is in the state~$\ket{1}$, that is,
\begin{align}
	CX^{L} \left(\ket{\theta}\otimes\ket{g}\right)&=\ket{\theta}\otimes \ket{a^{\theta}g}\qquad\textrm{ for }\qquad (\theta,g)\in \{0,1\}\times S_3\ .
\end{align}
It also uses a similarly defined controlled-right-multiplication gate~$CX^{R}$ defined by 
\begin{align}
	CX^{R} \left(\ket{\theta}\otimes\ket{g}\right)&=\ket{\theta}\otimes \ket{ga^{-\theta}}\qquad\textrm{ for }\qquad (\theta,g)\in \{0,1\}\times S_3\ .
\end{align}
Finally, we need a unitary~$E:\mathbb{C}^{|S_3|}\times\mathbb{C}^2\rightarrow\mathbb{C}^{|S_3|}\times\mathbb{C}^2$ that coherently computes the bit~$\theta\in \{0,1\}$ associated with~$g\in S_3$ by~\eqref{eq:uniquelywrittenSthree}, that is,
\begin{align}
	E(\ket{g}\otimes\ket{0})&=\ket{g}\otimes\ket{\theta}\ .
\end{align}
Our preparation procedure for the $S_3$-toric code  uses a sequence of steps consisting of layers of gates and measurements.  We illustrate  the following steps in Fig.~\ref{fig:preparationgroundstatesolvablegroup} in the setting of a general solvable group.
\begin{enumerate}
	\item
	For every edge~$e$, the gate $CX^L$ is applied to~$\sysA_{e^+}\sysC_e$ and the gate
	$CX^R$ is applied to  $\sysA_{e^-}\sysC_e$.
	Since left- and right-multiplication commute, these gates can be applied in any order (and simultaneously for all edges). 
	\item
	Measure the qubit~$\sysA_v$ associated with each vertex~$v$ in the computational basis, obtaining measurement outcomes~$m=(m_v)_{v}$.
	\item
	Apply the unitary~$E$ to each pair of systems~$\sysC_e\sysB_e$ for every edge~$e$.
	\item
	Apply a single-qubit ``correction'' operator to each system~$\sysB_e$ for $e\in E$. For each qubit, this is either
	the identity or a Pauli-$Z$ operation. Whether or not~$Z$ is applied to a qubit~$\sysB_e$ is determined by the measurement results~$m$; we describe how to compute the relevant bit below.
	\item
	Apply the inverse of the unitary~$E$ to every pair~$\sysC_e\sysB_e$ for every edge~$e$.
\end{enumerate}
We will show that the resulting state on~$\sysC$ is the ground state of the quantum double model associated with~$S_3$ -- the remaining degrees of freedom can be traced out.

We note that the case of the group~$S_3$ considered here is special because $a^2=e$ and $|S_3/H|=|\mathbb{Z}_2|=2$, that is, the order of the element~$a\in S_3$ matches that of the quotient group~$S_3/H$. For a general solvable group~$G$, this is not necessarily the case and we may need to use qudits of different dimensions at the vertices (i.e., the systems~$\sysA_v$) and edges (i.e., the systems~$\sysB_e$). The corresponding generalization is described in Section~\ref{sec:stateprepcircuit} where we also argue that this procedure indeed creates the quantum double state.

As we comment below, qubits~$\sysB_e$ can be removed from the circuit simply by redefining the gates acting on the 
remaining degrees of freedom~$\sysA_v$ and~$\sysC_e$. However, these qubits make the structure of the circuit more 
clear and simplify the presentation.

\subsection{Quantum double state\label{sec:qdoublestate}}

Let $\calL$ be a connected planar  graph, e.g. the 2D square lattice.
We write $\calL_0,\calL_1$ and $\calL_2$ for the sets of 
vertices, edges, and faces of $\calL$ respectively.
Every edge $e\in \calL_1$ 
is oriented such that one endpoint of $e$ is designated as the head and the other endpoint
is designated as the tail. We write $e^+$ and $e^-$ for the head and the tail of $e$.

Suppose $G$ is a finite group. 
Define a discrete version of a  $p$-dimensional differential form on $\calL$ or simply a $p$-form
as a map $\theta \, : \, \calL_p\to G$.
We shall only consider $p=0,1$.
Let $\calF_p(G)$ be the set of all $G$-valued $p$-forms.  
Define a discrete version of the exterior  derivative operator $\der\, : \, \calF_0(G)\to \calF_1(G)$.
Namely, for any $0$-form $\theta$ define a $1$-form $\der\theta$ such that
\begin{align}
\der\theta(e) = \theta(e^+)\theta(e^-)^{-1} \quad \mbox{for all $e\in \calL_1$}.
\end{align}
A $1$-form $\omega$ is called exact if  $\omega=\der\theta$ for some 
$0$-form $\theta$.

Consider a Hilbert space $\calH=(\CC^{|G|})^{\otimes |\calL_1|}$ with a $|G|$-dimensional qudit assigned to every edge $e\in \calL_1$.
Basis states of a qudit located on an edge $e\in \calL_1$ 
are denoted $|g\ra_e$, where $g\in G$.
Basis states of the full Hilbert space $\calH$ correspond to $1$-forms $|\omega\ra$ with $\omega\in \calF_1(G)$.  In Appendix~\ref{app:groundspacekitaev} we show that Kitaev's quantum double Hamiltonian
defined on the graph $\calL$ has a unique ground state  which is the uniform superposition of all exact $1$-forms, that is, 
\begin{align}
	\label{ground_state}
	|\Psi(G)\ra \sim \sum_{\theta \in \calF_0(G)}\; |\der\theta\ra. 
\end{align}

\subsection{State preparation circuit\label{sec:stateprepcircuit}}

In this section we assume that $G$ is a solvable group
and describe a local constant-depth adaptive circuit preparing the quantum double state
$|\Psi(G)\ra$ starting from a product state.

Since $G$ is solvable, it has 
a normal subgroup $H$ such that the factor group $G/H$ is cyclic,
that is, $G/H\cong \ZZ_d$ for some integer $d$. (In the following, elements of $\ZZ_d$ are labeled by integers $\{0,1,\ldots,d-1\}$.) 
Choose a group element $a\in G$ such that $G$ is a disjoint union of cosets
\begin{align}
	G = H \cup aH \cup a^2 H \cup \ldots \cup a^{d-1} H\ .\label{eq:disjointunioncosetsa}
\end{align}
This assumption implies that~$a^d\in H$, but in general, $a^d\ne \idG$. Let $s$ be the order of $a$ in the group $G$, that is, $s$ is the smallest nonzero integer such that $a^s = \idG$. To give an example where $s\neq d$, consider
the group $G = Q_8 = \{\idG, - \idG, i, -i, j, -j, k, -k\}$ of  quaternions with $H = \{\idG, i, - \idG, - i\}$ and $a = k$. In this case, $s = 4$ while $G/H \cong \mathbb{Z}_2$ and thus $d = 2$.


We describe an algorithm to prepare $|\Psi(G)\ra$ which immediately translates into a constant-depth adaptive circuit. It uses three quantum registers $\sysA,\sysB,\sysC$ such that 
\begin{enumerate}[(i)]
	\item Register $\sysA$ contains $|\calL_0|$ qudits of dimension $s$: to each vertex~$v\in\calL_0$, we associate a qudit~$\sysA_v\cong\mathbb{C}^s$.
	\item Register $\sysB$ contains $|\calL_1|$ qudits of dimension $d$: we associate a system~$\sysB_e\cong\mathbb{C}^{d}$ with
	each edge~$e\in\cL_1$.
	\item Register $\sysC$ contains $|\calL_1|$ qudits of  dimension $|G|$, a qudit~$\sysC_e\cong\mathbb{C}^{|G|}$ for every edge~$e\in\calL_1$. 
\end{enumerate}
The algorithm is given in Fig.~\ref{fig:preparationcircuit}. 
Figure~\ref{fig:preparationgroundstatesolvablegroup} illustrates the procedure using a planar square lattice with ``smooth'' boundaries on all sides, see Figure~\ref{fig:baselattice}.  
In step~\eqref{it:wgateapplication}, it uses the two-qudit gates  $U, W\, : \, \CC^s\otimes  \CC^{|G|} \to \CC^s\otimes \CC^{|G|}$ defined by 
\begin{align}\label{eq:defUWgates}
	U\left(|j\ra \otimes |g\ra\right) = |j\ra \otimes |a^j g\ra \qquad \text{ and } \qquad W\left(|j\ra \otimes |g\ra\right) = |j\ra \otimes |ga^{-j}\ra
\end{align}
for all $j\in \ZZ_s$ and $g \in G$, and in steps~\eqref{it:gateEstep} and~\eqref{it:gateEinversestep}, it uses a two-qudit gate $E \, : \, \CC^{|G|}\otimes  \CC^{d} \to \CC^{|G|}\otimes  \CC^{d}$ that satisfies
\begin{align} \label{eq:defEgate}
	E \ket{g} \otimes \ket{0} = \ket{g} \otimes \ket{j(g)} \quad 
\end{align}
for every $g\in G$, where $j(g) \in \mathbb{Z}_d$ is uniquely determined by the relation $g = a^{j(g)} h(g)$ with $h(g) \in H$ (see~\eqref{eq:disjointunioncosetsa}).

Our algorithm also uses constants~$\{\sigma_{v,e}\}_{(v,e)\in\cL_0\times\cL_1}\subset \{-1,0,+1\}$. These can be defined as follows, starting from a fixed orientation of the edges of~$\cL$. Arbitrarily choose a vertex~$r\in\cL_0$  designated as root, and for every vertex~$v\in \cL_0$, fix a  (directed) path~$\Gamma_v\subseteq \cL_1$ connecting~$r$ and~$v$. Then set
\begin{align}
	\sigma_{v,e}:=\begin{cases}
		1\qquad &\textrm{ if } e \in \Gamma_v \textrm{ and the orientation of~$e$ agrees with that of~$\Gamma_v$}\\
		-1\qquad &\textrm{ if } e \in \Gamma_v \textrm{ and the orientation of~$e$ disagrees with that of~$\Gamma_v$}\\
		0&\textrm{ if }e\not\in \Gamma_v\ ,
	\end{cases}
\end{align}
for every pair $(v,e)\in\cL_0\times\cL_1$.

We note that with these definitions, the value~$\theta(v)$ of a map~$\theta:\cL_0\rightarrow \mathbb{Z}_d$ (i.e., a $\mathbb{Z}_d$-valued $0$-form) at a vertex~$v\in \cL_0$ can be expressed as a linear combination of
differences of function values~(i.e., using an exact $1$-form)
along the path~$\Gamma_v$. More precisely, define
\begin{align}
	z(e)=\theta(e^+)-\theta(e^-)\in \ZZ_d\ 
\end{align}
for every edge~$e\in\cL_1$. Then we can express $\theta$ in terms of~$z$ as 
\begin{align} \label{eq:parallel_transport}
	\theta(v)=\theta(r) + \sum_{e\in \Gamma_v} \sigma_{v,e} z(e) {\pmod d}\ .
\end{align}
This ``parallel transport'' expression is an essential component of our construction.

\begin{figure}
	\begin{mdframed}[
		linecolor=black,
		linewidth=2pt,
		roundcorner=4pt,
		backgroundcolor=gray!15,
		userdefinedwidth=\textwidth,
		]
		\algnewcommand\algorithmicforeach{\textbf{for each}}
		\algdef{S}[FOR]{ForEach}[1]{\algorithmicforeach\ #1\ \algorithmicdo}
		\begin{algorithmic}[1]
			\State Initialize each qudit~$\sysA_v$, $v\in \cL_0$ in the state  $|+\ra\sim \sum_{i\in \ZZ_s} |i\ra$.   Initialize each qudit $\sysB_e$ for $e\in \cL_1$  in the state $|0\ra\in\mathbb{C}^d$. Initialize the register $\sysC$ in the state
			\[
			|\Psi(H)\ra \sim  \sum_{\phi \in \calF_0(H)} \; \bigotimes_{e\in E} |\der\phi(e)\ra_{\sysC_e},
			\] where elements of $H$ are identified with respective  elements of $G$. \label{it:initializationstep}
			\State For every edge $e\in \calL_1$, apply the gate $U$ to the qudits $\sysA_{e^+}$ and $\sysC_e$. For every edge $e\in \calL_1$, apply the gate $W$ to the qudits $\sysA_{e^-}$ and $\sysC_e$.  \label{it:controlledgatesstep} \label{it:wgateapplication}
			\State Measure each qudit $\sysA_v$, $v\in\cL_0$ in the eigenbasis of the generalized $s$-dimensional Pauli-$X$ operator. Let $m\in (\ZZ_s)^{\times |\calL_0|}$ be the measured outcome. \label{it:stepmeasurementstep}
			\State For every edge $e \in \calL_1$, apply the gate $E$ to each pair of qudits $\sysC_e$ and $\sysB_e$. \label{it:gateEstep}
			\State  For each edge $e \in \cL_1$, apply the gate
			\[
			Z^{\sum_{v\in \calL_0} k_v \sigma_{v,e}},
			\]
			to qudit~$\sysB_e$, where $Z=\sum_{j\in \ZZ_d} e^{2\pi i j/ d} |j\ra\la j|$ is the generalized $d$-dimensional
			Pauli-$Z$ operator  and where  $k_v \in \mathbb{Z}_d$ is such that 
			\begin{align}
				m_v = k_v \left( s / d \right) {\pmod s}\ .\end{align}
			\label{it:correctionstep}
			\State For every edge $e \in \calL_1$, apply the gate $E^{-1}$ to each pair of qudits $\sysC_e$ and $\sysB_e$. \label{it:gateEinversestep}
			
		\end{algorithmic}
	\end{mdframed}
	\caption{The algorithm (circuit) preparing~$\Psi(G)$ on system~$\sysC$. The definitions of the gates $U, V$ and $E$ are given in Eqs.~\eqref{eq:defUWgates} and~\eqref{eq:defEgate}. \label{fig:preparationcircuit}}
\end{figure}

\begin{figure}
	\subfigure[Step 1: Extended lattice and initial state used. The qudits on~$\sysC$ are in the ground state~$\Psi(H)$ 
	associated with $H$, 
	where we embed~$\mathbb{C}^{|H|}\cong\mathsf{span} \{\ket{h}\}_{h\in H}\subset\mathbb{C}^{|G|}$. \label{fig:initialstatesolvablepic} ]{\includegraphics[width=9cm]{02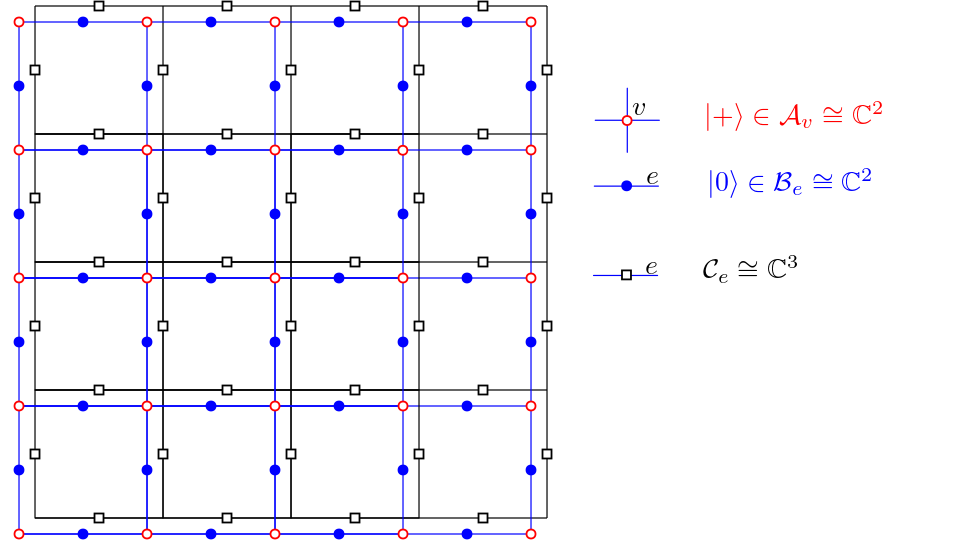}}\qquad
	\subfigure[Step 2: Application of the two-qudit gates $U$ and $W$.\label{fig:applicationUWLayersolvablepic}]{\includegraphics[width=9cm]{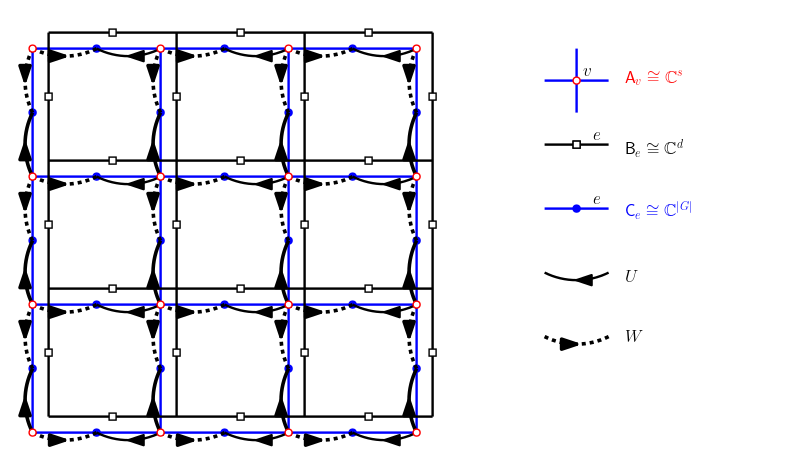}}\\
	\subfigure[Step 3: Each qudit in $\sysA$ is measured in the computational basis. \label{fig:measurementsolvablepic} ]{\includegraphics[width=9cm]{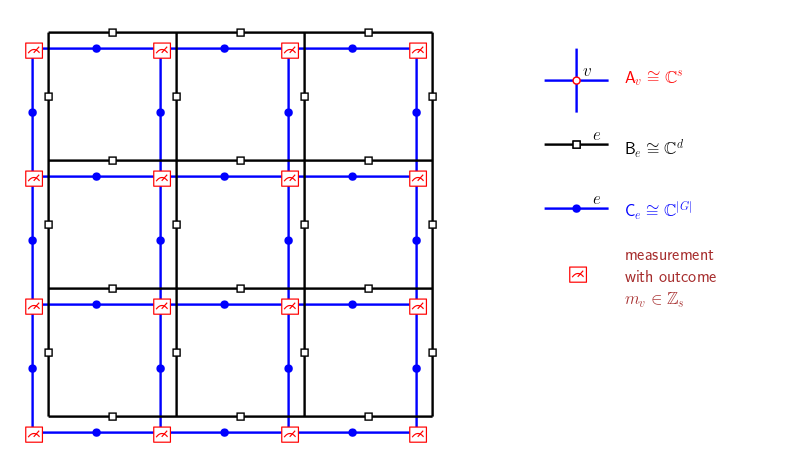}}\qquad
	\subfigure[Step 4: Application of the two-qudit gates $E$.  \label{fig:applicationELayersolvable}]{\includegraphics[width=9cm]{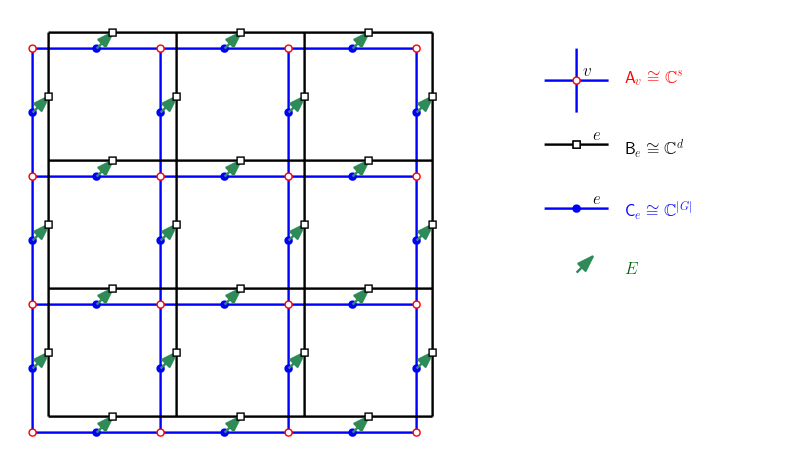}}\\
	\subfigure[Step 5: A correction operation is applied to each qudit in $\sysB$.  \label{fig:correctionsolvablepic}]{\includegraphics[width=9cm]{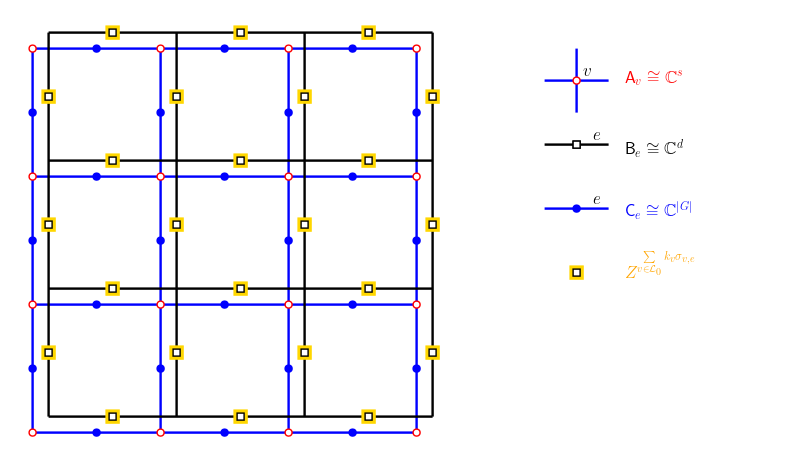}}\qquad
	\subfigure[Step 6: Application of the two-qudits gates $E^{-1}$. The qudits in $\sysA$ and $\sysB$ can be traced out. \label{fig:applicationEInverseLayersolvablepic}]{\includegraphics[width=9cm]{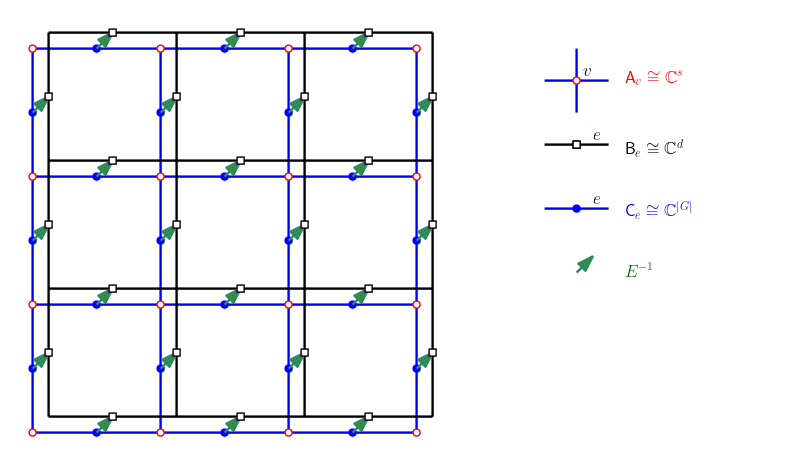}}
	\caption{Preparation of a ground state of the quantum double model associated with a solvable group using the procedure given in Figure~\ref{fig:preparationcircuit}. \label{fig:preparationgroundstatesolvablegroup}}
\end{figure}

\begin{theorem}\label{thm:mainpreparation}
	Let $G,H$ and $a\in G$ be as described above. 
	The algorithm given in Fig.~\ref{fig:preparationcircuit} prepares the state~$\Psi(G)$ (cf.~\eqref{ground_state})
	on register~$\sysC$.
\end{theorem}

Before proving Theorem~\ref{thm:mainpreparation}, we shall first rewrite the quantum double state $|\Psi(G)\ra$ in terms of $H$-valued and $\mathbb{Z}_s$-valued forms. 

\begin{lemma}
	\label{lem:exactforms}  
	The quantum double state for the group $G$ can be written as
	\begin{align}
		\label{eq:Gstate_via_H}
		|\Psi(G)\ra \sim \sum_{\theta \in \calF_0(\ZZ_s)}\; \sum_{\phi \in \calF_0(H)}\;
		\bigotimes_{e\in \calL_1} |a^{\theta(e^+)}\phi(e^+)\phi(e^{-})^{-1} a^{-\theta(e^-)} \ra.
	\end{align}
\end{lemma}
\begin{proof}
	By definition, we have 
	\[
	|\Psi(G)\ra \sim \sum_{\eta \in \calF_0(G)}\; \bigotimes_{e\in \calL_1} |\eta(e^+) \eta(e^{-})^{-1}\ra.
	\]
	Consider some vertex $v\in \calL_0$. Then $\eta(v)$ can be uniquely written as
	$\eta(v) = a^{\theta(v)} \phi(v)$ for some $\theta(v)\in \ZZ_d$ and $\phi(v)\in H$.
	Since the summation over $\eta(v) \in G$ is equivalent to independent summations
	over $\theta(v)\in \ZZ_d$ and $\phi(v)\in H$, we get
	\[
	|\Psi(G)\ra \sim \sum_{\theta\in \calF_0(\ZZ_d)}\; \sum_{\phi \in \calF_0(H)} \; 
	\bigotimes_{e\in \calL_1} |a^{\theta(e^+)} \phi(e^+) \phi(e^{-})^{-1} a^{-\theta(e^-)}\ra.
	\]
	Pick any  form $\omega \in \calF_0(\ZZ_{s/d})$
	and perform a change of variables $\phi(v) \gets a^{d\cdot \omega(v)} \phi(v)$
	in the summation over $\phi \in \calF_0(H)$. Note that the extra factor $a^{d\cdot \omega(v)}$
	belongs to $H$ for any $\omega$ since $a^d\in H$. Thus we get
	\[
	|\Psi(G)\ra \sim \sum_{\theta\in \calF_0(\ZZ_d)}\; \sum_{\phi \in \calF_0(H)} \; 
	\bigotimes_{e\in \calL_1} |a^{\theta(e^+)+d\cdot \omega(e^+)} \phi(e^+) \phi(e^{-})^{-1} a^{-\theta(e^-)-d\cdot \omega(e^-)}\ra
	\]
	for any form $\omega \in \calF_0(\ZZ_{s/d})$.  Symmetrizing this expression over all $\omega$'s one gets
	\[
	|\Psi(G)\ra \sim \sum_{\omega\in \calF_0(\ZZ_{s/d})} \; \sum_{\theta\in \calF_0(\ZZ_d)}\; \sum_{\phi \in \calF_0(H)} \; 
	\bigotimes_{e\in \calL_1} |a^{\theta(e^+)+d\cdot \omega(e^+)} \phi(e^+) \phi(e^{-})^{-1} a^{-\theta(e^-)-d\cdot \omega(e^-)}\ra.
	\]
	For any forms $\omega\in \calF_0(\ZZ_{s/d})$ and $\theta\in \calF_0(\ZZ_d)$ 
	define a form $\theta'\in \calF_0(\ZZ_s)$ such that 
	\[
	\theta'(v)= \theta(v)+d\cdot \omega(v) {\pmod s}
	\]
	for all $v\in \calL_0$.
	Clearly, the summation over all pairs $\omega\in \calF_0(\ZZ_{s/d})$ and  $\theta \in \calF_0(\ZZ_d)$ 
	is equivalent to the summation over all $\theta'\in \calF_0(\ZZ_s)$. Thus we get
	\[
	|\Psi(G)\ra \sim \sum_{\theta'\in \calF_0(\ZZ_s)} \;  \sum_{\phi \in \calF_0(H)} \; 
	\bigotimes_{e\in \calL_1} |a^{\theta'(e^+)} \phi(e^+) \phi(e^{-})^{-1} a^{-\theta'(e^-)}\ra.
	\]
	This is equivalent to  \eqref{eq:Gstate_via_H}.
\end{proof}

With expression~\eqref{eq:Gstate_via_H} for $\Psi(G)$, we can analyze the algorithm given in Fig.~\ref{fig:preparationcircuit} as follows. 
\begin{proof}[Proof of Theorem~\ref{thm:mainpreparation}:] 
	After the initialization step~\ref{it:initializationstep} the combined state of $\sysA\sysB\sysC$ has the form
	\[
	|\Psi_1\ra \sim \sum_{\theta \in \calF_0(\ZZ_s)} \; \sum_{\phi \in \calF_0(H)} \;
	\bigotimes_{v\in \calL_0} |\theta(v)\ra_{\sysA_v}
	\bigotimes_{e\in \calL_1}  |0\ra_{\sysB_e}\otimes 
	|\der\phi(e)\ra_{\sysC_e}.
	\]
	Here the subscripts $\sysA,\sysB,\sysC$ of a qudit state indicate which register holds the qudit. 
	
	After the application of the $U$ and $W$ gates in step~\ref{it:controlledgatesstep}, the combined state of $\sysA\sysB\sysC$ has the form
	\[
	|\Psi_2\ra \sim 
	\sum_{\theta \in \calF_0(\ZZ_s)} \; \sum_{\phi \in \calF_0(H)} \;
	\bigotimes_{v\in \calL_0} |\theta(v)\ra_{\sysA_v}
	\bigotimes_{e\in \calL_1}  |0 \ra_{\sysB_e}\otimes 
	|a^{\theta(e^+)} \phi(e^+) \phi(e^-)^{-1} a^{- \theta(e^-)}\ra_{\sysC_e}.
	\]
	Consider a  generalized Pauli operator
	$X\, : \, \CC^s \to \CC^s$ defined as 
	\begin{align}
		\label{eq:pauliX}
		X=\sum_{j\in \ZZ_s} |j+1 {\pmod s} \ra \la j|.
	\end{align}
	We claim that the state $\ket{\Psi_2}$ is an eigenvector of the Pauli operator $(X_{\sysA_v})^d$
	with the eigenvalue $+1$ for any vertex $v\in \calL_0$.
	Indeed, we have 
	\[
	(X_{\sysA_v})^d|\Psi_2\ra = 
	\sum_{\theta \in \calF_0(\ZZ_s)}\; 
	\sum_{\phi \in \calF_0(H)} \bigotimes_{w\in \calL_0} |\theta(w) + d\delta_{v,w} {\pmod s}\ra_{\sysA_w}  \bigotimes_{e\in \calL_1} \ket{0}_{\sysB_e}
	| a^{\theta(e^+)} \phi(e^+) \phi(e^{-})^{-1} a^{-\theta(e^-)}\ra_{\sysC_e}.
	\]
	Here $\delta_{v,w}=1$ if $v=w$ and $\delta_{v,w}=0$ otherwise.
	The sum over $\theta\in \calF_0(\ZZ_s)$ is unchanged if we perform a change of variables
	$\theta(v)\gets \theta(v)+d  {\pmod s}$ at the vertex $v\in \calL_0$. Thus
	\[
	(X_{\sysA_v})^d|\Psi_2\ra = 
	\sum_{\theta \in \calF_0(\ZZ_s)}\; 
	\sum_{\phi \in \calF_0(H)} \bigotimes_{w\in \calL_0} |\theta(w) \ra_{\sysA_w}  \bigotimes_{e\in \calL_1} \ket{0}_{\sysB_e}
	| a^{\theta(e^+)-d\delta_{v,e^+}} \phi(e^+) \phi(e^{-})^{-1} a^{-\theta(e^-)+d \delta_{v,e^-}}\ra_{\sysC_e}.
	\]
	The sum over $\phi\in \calF_0(H)$ is unchanged if we 
	perform a change of variables $\phi(v)\gets a^{d } \phi(v)$ at the vertex $v\in \calL_0$.
	Here we noted that $a^d\in H$, so this change of variables is well defined.
	Thus
	\[
	(X_{\sysA_v})^d|\Psi_2\ra = 
	\sum_{\theta \in \calF_0(\ZZ_s)}\; 
	\sum_{\phi \in \calF_0(H)} \bigotimes_{w\in \calL_0} |\theta(w) \ra_{A_w}  \bigotimes_{e\in \calL_1} \ket{0}_{\sysB_e}
	| a^{\theta(e^+)} \phi(e^+) \phi(e^{-})^{-1} a^{-\theta(e^-)}\ra_{\sysC_e}=|\Psi_2\ra.
	\]
	This confirms that $|\Psi_2\ra$ is an eigenvector of $(X_{\sysA_v})^d$ with the eigenvalue $+1$.  
	
	Following the measurement in step~\ref{it:stepmeasurementstep} with measurement outcome $m\in (\ZZ_s)^{\times |\calL_0|}$, the joint state of $\sysB\sysC$ has the form
	\[
	|\Psi_3(m)\ra \sim 
	\sum_{\theta \in \calF_0(\ZZ_s)}\; 
	\sum_{\phi \in \calF_0(H)} \; e^{-(2\pi i/s) \sum_{v\in \calL_0} m_v \theta(v)}  \bigotimes_{e\in \calL_1} \ket{0}_{\sysB_e}
	| a^{\theta(e^+)} \phi(e^+) \phi(e^{-})^{-1} a^{-\theta(e^-)}\ra_{\sysC_e}.
	\]
	
	Since $|\Psi_2\ra$ is  an eigenvector of the Pauli operator $(X_{A_v})^d$ with the eigenvalue $+1$, one must have
	$e^{(2\pi i/s)dm_v} = 1$ for any $v\in \calL_0$.
	This is only possible if $m_v$ is an integer multiple of $s/d$, that is,
	\begin{align}
		\label{eq:m_v}
		m_v = k_v(s/d) {\pmod s} \quad \mbox{for some $k_v\in \ZZ_d$}.
	\end{align}
	Thus 
	\[
	|\Psi_3(m)\ra \sim 
	\sum_{\theta \in \calF_0(\ZZ_s)}\; 
	\sum_{\phi \in \calF_0(H)} \; e^{-(2\pi i/d) \sum_{v\in \calL_0} k_v \theta(v)}  \bigotimes_{e\in \calL_1} \ket{0}_{\sysB_e}
	| a^{\theta(e^+)} \phi(e^+) \phi(e^{-})^{-1} a^{-\theta(e^-)}\ra_{\sysC_e}.
	\]
	By comparing this and \eqref{eq:Gstate_via_H}, one concludes
	that $|\Psi_3(m)\ra$ coincides with the desired quantum double state $|\Psi(G)\ra$
	modulo an unwanted phase factor
	$e^{-(2\pi i/d) \sum_{v\in \calL_0} k_v \theta(v)}$. 
	
	Following the application of the $E$-gate in step~\ref{it:gateEstep}, the joint state of $\sysB\sysC$ has the form 
	\[
	|\Psi_4(m)\ra \sim 
	\sum_{\theta \in \calF_0(\ZZ_s)}\; 
	\sum_{\phi \in \calF_0(H)} \; e^{-(2\pi i/d) \sum_{v\in \calL_0} k_v \theta(v)}  \bigotimes_{e\in \calL_1} |\theta(e^+) - \theta(e^-) {\pmod d}\ra_{\sysB_e}
	| a^{\theta(e^+)} \phi(e^+) \phi(e^{-})^{-1} a^{-\theta(e^-)}\ra_{\sysC_e} , 
	\]
	where we used the fact that \[
	j(a^{\theta(e^+)} \phi(e^+) \phi(e^{-})^{-1} a^{-\theta(e^-)})=\theta(e^+) - \theta(e^-) {\pmod d}
	\]
	for any forms  $\theta \in \calF_0(\ZZ_s)$ and $\phi\in \calF_0(H)$.

	Following the correction step~\ref{it:correctionstep}, the joint state of $\sysB\sysC$ has the form 
	\[
	|\Psi_5(m)\ra \sim 
	\sum_{\theta \in \calF_0(\ZZ_s)}\; 
	\sum_{\phi \in \calF_0(H)} \; e^{-(2\pi i/d) \sum_{v\in \calL_0} k_v \theta(r)}  \bigotimes_{e\in \calL_1} |\theta(e^+) - \theta(e^-) {\pmod d}\ra_{\sysB_e}
	| a^{\theta(e^+)} \phi(e^+) \phi(e^{-})^{-1} a^{-\theta(e^-)}\ra_{\sysC_e} 
	\]
	due to the fact that each term in $|\Psi_4(m)\ra$ with a fixed $\theta$ picks up an extra phase factor
	\[
	e^{ (2\pi i/d) \sum_{e\in \calL_1} \sum_{v\in \calL_0} k_v \sigma_{v,e} z(e)}
	=e^{ (2\pi i/d)  \sum_{v\in \calL_0} k_v (\theta(v) - \theta(r))}
	\]
	see~\eqref{eq:parallel_transport}. Next 
	we claim that 
	\begin{align}
		\label{eq:mvexpr}
		e^{(2\pi i/s) \sum_{v\in \calL_0} m_v} = 1 
	\end{align}
	for any measurement outcome $m$. 
	Indeed, it suffices to check that the state $|\Psi_2\ra$ is stabilized by the operator
	\[
	S= \prod_{v\in \calL_0} X_{\sysA_v}.
	\]
	Here $X\, : \, \CC^s \to \CC^s$ is defined in~\eqref{eq:pauliX}.
	Indeed, applying this operator
	to $|\Psi_2\ra$ is equivalent to a change of variable $\theta(v)\gets \theta(v)+1 {\pmod s}$
	for every vertex $v\in \calL_0$ in the summation over $\theta \in \calF_0(\ZZ_s)$.
	Thus we get
	\[
	S|\Psi_2\ra = 
	\sum_{\theta \in \calF_0(\ZZ_s)}\; 
	\sum_{\phi \in \calF_0(H)} \bigotimes_{v\in \calL_0} |\theta(v)\ra_{\sysA_v} \bigotimes_{e\in \calL_1} \ket{0}_{\sysB_e}
	| a^{\theta(e^+)-1} \phi(e^+) \phi(e^{-})^{-1} a^{-\theta(e^-)+1}\ra_{\sysC_e}.
	\]
	Perform a change of variables $\phi(v)\gets a \phi(v) a^{-1}$ for each $v\in \calL_0$.
	This change of variables is well-defined since $\phi$ is an $H$-valued form and $H$ is a normal subgroup,
	that is, $a H a^{-1}=H$.
	We get 
	\[
	S|\Psi_2\ra = 
	\sum_{\theta \in \calF_0(\ZZ_s)}\; 
	\sum_{\phi \in \calF_0(H)} \bigotimes_{v\in \calL_0} |\theta(v)\ra_{\sysA_v}  \bigotimes_{e\in \calL_1} \ket{0}_{\sysB_e}
	| a^{\theta(e^+)} \phi(e^+) \phi(e^{-})^{-1} a^{-\theta(e^-)}\ra_{\sysC_e}=|\Psi_2\ra,
	\]
	as claimed. Thus the measured eigenvalues $e^{(2\pi i/s) m_v}$ of the Pauli operators
	$X_{\sysA_v}$ must obey \eqref{eq:mvexpr}. Combining \eqref{eq:mvexpr} and \eqref{eq:m_v} one infers that 
	\[
	e^{-(2\pi i/d) \sum_{v\in \calL_0} k_v \theta(r)}=1
	\]
	and thus
	\[
	|\Psi_5(m)\ra \sim 
	\sum_{\theta \in \calF_0(\ZZ_s)}\; 
	\sum_{\phi \in \calF_0(H)} \;   \bigotimes_{e\in \calL_1} |\theta(e^+) - \theta(e^-) {\pmod d}\ra_{\sysB_e}
	| a^{\theta(e^+)} \phi(e^+) \phi(e^{-})^{-1} a^{-\theta(e^-)}\ra_{\sysC_e} .
	\]
	
	Finally, following the application of $E^{-1}$ in step~\ref{it:gateEinversestep}, the joint state of $\sysB\sysC$ has the form 
	\[
	|\Psi_6(m)\ra \sim 
	\sum_{\theta \in \calF_0(\ZZ_s)}\; 
	\sum_{\phi \in \calF_0(H)} \;   \bigotimes_{e\in \calL_1} |0\ra_{\sysB_e}
	| a^{\theta(e^+)} \phi(e^+) \phi(e^{-})^{-1} a^{-\theta(e^-)}\ra_{\sysC_e} .
	\]
	Discarding the qudits $\sysB_e$, one gets the desired state
	$|\Psi(G)\ra$ in the register $\sysC$, expressed as in \eqref{eq:Gstate_via_H}.

\end{proof}

We note that Steps~4,5,6 in our state preparation circuit can be combined
and all qudits $\sysB_e$ can be removed from the construction. Indeed, define a diagonal unitary gate 
$V\, : \, \CC^{|G|} \to \CC^{|G|}$ such that 
\[
V|a^j h\ra = e^{2\pi i j/d} |a^j h\ra \quad \mbox{for $j\in \ZZ_d$ and $h\in H$}.
\]
One can easily check that Steps~4,5,6 in Fig.~\ref{fig:preparationcircuit} are equivalent to applying the gate
\[
V^{\sum_{v\in \calL_0} k_v \sigma_{v,e}}
\]
to each qudit $\sysC_e$. Now the register $\sysB$ plays no role and can be removed.
This simplification also reduces the number of entangling gates used by the algorithm.
We chose to keep the register $\sysB$ since it simplifies the analysis of the algorithm.

Theorem~\ref{thm:mainpreparation} shows that the algorithm shown in Fig.~\ref{fig:preparationcircuit} reduces the problem of preparing~$\Psi(G)$ to that of preparing~$\Psi(H)$ (up to a constant-depth adaptive local circuit). It remains to show how to prepare the state $|\Psi(H)\ra$.  We show that this can be done recursively if the group~$G$ is solvable. This gives the main result of this section:

\begin{corollary}
	Let $G$ be a solvable group. Then there is a constant-depth adaptive local circuit which prepares~$\Psi(G)$.
\end{corollary}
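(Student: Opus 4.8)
The plan is to apply Theorem~\ref{thm:mainpreparation} iteratively. Since $G$ is solvable it admits a subnormal series
\[
G = G_0 \triangleright G_1 \triangleright \cdots \triangleright G_n = \{\idG\}
\]
in which every quotient $G_i/G_{i+1}$ is cyclic; such a series exists for any finite solvable group (take the derived series, whose quotients are finite abelian, and refine each abelian quotient into a chain with cyclic --- indeed prime-order --- quotients, pulling the refinements back along the quotient maps). The key point is that the length $n$ is determined by $G$ alone and is independent of the lattice $\calL$ underlying the quantum double model.

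I would argue by induction on $n$, with $\calL$ fixed. For $n=0$ we have $G=\{\idG\}$, so $|G|=1$, the Hilbert space $\calH=(\CC^1)^{\otimes|\calL_1|}$ is one-dimensional, and $\Psi(\{\idG\})$ is the trivial product state, preparable in depth $0$. For the inductive step, set $H:=G_1$; then $H$ is solvable with subnormal series $H=G_1\triangleright\cdots\triangleright G_n=\{\idG\}$ of length $n-1$, and $G/H\cong\ZZ_d$ is cyclic, so $H$ together with the associated $a\in G$ and $d$ play the role required in Theorem~\ref{thm:mainpreparation}. By the induction hypothesis there is a constant-depth adaptive local circuit that prepares $|\Psi(H)\ra$ on $|\calL_1|$ qudits of local dimension $|H|$ (possibly using additional ancillary qudits that it discards); run this circuit on register $\sysC$. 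Its output is exactly the state required by the initialization step~\eqref{it:initializationstep} of the algorithm of Fig.~\ref{fig:preparationcircuit}. Executing steps~\eqref{it:controlledgatesstep}--\eqref{it:correctionstep} of that algorithm then yields, by Theorem~\ref{thm:mainpreparation}, the state $|\Psi(G)\ra$ on $\sysB\sysC$ under the identification of the local space $\CC^{|G|}$ of each edge with $\CC^d\otimes\CC^{|H|}$, where $\CC^{|H|}$ itself carries the tensor decomposition produced by the inner preparation; unwinding all $n$ levels, each edge's space is $\CC^{|G|}\cong\bigotimes_{i=1}^{n}\CC^{d_i}$ with $d_i=|G_{i-1}/G_i|$. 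The auxiliary qudits introduced at this outermost level --- register $\sysA$ on vertices and register $\sysB$ on edges --- are disentangled and traced out exactly as in the theorem.

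Finally one checks that the composite circuit is still constant-depth, geometrically local, and adaptive. Steps~\eqref{it:controlledgatesstep} and~\eqref{it:wgateapplication} are $O(1)$ layers of nearest-neighbor two-qudit gates (the constant governed by the maximum vertex degree of $\calL$), step~\eqref{it:stepmeasurementstep} is one layer of single-qudit measurements, and step~\eqref{it:correctionstep} is one layer of single-qudit generalized Pauli gates whose exponents come from an efficient --- though non-local --- classical computation on a fixed spanning tree of $\calL$; these properties are inherited from Theorem~\ref{thm:mainpreparation}. Hence if $D(H)$ denotes the depth of the inductive circuit for $\Psi(H)$, the circuit just built has depth at most $D(H)+c$ for an absolute constant $c$ bounding the number of layers contributed by steps~\eqref{it:controlledgatesstep}--\eqref{it:correctionstep}, so $D(G)\le cn=O(1)$ because $n=n(G)$. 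The one genuinely non-routine ingredient --- the one I would be careful to state explicitly --- is precisely this lattice-independence of the recursion depth $n$: it is what guarantees that the hidden constant does not scale with the system size. Everything else is bookkeeping of which register holds which qudit across the nested applications of Theorem~\ref{thm:mainpreparation}.
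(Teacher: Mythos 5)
Your proposal is correct and follows essentially the same route as the paper: recursive application of Theorem~\ref{thm:mainpreparation} along a subnormal series with cyclic quotients, with the key (and correctly emphasized) observation that the recursion depth depends only on $G$ and not on the lattice. The only cosmetic difference is the base case --- the paper stops at a cyclic $H$ and prepares $\Psi(H)$ as a qudit surface code via syndrome measurement plus Pauli correction, whereas you recurse one step further to the trivial group, which amounts to the same measurement-based preparation.
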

\begin{proof}
	Since we assumed that $G$ is a solvable group,  it has a normal subgroup such that~$G/H$ is cyclic. The group~$H$ is either a cyclic group or  has a normal subgroup $H'$ such that the factor group
	$H/H'$ is cyclic. 
	In the former case one can view $|\Psi(H)\ra$ as a qudit version of the standard
	surface code and use the syndrome measurement circuit 
	followed by a generalized Pauli correction operator to prepare $|\Psi(H)\ra$
	by a local constant-depth adaptive circuit.
	In the latter case one can apply the state preparation circuit described  above
	to the state $|\Psi(H)\ra$  in a recursive manner.
\end{proof}

\section{Implementing group multiplication in solvable groups}
In this section, we consider various unitaries realizing group multiplication of several group elements.
Our constructions show that if the group~$G$ is solvable, then these operations can be implemented in constant depth adaptively irrespective of the number of group elements multiplied. Furthermore, all  involved quantum operations (i.e., auxiliary state preparations, unitaries and measurements) are $1D$-local, i.e., between nearest neighbor qudits when the qudits are arranged on a line. We say that the implementation is $1D$-local.

\subsection{Group multiplication and adaptive constant-depth local implementations}
Let $L\in \mathbb{N}$ be fixed. Let $\sysC_1,\ldots,\sysC_L,\sysC_{L+1}$ be $n+1$~systems each isomorphic to~$\mathbb{C}^{|G|}$.  We write $\sysC^L=\sysC_1\cdots \sysC_L\cong(\mathbb{C}^{|G|})^{\otimes L}$ and define the (left)
group multiplication unitary~$\multgate^{\Rightarrow}_{\sysC^L\rightarrow\sysC_{L+1}}$ by its action on basis states as
\begin{align}
\multgate^{\Rightarrow}_{\sysC^L\rightarrow\sysC_{L+1}} (\ket{g_1,\ldots,g_n}_{\sysC^L}\ket{g_{n+1}}_{\sysC_{L+1}})&=\ket{g_1,\ldots,g_n}_{\sysC^L}\ket{\left(\prod_{j=1}^n g_j\right)g_{n+1}}_{\sysC_{L+1}}\ 
\end{align}
where $\prod_{j=1}^n g_j=g_1\cdots g_n$.
We also define the left- and right multiplication gates
\begin{align}
  \multgate_{\sysC_{L+1}\rightarrow\sysC^L}^{\Rightarrow} (\ket{g_1,\ldots,g_n}_{\sysC^L}\ket{g_{n+1}}_{\sysC_{L+1}})&=\ket{g_{n+1}g_1,\ldots,g_{n+1}g_n}_{\sysC^L}\ket{g_{n+1}}_{\sysC_{L+1}}\\
\multgate_{\sysC_{L+1}\rightarrow\sysC^L}^{\Leftarrow} (\ket{g_1,\ldots,g_n}_{\sysC^L}\ket{g_{n+1}}_{\sysC_{L+1}})&=\ket{g_1g_{n+1},\ldots,g_ng_{n+1}}_{\sysC^L}\ket{g_{n+1}}_{\sysC_{L+1}}\ .
\end{align}
Let us also introduce a unitary $V_{\sysC^L}$ that acts as
\begin{align}
  V_{\sysC^L} \ket{y_1, \hdots, y_L} &= \ket{\hat{y}_1, \hdots, \hat{y}_L}\qquad\textrm{ where }\qquad  \hat{y}_j=\prod_{k=1}^j y_j\ .
\end{align}

In the following, we consider the implementation of these unitaries and their inverses. Let $S$ denote the single-qudit inversion gate
defined by
\begin{align}
S\ket{h}&=\ket{-h}\qquad\textrm{ for }\qquad h\in\mathbb{Z}_d\ ,
\end{align}
and let $\multgate^{\Rightarrow}_{A\rightarrow B}$ be the two-qudit multiplication gate defined by 
\begin{align}
\multgate^{\Rightarrow}_{A\rightarrow B}(\ket{g}_A\otimes\ket{h}_B)&=\ket{g}_{A}\otimes\ket{gh}_B\qquad\textrm{ for }\qquad g,h\in G\ .
\end{align}
 Then we have the following relations:
\begin{align}
  \begin{matrix}
  \multgate^{\Rightarrow}_{\sysC^L\rightarrow\sysC_{L+1}}&=&
V_{\sysC^L}^\dagger \multgate^{\Rightarrow}_{\sysC_L\rightarrow\sysC_{L+1}}
  V_{\sysC^L}\\
  \multgate_{\sysC_{L+1}\rightarrow\sysC^L}^{\Rightarrow} &=& S^{\otimes n+1}\multgate_{\sysC_{L+1}\rightarrow\sysC^L}^{\Leftarrow}  S^{\otimes n+1}\\
  (\multgate_{\sysC_{L+1}\rightarrow\sysC^L}^{\Leftarrow})^\dagger &=&S_{\sysC_{L+1}}\multgate_{\sysC_{L+1}\rightarrow\sysC^L}^{\Leftarrow} S_{\sysC_{L+1}} \ .
  \end{matrix}
  \label{eq:implementationcxsys}
  \end{align}
We will show below that the unitaries~$V_{\sysC^L}$, $V_{\sysC^L}^\dagger$ and $\multgate_{\sysC_{L+1}\rightarrow\sysC^L}^{\Leftarrow} $ can be implemented in constant adaptive depth with $1D$-local operations if the group~$G$ is solvable.
Together with~\eqref{eq:implementationcxsys}
and because $S$ is a single-qudit unitary and $\multgate^{\Rightarrow}_{\sysC_L\rightarrow\sysC_{L+1}}$ is a local $2$-qudit gate, this implies the following.
\begin{lemma}\label{lem:vcxgatestoconstruct}
  Let $G$ be a solvable group. Then the unitaries~$V_{\sysC^L},\multgate_{\sysC_{L+1}\rightarrow\sysC^L}^{\Leftarrow},  \multgate_{\sysC_{L+1}\rightarrow\sysC^L}^{\Rightarrow} $ and their inverses  can be implemented in adaptive constant depth
  with local operations when the systems are arranged as $(\sysC_1,\ldots,\sysC_{L+1})$ on a line.
\end{lemma}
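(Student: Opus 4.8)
The plan is to reduce everything to the construction of a constant-depth adaptive $1D$-local circuit for the three ``primitive'' unitaries $V_{\sysC^L}$, $V_{\sysC^L}^\dagger$, and $\multgate^{\Leftarrow}_{\sysC_{L+1}\rightarrow\sysC^L}$; once these are in hand, relations~\eqref{eq:implementationcxsys} immediately yield the remaining gates, since $S$ is a single-qudit gate and $\multgate^{\Rightarrow}_{\sysC_L\rightarrow\sysC_{L+1}}$ is a local two-qudit gate (and conjugating a constant-depth $1D$-local circuit by $1D$-local constant-depth circuits keeps both properties). So the entire content of the lemma is the construction of these primitives, and that is where I would focus.

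First I would treat the abelian base case to set up intuition: when $G=\ZZ_d$, the gate $V_{\sysC^L}$ is exactly the ``prefix-sum'' (running-sum) transformation $\ket{y_1,\dots,y_L}\mapsto \ket{\hat y_1,\dots,\hat y_L}$, which is an iterated fan-out/CNOT-type operation. This is precisely the kind of operation that H\o{}yer--\v{S}palek~\cite{hoyerspalek} showed can be done in constant depth with an adaptive circuit: prepare a GHZ/cat-state ancilla register along the line (itself a constant-depth adaptive task, by measuring out a $1D$ chain of Bell pairs and applying a classically-computed Pauli-$X$ correction), use it to fan out the needed copies, apply the local group-multiplication gates in parallel, then uncompute the ancillas by measurement and a classically-computed $Z$-type phase correction. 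The non-locality of the prefix sums is absorbed entirely into the non-local classical post-processing of measurement outcomes, exactly as in the ground-state preparation circuit of Fig.~\ref{fig:preparationcircuit}. The gate $\multgate^{\Leftarrow}_{\sysC_{L+1}\rightarrow\sysC^L}$, which multiplies a single group element into all $L$ registers, is the even simpler ``broadcast-and-multiply'' operation and is handled by the same cat-state/fan-out technique.

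The hard part — and the main obstacle — is the non-abelian (solvable) case, where the prefix products $\hat y_j = y_1\cdots y_j$ do not decompose coordinatewise and one cannot naively fan out. Here I would use solvability exactly as in Section~\ref{sec:stateprepcircuit}: write $G = H\cup aH\cup\cdots\cup a^{d-1}H$ with $H\trianglelefteq G$ and $G/H\cong\ZZ_d$, so each $y\in G$ is $y=a^{\theta(y)}h(y)$. The key algebraic identity is that a product $a^{\theta_1}h_1\cdots a^{\theta_j}h_j$ can be rewritten, by repeatedly pushing the $a$-powers to the left through the $H$-elements (using normality of $H$, i.e.\ $a^{-\theta}h a^{\theta}\in H$), as $a^{\theta_1+\cdots+\theta_j}\cdot \big(\text{a product of conjugates of the }h_i\text{ in }H\big)$. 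Thus the $\ZZ_d$-part of the prefix product is just an abelian prefix sum of the $\theta_i$ (done by the abelian primitive above), and the $H$-part is again a $V$-type / multiplication-type operation, but on $H$-qudits and with the inputs first twisted by the gate $W\colon\ket{j}\otimes\ket{h}\mapsto\ket{j}\otimes\ket{a^jha^{-j}}$ already introduced in Section~\ref{sec:stateprepcircuit}. This is a genuine reduction: the problem for $G$ is solved by a constant number of adaptive $1D$-local layers plus the same problem for the strictly smaller solvable group $H$. Iterating down the derived series of $G$ — whose length is a constant independent of $L$ — terminates at a cyclic group, handled by the abelian base case, and the total depth is the (constant) composition length times the constant cost of one reduction step.

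I would organize the write-up as: (1) state the relations~\eqref{eq:implementationcxsys} reduce the lemma to the three primitives; (2) prove the abelian case of the primitives via the cat-state fan-out construction, carefully noting that all quantum gates are nearest-neighbor on the line $(\sysC_1,\dots,\sysC_{L+1})$ augmented with a local ancilla chain, and that the only non-local ingredient is classical; (3) prove the solvable case by the $G\leadsto H$ reduction using the coset decomposition and the gate $W$, exactly paralleling the proof of Theorem~\ref{thm:mainpreparation}, then induct on the derived length. The bookkeeping of which conjugates of which $h_i$ appear in the $H$-part of the prefix product — and checking it is itself expressible through $V_H$-type and $\multgate^{\Leftarrow}_H$-type gates after applying $W$ — is the step most likely to require care, but it is a finite algebraic manipulation and no new idea beyond normality of $H$ is needed.
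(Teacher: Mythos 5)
Your reduction is the same as the paper's: relations~\eqref{eq:implementationcxsys} cut the problem down to $V_{\sysC^L}$, $V_{\sysC^L}^\dagger$ and $\multgate^{\Leftarrow}_{\sysC_{L+1}\rightarrow\sysC^L}$, and the solvable case is handled by exactly the Barrington-style coset decomposition $g=a^\theta h$ with normality of $H$ used to push the $a$-powers left, recursing down to a cyclic group. The paper packages this as Theorem~\ref{thm:mainbarrington} and Lemma~\ref{lem:reductionHgroup} for unitaries $U_A^G$ indexed by \emph{monotone} matrices $A\in\mathsf{Mat}_{n\times n}(\{-1,0,1\})$; you should note that your ``push the $a$'s left'' identity needs this monotonicity restriction to survive the inverses appearing in $V^\dagger$ and $(\multgate^\Leftarrow)^\dagger$, but all three primitives are indeed monotone, so this is only a bookkeeping point. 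Where you genuinely diverge is the cyclic base case: the paper does \emph{not} use cat-state fan-out but gate teleportation, observing that $U$, $\Delta$ and the multitarget gate are $\ZZ_d$-Cliffords whose resource states $\ket{\Phi_U}$ have stabilizer generators that are geometrically $2$- or $3$-local in $1D$ (Eq.~\eqref{eq:stabpartsum}), hence preparable by one round of local measurements plus Pauli corrections. Your fan-out sketch is fine for the multitarget gate (a single broadcast), but for the prefix-sum gate $\ket{j_1,\ldots,j_n}\mapsto\ket{j_1,j_1+j_2,\ldots,\sum_k j_k}$ the phrase ``fan out the needed copies'' hides the real issue: position $j$ needs all of $y_1,\ldots,y_{j-1}$, i.e.\ $n$ overlapping extensive fan-outs, and arranging these $1D$-locally in constant depth is not automatic. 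This is precisely the point the paper's stabilizer-resource-state construction resolves, so if you keep the fan-out route you must either supply that argument or fall back on general Clifford gate teleportation with a proof that the relevant resource state is $1D$-locally preparable.
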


A more involved gate that will be of interest is the following. Let $\sysA,\sysC_1,\ldots,\sysC_L,\sysB_1,\ldots,\sysB_L$ be systems with Hilbert space~$\mathbb{C}^{|G|}$. For $h\in G$, define the controlled-unitary~$U^h_{\sysC^L\rightarrow \sysB^L}$ by its action
\begin{align}
CU^h_{\sysC^L\rightarrow \sysB^L} (\ket{y_1,\ldots,y_L}\ket{x_1,\ldots,x_L})&=\ket{y_1,\ldots,y_L}\ket{hx_1,\hat{y}_1^{-1}h\hat{y}_1 x_2,\hat{y}_2^{-1}h\hat{y}_2 x_3,\ldots,
\hat{y}_{L-1}^{-1}h\hat{y}_{L-1} x_L}\ \label{eq:Cuhdefinition}
\end{align}
on computational basis states, where  $\hat{y}_j=\prod_{k=1}^j y_j$. We denote its controlled version by 
\begin{align}
  CCU_{\sysA\sysC^L\rightarrow \sysB^L} &=\sum_{h\in G}\proj{h}_\sysA\otimes CU^h_{\sysC^L\rightarrow\sysB^L}\ . \label{eq:controlledcuxi}
\end{align} 
Here we show that when the systems are organized as $\left((\sysB_1\sysC_1),\ldots,(\sysB_L,\sysC_L),\sysA\right)$ on a line,
then the unitary $CCU_{A\sysC^L\rightarrow \sysB^L}$ can be implemented using nearest neighbor operations. We call such operations $1D$-local. More precisely, we show the following:
\begin{lemma}\label{lem:constandepthimplementationccu}
  Let $G$ be a solvable group. Then  the unitary~$CCU_{\sysA\sysC^L\rightarrow\sysB^L}$ and its inverse~$CCU_{\sysA\sysC^L\rightarrow\sysB^L}^\dagger$
  can be implemented in adaptive constant depth with  $1D$-local gates and measurements.
  \end{lemma}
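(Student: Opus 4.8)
The plan is to reduce the implementation of $CCU_{\sysA\sysC^L\rightarrow\sysB^L}$ to the group-multiplication gates of Lemma~\ref{lem:vcxgatestoconstruct}, which are already known to be implementable in adaptive constant depth with $1D$-local gates when $G$ is solvable. The key observation is the algebraic identity that for a fixed $h$, the $j$-th target register is updated by left-multiplication with the conjugate $\hat{y}_{j-1}^{-1}h\hat{y}_{j-1}$ (with the convention $\hat{y}_0=\idG$). So if I first compute the partial products $\hat{y}_j=\prod_{k=1}^j y_k$ into an auxiliary register using the gate $V$ (or rather a copy of it acting on fresh ancillas initialized from $\sysC^L$ via a fan-out/copy step, so as not to destroy the $y$'s, which must be returned unchanged), then I am left with the task of applying, controlled on $\sysA$ holding $h$ and on the ancilla holding $\hat{y}_{j-1}$, the map $x_j\mapsto (\hat{y}_{j-1}^{-1}h\hat{y}_{j-1})x_j$ on each $\sysB_j$ independently. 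Each such single-target update is a bounded-arity gate (it involves only the systems $\sysA$, one ancilla, and one $\sysB_j$), so the whole layer is depth $1$ once the required control values are physically adjacent.

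Concretely, the steps I would carry out are: (1)~use a $1D$-local constant-depth adaptive copy operation (the fan-out construction of H\o{}yer--Spalek, or simply a cascade of $\multgate^{\Rightarrow}$ gates analyzed via Lemma~\ref{lem:vcxgatestoconstruct}) to produce, adjacent to each pair $(\sysB_j,\sysC_j)$, an ancilla carrying $\hat{y}_{j-1}$; the gate $V_{\sysC^L}$ (or a mild variant acting on ancillas) does exactly the partial-product computation, and its $1D$-local constant-depth implementation is granted by Lemma~\ref{lem:vcxgatestoconstruct}; (2)~broadcast the control value $h$ from $\sysA$ to an ancilla next to each site $j$, again via a $1D$-local fan-out; (3)~apply in a single layer, on each site $j$, the bounded-arity gate that reads $h$ and $\hat{y}_{j-1}$, forms $\hat{y}_{j-1}^{-1}h\hat{y}_{j-1}$ (a fixed function of two group elements, hence a single $O(1)$-qudit unitary), and left-multiplies $x_j$ by it; (4)~uncompute the $h$-copies and the $\hat{y}_{j-1}$-ancillas by running steps (1)--(2) in reverse, which is again constant depth because inverses of the gates in Lemma~\ref{lem:vcxgatestoconstruct} are covered. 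Since each stage is $1D$-local constant-depth adaptive and there is an $O(1)$ number of stages, the composition is $1D$-local constant-depth adaptive. The inverse $CCU^\dagger$ is handled identically, replacing the final left-multiplication by left-multiplication with the inverse group element, i.e.\ conjugating by $S$ as in~\eqref{eq:implementationcxsys}.

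The main obstacle I anticipate is the bookkeeping for the fan-out/copy steps: I must make sure the $y$-values and the $\sysA$-value are returned exactly unchanged (since $CCU$ fixes them), which forces the compute--apply--uncompute structure, and I must verify that the auxiliary registers needed for the partial products $\hat{y}_{j-1}$ can be laid out along the line so that every gate invoked is genuinely between nearest neighbors. The partial-product computation intrinsically couples all $L$ sites, so the nontrivial content is precisely that $V_{\sysC^L}$ (equivalently the cascade of multiplication gates) admits a $1D$-local constant-depth adaptive implementation for solvable $G$ — but this is exactly the statement of Lemma~\ref{lem:vcxgatestoconstruct}, so once the reduction is set up carefully the rest is routine. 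A secondary point to check is that conjugation $\hat{y}^{-1}h\hat{y}$ followed by left-multiplication on a single target is literally one $O(1)$-qudit unitary (three qudits of dimension $|G|$), hence depth $1$; this is immediate since $|G|$ is a constant.
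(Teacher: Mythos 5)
Your proposal is correct: every ingredient you invoke (partial products via $V_{\sysC^L}$, fan-out of $h$, and their inverses) is covered by Lemma~\ref{lem:vcxgatestoconstruct}, the compute--apply--uncompute structure returns $\sysA$ and $\sysC^L$ unchanged as required, the three-qudit conjugate-and-multiply gate is a fixed $O(1)$-size unitary, and your observation that $(CCU)^\dagger=S_{\sysA}\,CCU\,S_{\sysA}$ (with $S$ the inversion gate) correctly handles the inverse. The paper follows the same overall strategy --- reduce everything to the gates of Lemma~\ref{lem:vcxgatestoconstruct} --- but uses a different, ancilla-free decomposition: the circuit identity
\begin{align}
CCU_{\sysA\sysC^L\rightarrow \sysB^L}=V_{\sysC^L}^\dagger\, CW_{\sysC^L\rightarrow\sysB^L}^\dagger\, \multgate^{\Rightarrow}_{\sysA\rightarrow \sysB^L}\, CW_{\sysC^L\rightarrow \sysB^L}\, V_{\sysC^L}\ ,
\end{align}
where $CW$ is a single depth-$1$ layer of nearest-neighbor controlled multiplications $x_{j+1}\mapsto \hat{y}_j x_{j+1}$. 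In other words, instead of fanning out $h$ and the partial products $\hat{y}_{j-1}$ into explicit ancillas and conjugating locally at each site, the paper realizes the conjugation $h\mapsto \hat{y}_{j-1}^{-1}h\hat{y}_{j-1}$ by sandwiching the multi-target multiplication by $h$ between $CW$ and $CW^\dagger$. The paper's version avoids the extra ancilla registers and the layout/bounded-range bookkeeping you flag as your main obstacle; your version is a legitimate alternative, at the cost of introducing $2L$ auxiliary qudits and a constant number of swap layers to make all gates strictly nearest-neighbor.
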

The unitary can be factored as 
\begin{align}
CCU_{\sysA\sysC^L\rightarrow \sysB^L}&=V_{\sysC^L}^\dagger CW_{\sysC^L\sysB^L}^\dagger \multgate^{\Rightarrow}_{A\rightarrow \sysB^L} CW_{\sysC^L\rightarrow \sysB^L} V_{\sysC^L} \label{eq:ccufactorization}
\end{align}
where the unitary $CW_{\sysC^L\rightarrow\sysB^L}$ that act as
\begin{align}
CW_{\sysC^L\rightarrow\sysB^L}(\ket{y_1,\ldots,y_L}\ket{x_1,\ldots,x_L})&=\ket{y_1,\ldots,y_L}\ket{x_1,y_1x_2,y_2x_3,\ldots,y_{L-1}x_L}\ .
 \end{align}
Identity~\eqref{eq:ccufactorization} is conveniently expressed by the circuit identity
\[
\begin{quantikz}[transparent]
  \lstick{$\sysC^L$} &\qw & \ctrl{1} & \qw\\
  \lstick{$\sysB^L$} &\qw & \gate{U} & \qw\\
  \lstick{$\sysA$} &\qw & \ctrl{-1} & \qw  
\end{quantikz}=
\begin{quantikz}[transparent]
  \lstick{$\sysC^L$}&\qw	& \gate{V} & \ctrl{1} &\qw & \qw       &\ctrl{1} &\qw &\gate{V^\dagger}&\qw\\
  \lstick{$\sysB^L$}&\qw	& \qw      & \gate{W}        &\qw &  \gate{X}& \gate{W^\dagger}       &\qw &\qw &\qw  \\
    \lstick{$\sysA$} &\qw	& \qw      & \qw              &\qw & \ctrl{-1}  &\qw              & \qw&\qw &\qw 
\end{quantikz}
\]

Eqs.~\eqref{eq:ccufactorization} implies that 
to show Lemma~\ref{lem:constandepthimplementationccu}, it is sufficient to exhibit $1D$-local adaptive constant-depth implementations for
the unitaries  $V_{\sysC^L}$, $V_{\sysC^L}^\dagger$,$CW_{\sysC^L\rightarrow\sysB^L}$, $CW_{\sysC^L\rightarrow\sysB^L}^\dagger$ as well as $\multgate^{\Rightarrow}_{\sysC_{L+1}\rightarrow\sysC^L}$  (which is identical to $\multgate^{\Rightarrow}_{\sysA\rightarrow\sysB^L}$ after renaming the systems) and
its adjoint. Assuming Lemma~\ref{lem:vcxgatestoconstruct} (whose proof we give below), all that is left to prove Lemma~\ref{lem:constandepthimplementationccu} is to see how to realize  $CW_{\sysC^L\rightarrow\sysB^L}$ and its inverse.

\subsubsection*{Implementability of  $CW_{\sysC^L\rightarrow\sysB^L}$ and its inverse (Proof of Lemma~\ref{lem:constandepthimplementationccu})}
The claim is immediate for $CW_{\sysC^L\rightarrow\sysB^L}$: This unitary can be factorized as
\begin{align}
  CW_{\sysC^L\rightarrow\sysB^L}&=\prod_{j=1}^{L-1} \multgate^\Leftarrow_{\sysC_j\rightarrow\sysB_{j+1}}\ ,\label{eq:CWfactorization}
\end{align}
where $\multgate^\Leftarrow$ is the controlled-group multiplication gate acting as 
\begin{align}
\multgate^\Leftarrow\left(\ket{y}\ket{x}\right)&=\ket{y}\ket{yx}\ ,
\end{align}
see Fig.~\ref{fig:Wfactorizationexample}. Since the controlled-group multiplication gates in Eq.~\eqref{eq:CWfactorization} act on disjoint subsets of qudits, they can be applied in parallel. In particular, this gives a $1D$-local (non-adpative) depth-$1$ circuit realizing~$CW_{\sysC^L\rightarrow\sysB^L}$. Identical reasoning applies to the inverse $CW_{\sysC^L\rightarrow\sysB^L}^\dagger$.
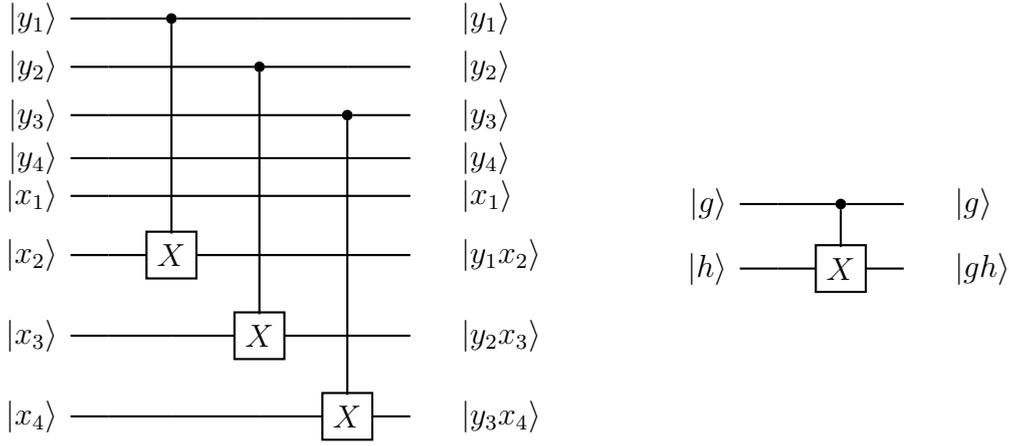
\begin{figure}
  \centering
  \begin{center}
    \[
\begin{quantikz}[transparent]
\lstick{\ket{y_1}}&\qw	& \ctrl{5} & \qw&\qw&\qw&\rstick{\ket{y_1}}\\ 
\lstick{\ket{y_2}}&\qw	& \qw      &\ctrl{5}&\qw&\qw&\rstick{\ket{y_2}}\\ 
\lstick{\ket{y_3}}  &\qw	& \qw      &\qw&\ctrl{5}&\qw&\rstick{\ket{y_3}}\\     
\lstick{\ket{y_4}}&\qw	& \qw      & \qw&\qw&\qw&\rstick{\ket{y_4}}\\
\lstick{\ket{x_1}}&\qw	& \qw      & \qw&\qw&\qw&\rstick{\ket{x_1}}\\
\lstick{\ket{x_2}}&\qw	& \gate{X}  &\qw &\qw&\qw&\rstick{\ket{y_1x_2}}\\
\lstick{\ket{x_{3}}}  &\qw	& \qw       &\gate{X}&\qw&\qw&\rstick{\ket{y_{2}x_3}}\\
\lstick{\ket{x_4}}&\qw	& \qw               &\qw  & \gate{X}&\qw&\rstick{\ket{y_3x_4}}\\
\end{quantikz}
\qquad\qquad
\begin{quantikz}[transparent]
\lstick{\ket{g}}&\qw	& \ctrl{1}& \qw &\rstick{\ket{g}}\\
\lstick{\ket{h}}&\qw    &\gate{X} & \qw & \rstick{\ket{gh}}
\end{quantikz}
\] 
\end{center}
  \caption{$1D$-local depth-$1$ circuit realizing $CW_{\sysC^L\rightarrow\sysB^L}$ illustrated for $L=4$, and the controlled-multiplication gate~$\multgate^\Leftarrow$ applied to $\ket{g}\ket{h}$. \label{fig:Wfactorizationexample}}
  \end{figure}

\subsubsection*{Implementability of $V_{\sysC^L}$, $V_{\sysC^L}^\dagger$ and $\multgate^{\Leftarrow}_{\sysC_{L+1}\rightarrow \sysC^L}$ (Proof of Lemma~\ref{lem:vcxgatestoconstruct})}
We can treat these  unitaries in a unified manner. Each of these acts as 
\begin{align}
U\ket{g_1,\ldots,g_n}&= \ket{\prod_{k=1}^n g_{k}^{A_{1,k}},\ldots, \prod_{k=1}^n g_{k}^{A_{n,k}}}\qquad\textrm{ for all }\qquad g_1,\ldots,g_n\in G\ .
\end{align}
Here $n\in \{L,L+1\}$ and $A\in \mathsf{Mat}_{n\times n}(\{-1,0,1\})$ determine the unitary under consideration, i.e., $U=U_A$. 
In fact, any matrix $A$ with the property that the map
\begin{align}
\Gamma^G_A:\left(g_1,\ldots,g_n\right)\mapsto (\hat{g}_1,\ldots,\hat{g}_n):=\left(\prod_{k=1}^n g_{k}^{A_{1,k}},\ldots, \prod_{k=1}^n g_{k}^{A_{n,k}}\right)\label{eq:gammgamap}
\end{align}
is a bijection (for any group~$G$) defines a unitary $U_A^G$ on $(\mathbb{C}^{|G|})^{\otimes n}$. It will be sufficient for our purposes to restrict to a certain subsets of such matrices. A vector $\sigma\in\{-1,0,1\}^n$ is called monotone (increasing) if the subsequence $(\sigma_{j_1},\ldots,\sigma_{j_{|\sigma|}})$ of non-zero entries of $\sigma$ switches signs at most once from $-1$~to $1$.
 We say that a matrix $A\in\mathsf{Mat}_{n\times n}(\{-1,0,1\})$ is  monotone if each of its rows is monotone. 
It is easy to check that the unitary $V_{\sysC^L}$
and its inverse are both of the form $U_A^G$  for a monotone matrix~$A$, see Fig.~\ref{fig:exampleWWdag} for an example.
\begin{figure}
  \[
  \begin{quantikz}[transparent]
\lstick{\ket{g_1}}&\qw	& \gate[4]{V} & \qw & \rstick{\ket{g_1}}\\ 
\lstick{\ket{g_2}}&\qw	&           & \qw &\rstick{\ket{g_1g_2}}\\ 
\lstick{\ket{g_3}}  &\qw	&  & \qw &\rstick{\ket{g_1g_2g_3}}\\     
\lstick{\ket{g_4}}&\qw	& & \qw & \rstick{\ket{g_1g_2g_3g_4}} 
\end{quantikz}\qquad\qquad\qquad\qquad
\begin{quantikz}[transparent]
\lstick{\ket{g_1}}&\qw	& \gate[4]{V^\dagger} & \qw & \rstick{\ket{g_1}}\\ 
\lstick{\ket{g_2}}&\qw	&                    & \qw &\rstick{\ket{g_1^{-1}g_2}}\\ 
\lstick{\ket{g_3}}  &\qw&                     & \qw &\rstick{\ket{g_2^{-2}g_3}}\\     
\lstick{\ket{g_4}}&\qw	&                   & \qw & \rstick{\ket{g_{3}^{-1}g_4}} 
\end{quantikz}
  \]
  \caption{For $L=4$, the monotonicity property for the unitaries~$V$ and $V^\dagger$ is evident from these expressions: We have $V=U^{G}_{A}$
and $V^\dagger=U^{G}_{\tilde{A}}$ for suitable monotone matrices~$A$.
    \label{fig:exampleWWdag}}
\end{figure}
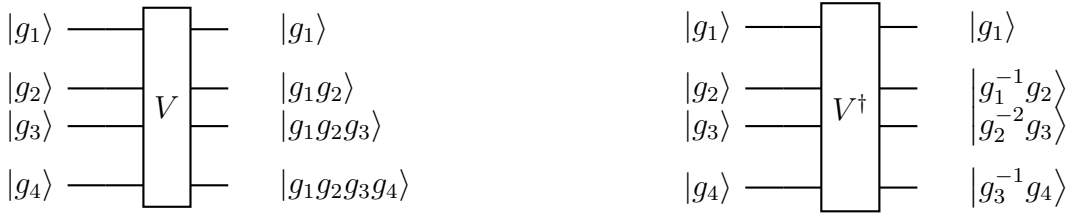
The same applies to $\multgate^{\Leftarrow}_{\sysC_{L+1}\rightarrow \sysC^L}$, see Fig.~\ref{fig:cxgatemultifanoutgate}.
\begin{figure}
  \[
  \begin{quantikz}[transparent]
\lstick{\ket{g_1}}&\qw	& \gate[4]{\multgate^{\Leftarrow}_{\sysC_{L+1}\rightarrow\sysC^L}} & \qw & \rstick{\ket{g_1g_4}}\\ 
\lstick{\ket{g_2}}&\qw	&           & \qw &\rstick{\ket{g_2g_4}}\\ 
\lstick{\ket{g_3}}  &\qw	&  & \qw &\rstick{\ket{g_3g_4}}\\     
\lstick{\ket{g_4}}&\qw	& & \qw & \rstick{\ket{g_4}} 
\end{quantikz}
  \]
  \caption{For $L=3$, this expression shows the monotonicity property
    of the unitary~$\multgate^{\Leftarrow}_{\sysC_{L+1}\rightarrow \sysC^L}$.
        \label{fig:cxgatemultifanoutgate}}
\end{figure}
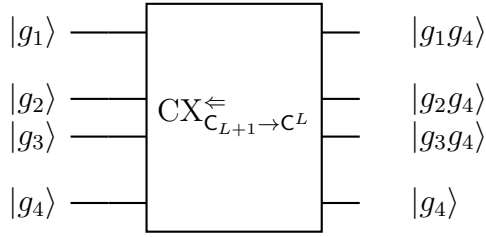

Now assume that $G$ is a group with a normal subgroup~$H$ such that $G/H$ is cyclic, i.e., $G/H\cong\mathbb{Z}_d$. As before,
we choose~$a\in G$ such that 
$G$ is a disjoint union of cosets generated by $aH$ as in Eq.~\eqref{eq:disjointunioncosetsa}. Each group element~$g\in G$ can then be written uniquely as
\begin{align}
	g&=a^j h\qquad\textrm{ for a pair }\qquad (j,h)\in\mathbb{Z}_d\times H\ .\label{eq:gajhdef}
\end{align}
In particular, we may define a (unitary) map
$E:\mathbb{C}^{|G|}\rightarrow\mathbb{C}^{|\mathbb{Z}_d|}\otimes\mathbb{C}^{|H|}$
which realizes the bijection $g\mapsto (j,h)$ given by~\eqref{eq:gajhdef}, i.e., it acts
\begin{align}
	E(\ket{g})&=\ket{j}\ket{h}\ ,
\end{align}
see Fig.~\ref{fig:Eisometrydef} for an illustration. Our key tool is the following reduction:

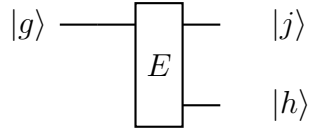
\begin{figure}
	\centering
	\begin{quantikz}[transparent]
		\lstick{\ket{g}}&\qw&\gate[2,nwires={2}]{E} &\qw&\rstick{\ket{j}}\\
		&   &                      &\qw& \rstick{\ket{h}}
	\end{quantikz}
	\caption{Definition of the unitary~$E:\mathbb{C}^{|G|}\rightarrow\mathbb{C}^{|\mathbb{Z}_d|}\otimes\mathbb{C}^{|H|}$\label{fig:Eisometrydef}}
\end{figure}

\begin{theorem}\label{thm:mainbarrington}
	Suppose $n\in\mathbb{N}$ and $A\in\mathsf{Mat}_{n\times n}(\{-1,0,1\})$ is monotone.
	Then
	\begin{align}
		U_A^G &=(E^\dagger)^{\otimes n}U_A^HU_A^{\mathbb{Z}_d} E^{\otimes n}\ ,
	\end{align}
	where $U_A^{\mathbb{Z}_d}$ and $U_A^H$ act on the spaces $(\mathbb{C}^{|\mathbb{Z}_d|})^{\otimes n}$ and $(\mathbb{C}^{|H|})^{\otimes n}$ with respect to the standard ordering, see Fig.~\ref{fig:implementationUAG}.
	
	In particular if $U_A^H$ and $U_A^{\mathbb{Z}_d}$ are realizable  in adaptive constant depth with~$1D$-local operations, then this is also the case for $U_A^G$.
\end{theorem}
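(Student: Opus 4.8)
The plan is to decompose the group-valued computation realized by $U_A^G$ into its $\mathbb{Z}_d$-part (the coset index) and its $H$-part (the remaining factor), using the disjoint-union structure~\eqref{eq:disjointunioncosetsa} and the fact that $H$ is normal. After applying $E^{\otimes n}$, a computational basis state $\ket{g_1,\ldots,g_n}$ becomes $\ket{j_1,h_1}\cdots\ket{j_n,h_n}$ with $g_k=a^{j_k}h_k$. The key algebraic observation is that when one forms a product of the form $\prod_{k} g_k^{A_{i,k}}$ with $A_{i,\cdot}$ monotone, one can ``push all the powers of $a$ to the left'': writing out the product and repeatedly moving each $a^{\pm 1}$ past the $H$-factors to its left (which is allowed since $H$ is normal, at the cost of conjugating those $H$-factors by a power of $a$), one ends up with an expression $a^{\sum_k A_{i,k} j_k}\cdot(\textrm{word in the }h_k\textrm{'s conjugated by powers of }a)$. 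The coset index of the result is thus $\sum_k A_{i,k} j_k$, which is exactly what $U_A^{\mathbb{Z}_d}$ computes on the index registers. The monotonicity hypothesis is what guarantees the sign pattern of the exponents is controlled enough that the conjugating powers of $a$ appearing in the $H$-word are themselves determined by partial sums of the $j_k$'s in a way that matches $U_A^H$ acting on the $H$-registers; this is precisely the structural input behind the definition of ``monotone'' introduced just before the theorem.

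First I would fix a row index $i$ and carefully expand $\prod_{k=1}^n g_k^{A_{i,k}}$ with $g_k=a^{j_k}h_k$, handling the three cases $A_{i,k}\in\{+1,0,-1\}$; for $A_{i,k}=-1$ we use $g_k^{-1}=h_k^{-1}a^{-j_k}$, and for $A_{i,k}=0$ the factor is absent. Then I would commute all the $a$-powers leftward using $a^m h a^{-m}\in H$, tracking the exponent of $a$ that accumulates in front (it telescopes to $\sum_k A_{i,k}j_k \pmod d$) and the conjugating exponent applied to each surviving $H$-factor (a partial sum of the $A_{i,k}j_k$). Monotonicity ensures these conjugating exponents, as functions of $(j_1,\ldots,j_n)$, are exactly the exponents that $U_A^H$ would use when acting on basis states of $(\mathbb{C}^{|H|})^{\otimes n}$ — i.e.\ the $H$-word produced is $\prod_k \tilde h_k^{A_{i,k}}$ where $\tilde h_k$ is $h_k$ conjugated by the appropriate power of $a$, and these conjugations are precisely the action of $U_A^{\mathbb{Z}_d}$ feeding into the $W$-type structure. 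I would then reassemble: the output index register holds $\sum_k A_{i,k}j_k$, the output $H$-register holds the $H$-word, and applying $(E^\dagger)^{\otimes n}$ recombines these into $a^{\sum_k A_{i,k}j_k}(\textrm{word})$, which equals $\prod_k g_k^{A_{i,k}}$ by construction. Since this holds for every row $i$ and every basis state, the operator identity follows; the ``in particular'' clause is then immediate because $E$ and $E^\dagger$ are fixed single-qudit (local) unitaries and the composition of constant-depth $1D$-local adaptive circuits is again constant-depth $1D$-local adaptive.

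The main obstacle I expect is the bookkeeping in the commutation step: making the claim ``the conjugating powers of $a$ match what $U_A^H$ uses'' fully precise requires pinning down the convention by which $U_A^H$ and $U_A^{\mathbb{Z}_d}$ act ``with respect to the standard ordering'' and checking that the partial-sum exponents coming out of the leftward-sweep of $a$'s coincide with that convention exactly, including edge cases where a row has a sign change (the one permitted switch from $-1$ to $+1$). This is where monotonicity is essential: without it, a row could switch signs multiple times, the accumulated conjugation exponents would no longer be simple partial sums, and the clean factorization into one $U_A^{\mathbb{Z}_d}$ followed by one $U_A^H$ would break. I would therefore isolate the single-row computation as a lemma, verify it first for the ``all $+1$ prefix then all $-1$'' (or the reverse) canonical monotone pattern, and then observe that a general monotone row is a sub-pattern of this, so the same computation applies with the zero entries simply omitted.
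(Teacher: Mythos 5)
Your proposal is correct and follows essentially the same route as the paper: the paper isolates exactly your single-row computation as Lemma~\ref{lem:reductionHgroup}, writing a monotone product in the canonical form $(g_{p_1}^{-1}\cdots g_{p_r}^{-1})(g_{q_1}\cdots g_{q_s})$, commuting the powers of~$a$ through the $H$-factors via normality so that the coset index telescopes to $\sum_k\sigma_k j_k$ and the surviving $H$-factors are conjugated by partial-sum powers of~$a$. The only cosmetic difference is that the paper accumulates the $a$-power on the right rather than the left, and like you it treats the resulting conjugation map~\eqref{eq:hjuncationmapajm} as the local ``$W$-type'' bookkeeping that links $U_A^{\mathbb{Z}_d}$ to $U_A^H$ in Fig.~\ref{fig:implementationUAG}.
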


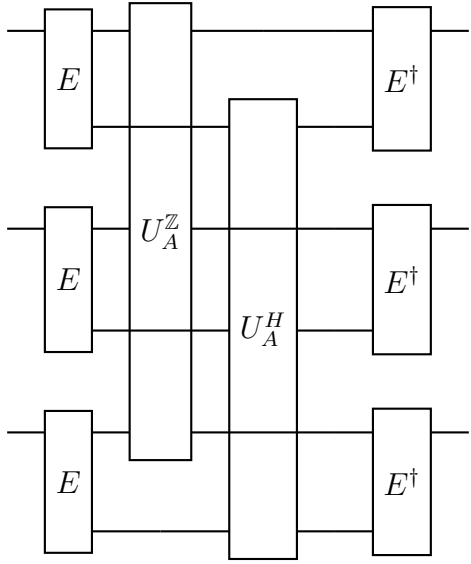
\begin{figure}
	\centering
	\begin{quantikz}[transparent]
		\qw&\gate[2,nwires={2}]{E}    & \gate[5]{U_A^{\mathbb{Z}}} & \qw            & \qw & \gate[2,nwires={2}]{E^\dagger} &\qw\\
		\qw&                          &  \linethrough          & \gate[5]{U_A^{H}}& \qw     &\qw                        &  \\
		\qw&\gate[2,nwires={2}]{E}    &                        & \linethrough    & \qw     & \gate[2,nwires={2}]{E^\dagger} &\qw\\
		\qw&                          &  \linethrough          &                 & \qw     & \qw                          & \\    
		\qw&\gate[2,nwires={2}]{E}    &                        & \linethrough    & \qw     & \gate[2,nwires={2}]{E^\dagger} &\qw\\
		\qw&                          &   \qw                  &                 & \qw     & \qw                          &  
	\end{quantikz}
	\caption{Implementation of the gate~$U_A^G$ based on $U_A^{\mathbb{Z}_d}$ and $U_A^{H}$ for $n=3$.\label{fig:implementationUAG}}
\end{figure}

The proof of Theorem~\ref{thm:mainbarrington} mirrors~\cite[Theorem 6]{barrington}, where it is shown that the word problem for any fixed solvable group~$G$ is $AC^0$~reducible to computing the function $k \mapsto (k \mod d)$. Here $d=|G|$.

Consider first a monotone vector~$\sigma=(\sigma_1,\ldots,\sigma_n)\in \{-1,0,1\}^n$. The monotonicity condition means that
the product $\prod_{j=1}^n g_j^{\sigma_j}$ has the special form
\begin{align}
	\prod_{m=1}^n g_m^{\sigma_m}&=(g_{p_1}^{-1}\cdots g_{p_r}^{-1})(g_{q_1}\cdots g_{q_t})\qquad\textrm{ with }\qquad p_1<p_2<\cdots <p_r<q_1<\cdots <q_t\ .\label{eq:hatgdefinition} 
\end{align}
(Either $r$ or $t$ could  vanish.)

For simplicity, we first prove Theorem~\ref{thm:mainbarrington}
	in the special case where order of the group element~$a\in G$ and the subgroup~$H\subset G$ (cf.~\eqref{eq:gajhdef}) satisfy the identity
	\begin{align}
		\mathop{ord}(a)&=|G/H|\ ,\label{eq:ordergroupelementassumption}
	\end{align}
	see the discussion in Section~\ref{sec:statepreparation}. 
	We subsequently comment on how assumption~\eqref{eq:ordergroupelementassumption}
	can be lifted. 
\begin{lemma}\label{lem:reductionHgroup}
	Let $\sigma=(\sigma_1,\ldots,\sigma_n)\in \{-1,0,1\}^n$ be monotone and $g_1,\ldots,g_n\in G$  arbitrary. Write each~$g_m\in G$ as $g_m=a^{j_m}h_m$ as in~\eqref{eq:gajhdef}. Define
	\begin{align}
		\hat{j}_m&=\sum_{\ell=1}^m \sigma_m j_m \in\mathbb{Z}_d\qquad\textrm{ for }\qquad m\in \{1,\ldots,n\}\label{eq:hjsum}
	\end{align}
	and
	\begin{align}
		\hat{h}_m &=a^{\hat{j}_{m-1}}h_m a^{-\hat{j}_{m-1}}\in H\qquad\textrm{ for }\qquad m\in \{1,\ldots,n\}\ .\label{eq:hjuncationmapajm}
	\end{align}
	with the convention that $\hat{j}_0=0$. Then
	\begin{align}
		g_1^{\sigma_1}\cdots g_n^{\sigma_n}&=(\hat{h}_1^{\sigma_1}\cdots \hat{h}_n^{\sigma_n})a^{\hat{j}_n}\ .
	\end{align}
\end{lemma}
Since the expressions~$\sum_{\ell=1}^m \sigma_m j_m $ and $\hat{h}_1^{\sigma_1}\cdots \hat{h}_n^{\sigma_n}$
simply correspond to taking the analogous product of $(j_1,\ldots,j_m)$ and $(\hat{h}_1,\ldots,\hat{h}_n)$ in the groups~$\mathbb{Z}_d$ and $H$, respectively, Lemma~\ref{lem:reductionHgroup} implies that for a monotone matrix~$A$, the map~$\Gamma^G_A$ (Eq.~\eqref{eq:gammgamap})
can be rewritten in terms of the maps~$\Gamma^{\mathbb{Z}_d}_A$ and~$\Gamma^H_A$, respectively. This involves using the conjugation map~\eqref{eq:hjuncationmapajm} and can immediately be translated into a  circuit identity as shown in Fig.~\ref{fig:implementationUAG}. This implies  Theorem~\ref{thm:mainbarrington} in the case where~\eqref{eq:ordergroupelementassumption} is satisfied.

We note that the  assumption~\eqref{eq:ordergroupelementassumption} 
	can be dropped with a small modification of the statement of Lemma~\ref{lem:reductionHgroup}. 
	If $s=ord(a)$ is the order of~$G$, then the sum~\eqref{eq:hjsum} should be taken modulo~$s$ instead of~$d$. (Correspondingly, one should use the gate~$U_A^{\mathbb{Z}_s}$.)  Writing~$j=k\cdot d+(j\pmod d)$ for some $k\in\mathbb{Z}_{s/d}$, 
	one can  then use that $a^j=(a^d)^k a^{j\pmod d}$ is a product of an element~$\tilde{h}:=(a^d)^k\in H$ and a power $a^{j'}$ with $j'\in\mathbb{Z}_d$. The factor~$\tilde{h}$ can be absorbed using an additional multiplication gate in the last register  corresponding to~$h_n$.  This leads to a (slightly more involved) constant-depth circuit for $U_A^G$ using the unitaries~$U_A^{\mathbb{Z}_s}$ and~$U_A^H$.

\begin{proof}[Proof of Lemma~\ref{lem:reductionHgroup}: ]
	Consider an expression of the form~\eqref{eq:hatgdefinition}. Slightly abusing notation to avoid clutter, let us write $g_{p_\ell}=a^{j_\ell}h_\ell$ and $g_{q_\ell}=a^{j'_\ell}h'_\ell$. Then
	\begin{align}
		g_{p_1}^{-1}\cdots g_{p_r}^{-1}&=h_1^{-1}a^{-j_1}h_2^{-1}a^{-j_2}h_3^{-1}\cdots h_r^{-1}a^{-j_r}\\
		&=h_1^{-1}(a^{-j_1}h_2^{-1}a^{j_1})(a^{-j_1-j_2}h^{-1}_3 a^{j_1+j_2})\cdots (a^{-\sum_{\ell=1}^{r-1}j_\ell}h^{-1}_r a^{\sum_{\ell=1}^{r-1}j_\ell}) a^{-\sum_{\ell=1}^{r}j_\ell}
	\end{align}
	Defining
	\begin{align}
		\hat{j}_m&=-\sum_{\ell=1}^m j_\ell\qquad\textrm{ for }\qquad m\in \{1,\ldots,r\}\ 
	\end{align}
	and
	\begin{align}
		\hat{h}_m&=\begin{cases}
			h_m\qquad&\textrm{ if }m=1\\
			a^{\hat{j}_{m-1}}h_m a^{-\hat{j}_{m-1}}&\textrm{ for }m\in\{2,\ldots,r\}\ ,
		\end{cases}
	\end{align}
	this can be written in the form
	\begin{align}
		g_{p_1}^{-1}\cdots g_{p_r}^{-1}&=\left(\hat{h}_1^{-1} \hat{h}_2^{-1}\cdots
		\hat{h}_{r-1}^{-1}\right)\cdot a^{\hat{j}_r}\ .\label{eq:intermediateproductexpression}
	\end{align}
	By analogous reasoning we have
	\begin{align}
		g_{q_1}\cdots g_{q_t}&=\left(\prod_{m=1}^t \tilde{h}'_m \right)\cdot \left(a^{\sum_{m=1}^t j_m}\right)\qquad\textrm{ with }\qquad \tilde{h}'_m=a^{\sum_{\ell=1}^m j_\ell}\tilde{h}_m a^{-\sum_{\ell=1}^m j_\ell}\ .
	\end{align}
	We have
	\begin{align}
		a^{\hat{j}_r}g_{q_1}\cdots g_{q_t}&=\left(\prod_{m=1}^t  a^{\hat{j}_r}\tilde{h}'_m a^{-\hat{j}_r} \right)\cdot \left(a^{\hat{j}_r+\sum_{m=1}^t j_m}\right) \\
		&=\left(\prod_{m=1}^t \hat{h}'_m \right)\cdot a^{\hat{j}'_t}\label{eq:ajrqonetoqs}
	\end{align}
	where
	\begin{align}
		\hat{h}'_m&=a^{\hat{j}'_m}\tilde{h}'_m a^{-\hat{j}'_m}\\
		\hat{j}'_m&=-\sum_{\ell=1}^r j_\ell+\sum_{\ell=1}^m j'_\ell\ 
	\end{align}
	for $m\in \{1,\ldots,t\}$. Combining~\eqref{eq:intermediateproductexpression} with~\eqref{eq:ajrqonetoqs} we conclude that
	\begin{align}
		g_{p_1}^{-1}\cdots g_{p_r}^{-1} g_{q_1}\cdots g_{q_t}&=
		\left(\prod_{m=1}^r \hat{h}_{m}^{-1}\right)
		\left(\prod_{m=1}^t \hat{h}_{m}\right)a^{\hat{j}'_t}\ .    
	\end{align}
	The claim follows from this.
\end{proof}

Theorem~\ref{thm:mainbarrington}
implies that to implement $U^G_A$, it suffices to consider how to implement $U^{\mathbb{Z}_d}_A$ and $U^{H}_A$. If $H$ is not cyclic, we can apply Theorem~\ref{thm:mainbarrington} to $G=H$. Proceeding recursively, obtain the following.

\begin{corollary}\label{cor:cdepthimplementations}
	Let $A\in\mathsf{Mat}_{n\times n}(\{-1,0,1\})$ be monotone. Let $G$ be a solvable group. 
	Suppose that  for any $d\in\mathbb{N}$, we have a adaptive constant-depth $1D$-local implementation of $U_A^{\mathbb{Z}_d}$. Then
	$U_A^G$ can be implemented in adaptive constant depth with $1D$-local operations.
\end{corollary}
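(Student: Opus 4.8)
The plan is to prove this by induction on the order $|G|$, with Theorem~\ref{thm:mainbarrington} supplying the inductive step; no algebraic input beyond the excerpt is needed. For the base case I would take $G$ cyclic, $G\cong\mathbb{Z}_{|G|}$ (which includes the trivial group): then $U_A^G=U_A^{\mathbb{Z}_{|G|}}$, and the hypothesis of the corollary directly supplies an adaptive constant-depth $1D$-local implementation.

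For the inductive step, assume $G$ is solvable and non-cyclic and that the claim holds for all solvable groups of strictly smaller order. As recalled in Section~\ref{sec:stateprepcircuit}, solvability gives a normal subgroup $H\triangleleft G$ with $G/H\cong\mathbb{Z}_d$; since $G$ is non-cyclic one can choose $d\ge 2$, so $|H|<|G|$, and $H$ is again solvable. By the induction hypothesis $U_A^H$ (for our fixed monotone $A$) has an adaptive constant-depth $1D$-local implementation, and $U_A^{\mathbb{Z}_d}$ has one by hypothesis. Then I would invoke Theorem~\ref{thm:mainbarrington}, which writes
\[
U_A^G=(E^\dagger)^{\otimes n}\,U_A^H\,U_A^{\mathbb{Z}_d}\,E^{\otimes n}
\]
for the fixed single-site unitary $E:\mathbb{C}^{|G|}\to\mathbb{C}^{|\mathbb{Z}_d|}\otimes\mathbb{C}^{|H|}$, and whose ``in particular'' clause turns the implementations of $U_A^H$ and $U_A^{\mathbb{Z}_d}$ into one for $U_A^G$. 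This closes the induction.

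The main obstacle --- and the only point I would treat carefully --- is confirming that the circuit depth and the locality radius stay bounded by constants independent of $n$ (hence of the system size) as the recursion unwinds. The recursion follows a subnormal series $G=G_0\triangleright\cdots\triangleright G_k=\{\idG\}$ with cyclic quotients, where $k\le\log_2|G|$ depends on $G$ alone; each level contributes only two layers of single-site $E$/$E^\dagger$ gates together with a constant-depth $U_A^{\mathbb{Z}_d}$ circuit, so the total depth is a constant depending on $G$ only. For locality, after $t$ levels each original qudit has been split into at most $2^t$ consecutive qudits, so a gate that is nearest-neighbor for the level-$t$ subproblem spans at most $2^t-1=O(1)$ spectator qudits on the full line; absorbing these into the gate support (or conjugating by $O(1)$ nearest-neighbor SWAPs) restores strict $1D$-locality at constant cost. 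I expect this uniformity bookkeeping, rather than any algebraic step, to be where the real work lies.
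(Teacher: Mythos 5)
Your proof is correct and follows the paper's own route: the paper likewise obtains the corollary by recursively applying Theorem~\ref{thm:mainbarrington} along a subnormal series with cyclic quotients, bottoming out at the cyclic case where the hypothesis on $U_A^{\mathbb{Z}_d}$ applies directly. Your additional bookkeeping on the uniformity of depth and locality (the recursion depth depending only on $G$, and each site splitting into $O(1)$ tensor factors) is left implicit in the paper but is a worthwhile elaboration rather than a deviation.
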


Recall that our goal is to show that
\begin{enumerate}[(i)]
\item the group multiplication gate~$\multgate^{\Leftarrow}_{\sysC_{L+1}\rightarrow \sysC^L}$,
\item the unitary $V_{\sysC^L}$ and  its inverse
\end{enumerate}
have constant-depth adaptive $1D$-local implementations.  These unitaries correspond to monotone matrices.
According to Corollary~\ref{cor:cdepthimplementations}, it suffices to show that the following unitaries (acting on products of $\mathbb{C}^{|\mathbb{Z}_d|}$) have constant-depth implementations, for any $d\in\mathbb{N}$ (and $n=L$). Here additions/substractions are computed modulo~$d$.
\begin{enumerate}[(i)]
  \item
 The {\em partial summation gate} $U$ defined by 
\begin{align}
U \ket{j_1, \hdots, j_n} = \ket{j_1, j_1 + j_2, \ldots , \sum\limits_{k = 1}^n j_k}\qquad\textrm{ for }\qquad j_1,\ldots,j_n\in\mathbb{Z}_d\ .
\end{align}
This corresponds to~$V_{\sysC^L}$.
\item
  The {\em successive difference gate} $\Delta$ defined by
  \begin{align}
\Delta \ket{j_1, \hdots, j_n} = \ket{j_1, j_2 - j_1,j_3-j_2, \ldots , j_n-j_{n-1}}\qquad\textrm{ for }\qquad j_1,\ldots,j_n\in\mathbb{Z}_d\ ,
  \end{align}
  which corresponds to~$V_{\sysC^L}^\dagger$.
\item The multitarget $\multgate$ gate on $n+1$ qubits  defined by
  \begin{align}
\multgate_{A_{n+1}\rightarrow A^n} (\ket{x_1,\ldots,x_n}_{A^n}\ket{h}_{A_{n+1}})&=\ket{x_1+h,\ldots,x_n+h}_{A^n}\ket{h}_{A_{n+1}}\qquad\textrm{ for }\qquad h,j_1,\ldots,j_n\in\mathbb{Z}_d
  \end{align}
  corresponding to $\multgate^\Leftarrow_{\sysC_{L+1}\rightarrow \sysC^L}$. 
\end{enumerate}

\subsection{Adaptive constant-depth $1D$-local circuits for $U$, $\Delta$ and $\multgate_{A_{n+1}\rightarrow A^n}$}
Here we show that the circuits
$U$, $\Delta$ and $\multgate_{A_{n+1}\rightarrow A^n}$ (all defined in terms of the group $\mathbb{Z}_d$) 
  can be implemented with adaptive constant-depth $1D$-local circuits. We exploit that the unitaries of interest are Cliffords and can
therefore be realized using gate teleportation~\cite{gottesmanchuang99}. This is briefly reviewed in Section~\ref{sec:cliffcircuitsgate}.

\subsubsection{Clifford circuits for $G = \mathbb{Z}_d$ and gate teleportation\label{sec:cliffcircuitsgate}}
\label{sec:Zd}
 Consider the cyclic abelian group $G = (\mathbb{Z}_d, +)$ for $d \geq 2$. Recall the definition of the generalized Pauli operators 
\begin{align}
	X = \sum\limits_{j \in \mathbb{Z}_d} \ketbra{j + 1}{j} \qquad \text{ and } \qquad Z = \sum\limits_{j \in \mathbb{Z}_d} \omega_d^j \ketbra{j}{j} \quad ,
\end{align}
where $\omega_d = e^{2 \pi i / d}$.  It is straightforward to check that 
\begin{align}
	XZ = \omega_d^{-1} ZX \qquad \text{ and thus }\qquad \left( X^r Z^s \right) \left( X^{r'} Z^{s'} \right) = \omega_d^{r' \cdot s - r \cdot s'} \left( X^{r'} Z^{s'} \right) \left( X^r Z^s \right) \label{eq:XZcommutationrelations}
\end{align}
for all $r, s, r', s' \in \mathbb{Z}_d$. Furthermore, the operators $\{X^r Z^s\}_{(r, s) \in \mathbb{Z}_d^2}$ are orthonormal with respect to the normalized Hilbert-Schmidt inner product $\langle A, B \rangle_{HS} = \frac{1}{d} \tr \left( A^\dagger B \right)$.  For an $n$-qudit system, we set
\begin{align}
X^rZ^s:=X^{r_1}Z^{s_1}\otimes \cdots\otimes X^{r_n}Z^{s_n}\qquad\textrm{ for all }\qquad r=(r_1,\ldots,r_n), s=(s_1,\ldots,s_n)\in\mathbb{Z}_d^n\ .
\end{align}
We again have orthonormality with respect to the (normalized) Hilbert-Schmidt inner product, i.e., 
\begin{align}
\tr((X^rZ^s)^\dagger X^{r'}Z^{s'})=d^{n}\delta_{r,r'}\delta_{s,s'}\qquad\textrm{ for all }\qquad r,s,r',s'\in\mathbb{Z}_d^n\ .\label{eq:orthogonalityxrzs}
\end{align}
Let $\ket{\Phi} = \frac{1}{\sqrt{d}} \sum\limits_{j \in \mathbb{Z}_d} \ket{j} \ket{j}\in\mathbb{C}^d\otimes\mathbb{C}^d$  be the maximally entangled state.
 Eq.~\eqref{eq:orthogonalityxrzs} implies that the vectors
\begin{align}
\ket{\Phi^{(r,s)}}_{A^nB^n}:=(X^rZ^s\otimes I)\ket{\Phi}^{\otimes n}_{AB}\qquad\textrm{ for }\qquad r,s\in\mathbb{Z}_d^n
\end{align}
form an orthonormal basis~$\{\Phi^{(r,s)}\}_{(r,s)\in(\mathbb{Z}_d^n)^2}$ of $((\mathbb{C}^d)^{\otimes n})^{\otimes 2}$.

The $n$-qudit Pauli group is generated by the generalized Pauli operators~$\{X^rZ^s\}_{r,s\in\mathbb{Z}_d^n}$. An $n$-qudit Clifford is a unitary which preserves this group under conjugation. For example, the two-qudit gate~$\multgate = \sum\limits_{j \in \mathbb{Z}_d} \ketbra{j}{j} \otimes X^j$  is Clifford since
\begin{align} 
\begin{matrix}
	\multgate \left( X \otimes I \right) \multgate^\dagger &=& X \otimes X \quad  \\
	\multgate \left( I \otimes X \right) \multgate^\dagger &= &I \otimes X \quad  \\
	\multgate \left( Z \otimes I \right) \multgate^\dagger &= &Z \otimes I \quad  \\
	\multgate \left( I \otimes Z \right) \multgate^\dagger &= &Z^{-1} \otimes Z\ .
	\end{matrix}\label{eq:conjcnot}
\end{align}
To describe the gate teleportation technique, let $U$ be an $n$-qudit Clifford unitary. Let
\begin{align}
\ket{\Phi_U}_{A^{n}B^n}:=(I_{A^n}\otimes U_{B^n})\ket{\Phi}^{\otimes n}\ .
\end{align}
Consider the following protocol, starting from an initial state of the form~$\ket{\Psi}_{C^n}\otimes \ket{\Phi_U}_{A^nB^n}$.
\begin{enumerate}[(i)]
	\item
	  Perform a von Neumann measurement in the orthonormal basis
          $\{\Phi^{(r,s)}\}_{(r,s)\in\mathbb{Z}_d^2}$  on the pair of systems~$C^nA^n$, getting an outcome $(r,s)\in\mathbb{Z}_D^2$.
	\item
	  Apply the operator $U^\dagger (X^rZ^s) U$ to system~$B^n$.\label{eq:udaggerxrzs}
\end{enumerate}
This procedure maps $\ket{\Psi}_{C^n}$ to $(U\ket{\Psi})_{B^n}$ using the resource state~$\Phi_U$. Because $U$ is Clifford, the  operation applied in step~\eqref{eq:udaggerxrzs}  is a Pauli operator, i.e., a depth-one circuit of single-qudit gates. In particular, if the resource state~$\Phi_U$ can be created in adaptive constant depth, then this yields an adaptive constant-depth implementation of~$U$.

We note here that determining the Pauli operator applied in step~\eqref{eq:udaggerxrzs} of this procedure  involves an (efficient) classical computation which may or may not be local in general. Here we focus on the locality of the quantum operations (i.e., measurements and unitaries) and consider non-local (efficient) classical processing to be free.

\subsubsection{Partial Summation Gate} \label{subs:partSum}
We will first consider the partial summation unitary~$U$. This unitary can be decomposed as 
\begin{align}
	U_{B^n} = \multgate_{B_{n-1} \to B_n} \cdots \multgate_{B_2 \to B_3} \multgate_{B_1 \to B_2}
\end{align}
as illustrated in Figure~\ref{fig:partialsum}. 
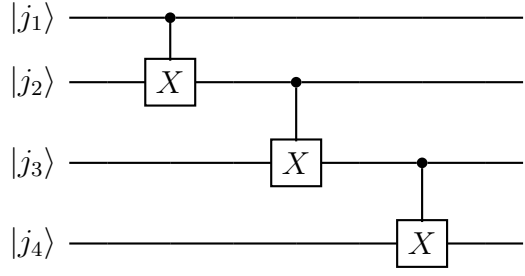
\begin{figure}
  \[
    \begin{quantikz}[transparent]
    		\lstick{$\ket{j_1}$} &\qw &\ctrl{1} &\qw & \qw & \qw & \qw & \qw & \qw &  \\
		\lstick{$\ket{j_2}$} &\qw & \gate{X} &\qw & \ctrl{1} & \qw & \qw & \qw & \qw & \\
		\lstick{$\ket{j_3}$} &\qw &\qw & \qw & \gate{X} & \qw & \ctrl{1} & \qw & \qw & \\
		\lstick{$\ket{j_4}$} &\qw &\qw & \qw & \qw & \qw & \gate{X} & \qw & \qw & \\
\end{quantikz}\]
	\caption{The partial summation gate $U$ for $n=4$.}
	\label{fig:partialsum}
\end{figure}
The state $\ket{\Phi_U}_{A^nB^n}$ is a stabilizer state with generators
\begin{align} 
  \begin{matrix}
    T_k^X&=&
    X_{A_k}\prod_{\ell=k}^n X_{B_\ell}\qquad\textrm{ for }k=1,\ldots,n\\
  T_k^Z&=&\begin{cases}
    Z_{A_1}^{-1} Z_{B_1}\qquad&\textrm{ for }k=1\\
    Z_{A_k}^{-1}Z_{B_{k-1}}^{-1}Z_{B_k} &\textrm{ for }k=2,\ldots,n\ .
  \end{cases}
  \end{matrix}\label{eq:stabpartsum}
 \end{align}
 Indeed, this follows because
 the maximally entangled state~$\ket{\Phi}$ is (up to a phase) the unique simultaneous $+1$ eigenstate of $X \otimes X$ and $Z^{-1} \otimes Z$, and hence
$\ket{\Phi}_{AB}^{\otimes n}$ is a stabilizer state with stabilizer generators
\begin{align}
	S_k^X := X_{A_k} X_{B_k} \qquad \text{ and } \qquad \hat{S}_k^Z := Z_{A_k}^{-1} Z_{B_k} \qquad \text{ for } k=1,\ldots,n\ .
\end{align}
Expression~\eqref{eq:stabpartsum} thus follows by application of~\eqref{eq:conjcnot} with the definition
\begin{align} 
\begin{matrix}
T_k^X&:= &(I_{A^n} \otimes U_{B^n} ) S_k^X (I_{A^n} \otimes U_{B^n} )^\dagger\\
T_k^Z&:=& (I_{A^n} \otimes U_{B^n} ) S_k^Z (I_{A^n} \otimes U_{B^n} )^\dagger\ .
\end{matrix} \label{eq:tkxzpartialsum}
\end{align}
Eq.~\eqref{eq:stabpartsum} suggests the following procedure for preparing~$\ket{\Psi_U}_{A^nB^n}$. 
\begin{enumerate}
\item Prepare the $2n$-qudit state $\ket{+}^{\otimes 2n}$
  where $\ket{+}:=\frac{1}{\sqrt{d}}\sum_{j\in\mathbb{Z}_d}\ket{j}$. 
	This state is stabilized by all operators $T^X_k$, $k=1,\ldots,n$. 
	
	\item Measure the $n$ stabilizer generators $T_k^Z$ from~\eqref{eq:stabpartsum}. Denote the measurement result by $(\omega_d^{r_1}, \hdots, \omega_d^{r_n})$ with $r_k \in \mathbb{Z}_d$ for $k = 1, \hdots, n$.\label{it:steptwoprepproc}
	\item Apply the unitary $\prod_{k=1}^{n}X_{A_k}^{r_k}$ to the post-measurement state.\label{it:laststepcorrection}
\end{enumerate}
By definition and since each of the  operators $\{T_k^Z\}_{k=1}^n$ is geometrically $2$- respectively $3$-local in $1D$, this is a constant-depth adaptive $1D$-local circuit. Note that even the classical processing is local here: the correction applied to system~$A_k$ in step~\eqref{it:laststepcorrection} only depends on the measurement result~$r_k$.
\begin{lemma} \label{lem:prodPhi}
	The procedure above produces the state $\ket{\Phi_U}$.
\end{lemma}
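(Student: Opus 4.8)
The plan is to propagate stabilizer groups through the three steps of the procedure, using that (as established around~\eqref{eq:stabpartsum}) $\ket{\Phi_U}$ is the stabilizer state whose stabilizer group is generated by $\{T_k^X,T_k^Z\}_{k=1}^n$. First I would note that the initial state $\ket{+}^{\otimes 2n}$ is stabilized by each single-qudit operator $X_{A_k}$ and $X_{B_k}$, hence by every $T_k^X=X_{A_k}\prod_{\ell=k}^n X_{B_\ell}$, since each such $T_k^X$ is a product of these. Next, the $n$ operators $\{T_k^Z\}_{k=1}^n$ are pure Pauli-$Z$ type, so they pairwise commute; and each $T_k^Z$ commutes with each $T_j^X$ because $\{T_j^X,T_k^Z\}$ is a stabilizer group, being the image of the (mutually commuting) stabilizer generators $\{S_j^X,\hat S_k^Z\}$ of $\ket{\Phi}^{\otimes n}$ under the Clifford unitary $I_{A^n}\otimes U_{B^n}$, cf.~\eqref{eq:conjcnot}--\eqref{eq:tkxzpartialsum}. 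Thus the simultaneous measurement of $T_1^Z,\dots,T_n^Z$ in Step~\eqref{it:steptwoprepproc} is well defined.

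For the post-measurement state $\ket{\psi}$: recording outcome $\omega_d^{r_k}$ for $T_k^Z$ leaves $\ket{\psi}$ stabilized by $\omega_d^{-r_k}T_k^Z$; and since each $T_j^X$ was already a stabilizer of $\ket{+}^{\otimes 2n}$ and commutes with all of $T_1^Z,\dots,T_n^Z$, it survives the measurement as a stabilizer of $\ket{\psi}$. The operators $\{T_j^X\}_{j=1}^n\cup\{\omega_d^{-r_k}T_k^Z\}_{k=1}^n$ are $2n$ independent commuting generalized Paulis on the $2n$-qudit space $A^nB^n$ — independence holds because the $T_j^X$ carry distinct $X_{A_j}$ factors, the $T_k^Z$ carry distinct $Z_{A_k}$ factors, and the first family is $X$-type while the second is $Z$-type — so they pin down $\ket{\psi}$ uniquely up to a global phase.

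Finally I would conjugate this stabilizer group by the correction $\prod_{k=1}^n X_{A_k}^{r_k}$ of Step~\eqref{it:laststepcorrection}. This operator is $X$-type, so it fixes every $T_j^X$; it touches the $A$-qudits only through the factors $Z_{A_k}^{-1}$ inside $T_k^Z$, and the identity $X^{r}Z^{-1}X^{-r}=\omega_d^{r}Z^{-1}$ (a consequence of~\eqref{eq:XZcommutationrelations}) shows that conjugation by $\prod_k X_{A_k}^{r_k}$ sends $\omega_d^{-r_k}T_k^Z\mapsto\omega_d^{-r_k}\omega_d^{r_k}T_k^Z=T_k^Z$. Hence the output state is stabilized by $\{T_j^X\}_{j=1}^n\cup\{T_k^Z\}_{k=1}^n$, and therefore coincides with $\ket{\Phi_U}$ up to an irrelevant global phase, as claimed.

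The computations are all routine; the only point that needs genuine care is bookkeeping the $\omega_d$ phases — the eigenvalue phase $\omega_d^{-r_k}$ introduced by the measurement and the phase $\omega_d^{r_k}$ produced when conjugating $Z_{A_k}^{-1}$ by $X_{A_k}^{r_k}$ — so that they cancel exactly. I would also make explicit (or simply cite from the discussion preceding the lemma) the fact that $\{T_j^X,T_k^Z\}$ is a bona fide stabilizer group, since this is precisely what guarantees that the generators $T_j^X$ are left undisturbed by the measurement in Step~\eqref{it:steptwoprepproc}.
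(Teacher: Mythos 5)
Your proposal is correct and follows essentially the same route as the paper's proof: track the stabilizers of the post-measurement state ($T_j^X$ with eigenvalue $+1$ and $T_k^Z$ with eigenvalue $\omega_d^{r_k}$) and verify via the commutation relation $T_k^Z X_{A_k}=\omega_d^{-1}X_{A_k}T_k^Z$ that the correction $\prod_k X_{A_k}^{r_k}$ restores all $T_k^Z$ eigenvalues to $+1$. Your version merely spells out a few points the paper leaves implicit (commutativity and independence of the generators, and why the $T_j^X$ survive the measurement), but the argument and the phase bookkeeping are identical.
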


\begin{proof}
  The post-measurement state $\ket{\Psi}$ after the measurements in step~\eqref{it:steptwoprepproc} satisfies
	\begin{align}
	T_k^X \ket{\Psi} = \ket{\Psi} \qquad \text{ and } \qquad T_k^Z \ket{\Psi} = \omega_d^{r_k} \ket{\Psi}\qquad  \text{ for } k=1,\ldots,n\ .
	\end{align}
        The claim thus follows from the fact that (by Eq.~\eqref{eq:XZcommutationrelations}) $X_{A_k}$ commutes with  all operators~\eqref{eq:stabpartsum} except for $T_k^Z$ and
        \begin{align}
T_k^Z X_{A_k}^{-1}&=\omega X_{A_k}^{-1}T_k^Z\ ,
        \end{align}
        that is, $T_k^Z X_{A_k}^{-(d-1)}=\omega^{-1} X_{A_k}^{-1}T_k^Z$ or  
        $T_k^Z X_{A_k}=\omega^{-1} X_{A_k}T_k^Z$
\end{proof}
Given~$\ket{\Phi_U}$, one can then use gate teleportation to realize~$U$.

\subsubsection{Implementation of the successive difference gate}
Now consider the successive difference gate~$\Delta$. It has the decomposition
\begin{align}
\Delta_{B^n} &=\multgate^{-1}_{B_1\rightarrow B_2}\cdots \multgate^{-1}_{B_{n-2}\rightarrow B_{n-1}}\multgate^{-1}_{B_{n-1}\rightarrow B_{n}}
\end{align}
as illustrated in Fig.~\ref{fig:successivedifference}. 
\begin{figure} 
  \[
  \begin{quantikz}[transparent]
    		\lstick{$\ket{j_1}$} &\qw &\qw      &\qw & \qw & \qw & \ctrl{1} & \qw & \qw &  \\
		\lstick{$\ket{j_2}$} &\qw & \qw     &\qw & \ctrl{1} & \qw & \gate{X^{-1}} & \qw & \qw & \\
		\lstick{$\ket{j_3}$} &\qw &\ctrl{1} & \qw & \gate{X^{-1}} & \qw & \qw & \qw & \qw & \\
		\lstick{$\ket{j_4}$} &\qw &\gate{X^{-1}} & \qw & \qw & \qw & \qw & \qw & \qw & \\
\end{quantikz}\]
  

	\caption{The successive difference gate~$\Delta$ for $n=4$.}
	\label{fig:successivedifference}
\end{figure}
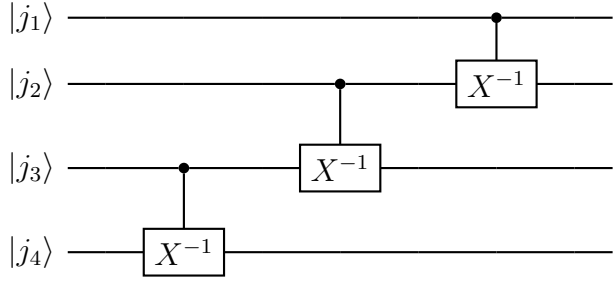
With
\begin{align} 
\begin{matrix}
	\multgate^\dagger \left( X \otimes I \right) \multgate &=& X \otimes X^{-1} \quad  \\
	\multgate^\dagger \left( I \otimes X \right) \multgate &= &I \otimes X \quad  \\
	\multgate^\dagger \left( Z \otimes I \right) \multgate &= &Z \otimes I \quad  \\
	\multgate^\dagger \left( I \otimes Z \right) \multgate &= &Z \otimes Z\ .
	\end{matrix}\label{eq:conjcnotdagger}
\end{align}
it is straightforward to check that the state $\ket{\Phi_\Delta}_{A^nB^n}$ in this case is a stabilizer state with stabilizer generators
\begin{align} 
  \begin{matrix}
  T_k^X&=&\begin{cases}
X_{A_k}X_{B_k}X_{B_{k+1}}^{-1}\qquad&\textrm{ for }k=1,\ldots,n-1\\
X_{A_n}X_{B_n}   \qquad&\textrm{ for }k=n\\
  \end{cases}\\
  T_k^Z&=& Z_{A_k}^{-1}\prod_{j=1}^{k-1}Z_{B_j}\qquad\textrm{ for }k=1,\ldots,n\ .
  \end{matrix}\label{eq:stabpartsumb}
 \end{align}
Eq.~\eqref{eq:stabpartsumb} implies that  the following procedure for preparing~$\ket{\Psi_\Delta}_{A^nB^n}$. 
\begin{enumerate}
\item Prepare the $2n$-qudit state $\ket{0}^{\otimes 2n}$. 
	\item Measure the $n$ stabilizer generators $T_k^X$ from~\eqref{eq:stabpartsum}. Denote the measurement result by $(\omega_d^{r_1}, \hdots, \omega_d^{r_n})$ with $r_k \in \mathbb{Z}_d$ for $k = 1, \hdots, n$.\label{it:steptwoprepproc}
	\item Apply the unitary $\prod_{k=1}^{n}Z_{A_k}^{r_k}$ to the post-measurement state.
\end{enumerate}
Again, the corresponding circuit is adaptive and local in~$1D$. Once the state~$\ket{\Phi_\Delta}$ has been generated, gate teleportation can be used to realize~$\Delta$.

\subsubsection{Implementation of multitarget~$\multgate$ gate and its inverse}
By~\eqref{eq:implementationcxsys}, the group multiplication
$\multgate_{A^n\rightarrow A_{n+1}}$ can be implemented
        in constant adaptive depth with local operations using
        corresponding implementations of $V_{A^n}$ and $V^\dagger_{A^L}$, as well as the two-qudit gate $\rightarrow$.  This is because
        \begin{align}
\multgate_{A^n\rightarrow A_{n+1}}&= V_{A^n}^\dagger \multgate_{A_{n}\rightarrow A_{n+1}} V_{A^n}\ .\label{eq:cxananplusone}
        \end{align} Eq.~\eqref{eq:cxananplusone} also implies that 
        (because $\multgate_{A_{n}\rightarrow A_{n+1}}$ is a two-qudit gate)      its inverse~
        $\multgate_{A^n\rightarrow A_{n+1}}^\dagger$ can be implemented in constant adaptive depth. (We note that  slightly more efficient implementations could be obtained by applying gate teleportation directly, but we do not optimize constants here.)

Consider the multitarget gate~$\multgate_{A_{n+1}\rightarrow A^n}$. This can be written as a product of $n$~two qudit unitaries as follows:
\begin{align}
\multgate_{A_{n+1}\rightarrow A^n}&=\prod_{j=1}^n \multgate_{A_{n+1}\rightarrow A_j}
\end{align}
see Fig.~\ref{fig:multitargetcxgate}. 
In other words, this is similar to the multiplication gate
\begin{align}
\multgate_{A^n\rightarrow A_{n+1}}&=\prod_{j=1}^n \multgate_{A_j\rightarrow A_{n+1}}
\end{align}
(see Fig.~\ref{fig:groupmultiplicationgate}) but with source and target of each controlled gate~$\multgate$ switched. To exploit this similarly, let 
 $F = \frac{1}{\sqrt{d}} \sum\limits_{k, l \in \mathbb{Z}_d} \omega_d^{k \cdot l} \ketbra{k}{l}$
denote the Fourier transform over~$\mathbb{Z}_d$. A straightforward computation gives 
\begin{align}
	\left( F \otimes F^\dagger \right) \multgate_{A_1\rightarrow A_2} \left( F \otimes F^\dagger \right)^\dagger = \multgate_{A_2\rightarrow A_1}\
\end{align}
which immediately implies that
\begin{align}
\multgate_{A_{n+1}\rightarrow A^n}&
=(F^{\otimes n}_{A^n}\otimes F^\dagger_{A_{n+1}}) \multgate_{A^n\rightarrow A_{n+1}}(F^{\otimes n}_{A^n}\otimes F^\dagger_{A_{n+1}})^\dagger\ .\label{eq:multitargetreexpression}
\end{align}
We conclude from Eq.~\eqref{eq:multitargetreexpression} that the multitarget gate~$\multgate_{A_{n+1}\rightarrow A^n}$  can also be implemented in constant adaptive depth with local operations.
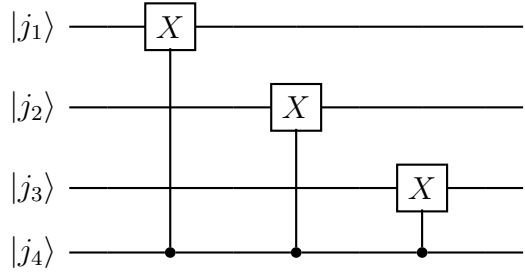
\begin{figure}
  \[
    \begin{quantikz}[transparent]
    		\lstick{$\ket{j_1}$} &\qw &\gate{X} &\qw & \qw & \qw & \qw & \qw & \qw &  \\
		\lstick{$\ket{j_2}$} &\qw & \qw &\qw & \gate{X} & \qw & \qw & \qw & \qw & \\
		\lstick{$\ket{j_3}$} &\qw &\qw & \qw & \qw & \qw & \gate{X} & \qw & \qw & \\
		\lstick{$\ket{j_4}$} &\qw &\ctrl{-3} & \qw & \ctrl{-2} & \qw & \ctrl{-1} & \qw & \qw & \\
\end{quantikz}\]
	\caption{The multitarget $\multgate_{A_{n+1}\rightarrow A^n}$ gate for $n=3$.}	\label{fig:multitargetcxgate}
\end{figure}

\begin{figure}
  \[
    \begin{quantikz}[transparent]
    		\lstick{$\ket{j_1}$} &\qw &\ctrl{3} &\qw & \qw & \qw & \qw & \qw & \qw &  \\
		\lstick{$\ket{j_2}$} &\qw & \qw &\qw & \ctrl{2} & \qw & \qw & \qw & \qw & \\
		\lstick{$\ket{j_3}$} &\qw &\qw & \qw & \qw & \qw & \ctrl{1} & \qw & \qw & \\
		\lstick{$\ket{j_4}$} &\qw &\gate{X} & \qw & \gate{X} & \qw & \gate{X} & \qw & \qw & \\
\end{quantikz}\]
	\caption{The multiplication gate $\multgate_{A^n\rightarrow A_{n+1}}$ gate for $n=3$.}	\label{fig:groupmultiplicationgate}
\end{figure}
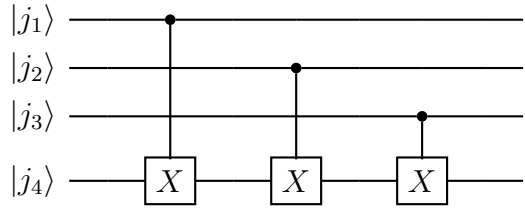

\section{Adaptive implementation of anyonic ribbon operators\label{sec:anyonicribbonimplementation}}
Recall from~\eqref{eq:fxicpi} that the operator~$F_\xi^{(C,\pi)}$ associated with an open ribbon~$\xi$ is 
 an element of the family 
\begin{align}
\{F_\xi^{(C,\pi);({\bf u},{\bf v})}\}_{({\bf u},{\bf v})}\ \label{eq:Cpiuvrandom}
\end{align}
of operators, namely that with ${\bf u}_0=(1,1)$ and ${\bf v}_0=(1,1)$.  Here we show the following:
\begin{theorem}\label{thm:implementability}
  Let $G$ be a solvable group and let $(C,\pi)$ be an arbitrary anyon label.
The operator $F_\xi^{(C,\pi)}$  can be implemented deterministically by a  local adaptive constant-depth circuit for any open ribbon~$\xi$ with the circuit having support only along the ribbon.
  \end{theorem}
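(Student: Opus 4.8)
The plan is to split $F_\xi^{(C,\pi)}$ into a single ``standard'' flux-transport along $\xi$, handled by the group-multiplication gadgets of the previous section, and a purely local operation at the endpoint that installs the electric charge $\pi$. Using~\eqref{eq:fxicpidef} and $(\Gamma_\pi^{-1}(\ell))_{1,1}=\overline{\Gamma_\pi(\ell)_{1,1}}$ (unitarity of $\pi$), I would write $F_\xi^{(C,\pi)}=\frac{d_\pi}{|E(C)|}\sum_{\ell\in E(C)}\overline{\Gamma_\pi(\ell)_{1,1}}\,F_\xi^{r_C^{-1},\ell}$. On a ground state $\Psi$ — and, more generally, on any state that coincides with a ground state in a neighborhood of $\xi$, which is the relevant input — one has $A^{k}_{s_1}\Psi=\Psi$ for every $k\in G$, so the commutation relation~\eqref{eq:seccommute} gives $F_\xi^{r_C^{-1},\ell}\Psi=A^{\ell^{-1}}_{s_1}F_\xi^{r_C^{-1},\idG}\Psi$ and hence
\[
F_\xi^{(C,\pi)}\Psi=\mathcal{O}_{s_1}\,F_\xi^{r_C^{-1},\idG}\Psi,\qquad
\mathcal{O}_{s_1}:=\frac{d_\pi}{|E(C)|}\sum_{\ell\in E(C)}\overline{\Gamma_\pi(\ell)_{1,1}}\,A^{\ell^{-1}}_{s_1},
\]
an operator supported on the $O(1)$ edges incident to the endpoint vertex; using Schur orthogonality one checks that $\mathcal{O}_{s_1}$ is a projection.

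Next I would implement $F_\xi^{r_C^{-1},\idG}$ on $\Psi$. Expanding it via the co-multiplication rule~\cite[Eq.~(26)]{kitaev} into single-triangle operators and regrouping, $F_\xi^{r_C^{-1},\idG}$ becomes, up to the projection $F_\xi^{\idG,\idG}$ onto the ``trivial holonomy along $\xi$'' sector, a controlled-conjugated multiplication of the edge qudits along the ribbon; once the direct edges, dual edges, and a flux-carrying ancilla along $\xi$ are arranged on a line this is exactly the gate $CCU_{\sysA\sysC^L\to \sysB^L}$ of~\eqref{eq:controlledcuxi}, which Lemma~\ref{lem:constandepthimplementationccu} implements in adaptive constant depth with $1D$-local gates (the construction there relies on the Barrington-type reduction of Theorem~\ref{thm:mainbarrington}, hence on solvability of~$G$). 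The remaining non-unitary factor $F_\xi^{\idG,\idG}$ is applied to $\Psi$ by first \emph{measuring} the $G$-valued holonomy observable along $\xi$ non-destructively — a constant-adaptive-depth, $\xi$-supported gadget built from the partial-product unitaries $V_{\sysC^L}$ of Lemma~\ref{lem:vcxgatestoconstruct} together with copy ($\multgate$) gates and an ancilla — and then undoing the random outcome $g_0\in G$ with the local unitary $A^{g_0}_{s_1}$, using $F_\xi^{\idG,g}\Psi=A^{g^{-1}}_{s_1}F_\xi^{\idG,\idG}\Psi$. This produces $F_\xi^{r_C^{-1},\idG}\Psi$ deterministically.

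The last step applies $\mathcal{O}_{s_1}$. Since $F_\xi^{(C,\pi)}$ is genuinely non-unitary — in fact a strict contraction — when $d_\pi>1$ (consistently with Theorem~\ref{thm:nogotheorem}), this step must use a measurement, and the content is that it can nonetheless be made deterministic. One measures the local projection $\mathcal{O}_{s_1}$; the accepting outcome leaves the state proportional to $F_\xi^{(C,\pi)}\Psi$. For the rejecting outcome, the key observation I would use is that $F_\xi^{r_C^{-1},\idG}\Psi$ factorizes across the $O(1)$-dimensional internal ($E(C)$-charge) register at $s_1$ and the rest of the lattice — the register sitting in the group-basis state $|\idG\rangle$ — while $\mathcal{O}_{s_1}$ and its complement act only on this register; hence the two outcome branches differ only by a unit vector in a fixed finite-dimensional space, so a fixed local unitary $D_{s_1}$ near $s_1$ maps one to the other and we apply it. Establishing this correctability — which requires pinning down the action of the basic ribbon operators precisely enough to see the endpoint factorization and to verify the projector identity for $\mathcal{O}_{s_1}$ — is the main obstacle I expect; once it is in place, composing the holonomy measurement and its correction, the $CCU$-type transport, and the measurement of $\mathcal{O}_{s_1}$ with its correction gives an adaptive constant-depth circuit, all of whose gates, measurements and $O(1)$ ancillas are confined to a constant-width neighborhood of $\xi$.
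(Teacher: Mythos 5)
Your algebraic reduction is correct: with $(\Gamma_\pi^{-1}(\ell))_{1,1}=\overline{\Gamma_\pi(\ell)_{1,1}}$ and \eqref{eq:seccommute} one indeed gets $F_\xi^{(C,\pi)}\Psi=\mathcal{O}_{s_1}F_\xi^{r_C^{-1},\idG}\Psi$ with $\mathcal{O}_{s_1}=\frac{d_\pi}{|E(C)|}\sum_{\ell\in E(C)}\overline{\Gamma_\pi(\ell)_{1,1}}A_{s_1}^{\ell^{-1}}$ a projection, and the implementation of $F_\xi^{r_C^{-1},\idG}$ by a non-destructive holonomy measurement, the correction $A_{s_1}^{g_0}$, and the $CCU$ gadget of Lemma~\ref{lem:constandepthimplementationccu} is sound. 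The fatal problem is exactly the step you flagged: the reject branch of the $\{\mathcal{O}_{s_1},I-\mathcal{O}_{s_1}\}$ measurement is \emph{not} correctable by a local unitary near~$s_1$. By Fourier inversion on $E(C)$ (the regular-character identity $\frac{1}{|E(C)|}\sum_{\pi'}d_{\pi'}\chi_{\pi'}(k^{-1})=\delta_{k,\idG}$) one has
\begin{align}
F_\xi^{r_C^{-1},\idG}\Psi=\sum_{\pi'}\sum_{j=1}^{d_{\pi'}}\ket{(C,\pi');((1,j),(1,j))}\ ,
\end{align}
a superposition over \emph{all} irreps $\pi'$ of $E(C)$, of which $\mathcal{O}_{s_1}$ keeps only the $(\pi',j)=(\pi,1)$ component (acceptance probability $d_\pi/|E(C)|$). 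The reject state therefore has components of charge $(C,\pi')$ with $\pi'\neq\pi$ at $s_1$. But $\pi'$ is a global label, not a local degree of freedom: the charge projections $K_\sigma^{(C'',\pi'')}$ on a small loop $\sigma$ encircling $s_1$ are supported on $\sigma$ and hence commute with any unitary supported strictly inside, so their expectation values — which equal $1$ for the target state in the $(C,\pi)$ sector and are strictly smaller for the reject state — cannot be changed by your $D_{s_1}$. Your ``endpoint factorization'' conflates the finite-dimensional space of local labels $({\bf u},{\bf v})$ \emph{within a fixed sector} $(C,\pi)$ (which is what Lemma~\ref{lem:localmappingunitary} lets you rotate) with the sector label $\pi'$ itself. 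Note the step already fails for abelian $G$, where $d_\pi=1$ and the acceptance probability is $1/|G|$.

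The paper's proof is engineered precisely to avoid ever measuring the charge. In Lemma~\ref{lem:probabilisticimplementability} the weights $(\Gamma_\pi^{-1}(k))_{j,1}$ are produced by the coherent controlled unitary $\Pi_{KJ}=\sum_k\ketbra{k}{k}\otimes\Gamma_\pi^{-1}(k)$ acting on a $d_\pi$-dimensional ancilla prepared in $\ket{1}_J$, followed by a computational-basis measurement of the ancillas; \emph{every} outcome $(i,j)$ then yields $F_\xi^{(C,\pi);((i,j),(1,1))}\Psi$ with the target $(C,\pi)$, and only the local label ${\bf u}=(i,j)$ is random. Lemma~\ref{lem:localmappingunitary} (via the unitaries $A_{s_0}^{p_i}$, the operators $a^{x,y}_{s_0}$ of Lemma~\ref{lem:labelchangeops}, and an Uhlmann argument) then deterministically corrects ${\bf u}$ with genuinely local unitaries. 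To salvage your outline you would have to replace the projective measurement of $\mathcal{O}_{s_1}$ by such an ancilla-mediated realization in which all randomness is confined to the local degrees of freedom.
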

  Theorem~\ref{thm:implementability} shows that
initial states for (topological) quantum computation can be 
generated with constant-depth circuits (for a solvable group); similarly, anyons can be moved an extensive distance in constant depth.

We show Theorem~\ref{thm:implementability} in two steps.
The main technical step is the construction of a  constant-depth adaptive circuit
that implements one of the operators $F_\xi^{(C,\pi);({\bf u},{\bf v})}$ of
the family~\eqref{eq:Cpiuvrandom} according to a certain distribution over pairs $({\bf u},{\bf v})$ (and simultaneously provides the pair $({\bf u}, {\bf v})$), see Lemma~\ref{lem:probabilisticimplementability} below.

Let us argue here that this suffices to implement $F_\xi^{(C,\pi)}$ deterministically. This is a consequence of the fact that the parameters $({\bf u},{\bf v})$ only determine local degrees of freedom. Consider the states~$\ket{(C,\pi);({\bf u},{\bf v})}$ defined by \eqref{eq:cpiuvstate}. We show the following:
\begin{lemma}\label{lem:localmappingunitary}
For any ${\bf u}, {\bf v}, {\bf u'},{\bf v'}$, there are
local unitaries $U_{s_0}$ and $U_{s_1}$ localized at the two endpoints~$s_0$ and~$s_1$ of~$\xi$ such that 
\begin{align}
U_{s_0}U_{s_1}\ket{(C,\pi);({\bf u},{\bf v})}&=
\ket{(C,\pi);({\bf u'},{\bf v'})}\ .\label{eq:usosone}
\end{align}
\end{lemma}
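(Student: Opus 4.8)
I would build the two required unitaries separately, with $U_{s_0}$ changing only the ``$\mathbf u$-label'' at $s_0$ and $U_{s_1}$ changing only the ``$\mathbf v$-label'' at $s_1$, and then compose them (they commute, being supported on the disjoint neighbourhoods of the two distinct sites $s_0$ and $s_1$). The only structural input needed is that the operators $d^{{\bf u'}}_{{\bf u''}}$ (local at $s_0$) and $d^{{\bf v'}}_{{\bf v''}}$ (local at $s_1$) of~\eqref{eq:conversionlocaldegrees} can be chosen to form a system of matrix units inside the local $D(G)$-representation algebra at the respective endpoint, as in~\cite[Eq.~(23)]{BombinDelgado}; thus $(d^{\bf a}_{\bf b})^\dagger=d^{\bf b}_{\bf a}$, $d^{\bf a}_{\bf b}d^{\bf c}_{\bf d}=\delta_{{\bf b},{\bf c}}\,d^{\bf a}_{\bf d}$, and the diagonal elements $d^{\bf w}_{\bf w}$ are mutually orthogonal projections.

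\textbf{Step 1: the diagonal $d$-operators fix the relevant states.} Specializing~\eqref{eq:conversionlocaldegrees} to ${\bf u'}={\bf u''}={\bf w}$ and ${\bf v'}={\bf v''}={\bf v}$ gives $d^{\bf w}_{\bf w}\,d^{\bf v}_{\bf v}\,\ket{(C,\pi);({\bf w},{\bf v})}=\ket{(C,\pi);({\bf w},{\bf v})}$. Since $d^{\bf w}_{\bf w}$ (at $s_0$) and $d^{\bf v}_{\bf v}$ (at $s_1$) are commuting orthogonal projections, their product is the orthogonal projection onto the intersection of their ranges; as it fixes $\ket{(C,\pi);({\bf w},{\bf v})}$, so does each factor. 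Hence $d^{\bf w}_{\bf w}$ at $s_0$ fixes $\ket{(C,\pi);({\bf w},{\bf v})}$ for every ${\bf v}$, and symmetrically $d^{\bf w}_{\bf w}$ at $s_1$ fixes $\ket{(C,\pi);({\bf u},{\bf w})}$ for every ${\bf u}$.

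\textbf{Step 2: upgrading the partial isometry $d^{\bf u'}_{\bf u}$ to a local unitary.} The matrix-unit relations give $(d^{\bf u'}_{\bf u})^\dagger d^{\bf u'}_{\bf u}=d^{\bf u}_{\bf u}$ and $d^{\bf u'}_{\bf u}(d^{\bf u'}_{\bf u})^\dagger=d^{\bf u'}_{\bf u'}$, so $d^{\bf u'}_{\bf u}$ is a partial isometry whose initial and final projections have equal (finite) rank inside the finite-dimensional Hilbert space attached to the edges near $s_0$. It therefore extends to a unitary supported at $s_0$; concretely one may take
\[
U_{s_0}\;=\;d^{\bf u'}_{\bf u}+d^{\bf u}_{\bf u'}+\sum_{{\bf w}\neq{\bf u},{\bf u'}}d^{\bf w}_{\bf w}+\Bigl(I-\sum_{\bf w}d^{\bf w}_{\bf w}\Bigr),
\]
which is unitary (the first three terms act as a transposition on the range of $\sum_{\bf w}d^{\bf w}_{\bf w}$, the last term is the identity on its complement) and acts nontrivially only near $s_0$. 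Using Step~1 ($d^{\bf u}_{\bf u}$ and $d^{\bf v}_{\bf v}$ fix $\ket{(C,\pi);({\bf u},{\bf v})}$) together with~\eqref{eq:conversionlocaldegrees} applied with ${\bf u''}={\bf u}$, ${\bf v'}={\bf v''}={\bf v}$, one checks $U_{s_0}\ket{(C,\pi);({\bf u},{\bf v})}=d^{\bf u'}_{\bf u}\ket{(C,\pi);({\bf u},{\bf v})}=\ket{(C,\pi);({\bf u'},{\bf v})}$, and likewise $U_{s_0}\ket{(C,\pi);({\bf u},{\bf v'})}=\ket{(C,\pi);({\bf u'},{\bf v'})}$. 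Running the same construction at $s_1$ with the operators $d^{\bf v'}_{\bf v''}$ yields a unitary $U_{s_1}$ supported at $s_1$ with $U_{s_1}\ket{(C,\pi);({\bf u},{\bf v})}=\ket{(C,\pi);({\bf u},{\bf v'})}$. As $U_{s_0}$ and $U_{s_1}$ have disjoint supports they commute, and
\[
U_{s_0}U_{s_1}\ket{(C,\pi);({\bf u},{\bf v})}=U_{s_0}\ket{(C,\pi);({\bf u},{\bf v'})}=\ket{(C,\pi);({\bf u'},{\bf v'})},
\]
which is~\eqref{eq:usosone}. I expect the only delicate point to be making precise the matrix-unit structure of the $d$-operators (and their locality), for which I would rely on~\cite{BombinDelgado}; the extension of a partial isometry to a unitary and the commuting-projections argument are routine.
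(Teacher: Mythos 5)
Your proposal is correct in outline but takes a genuinely different route from the paper, and it rests on one load-bearing input that you flag but do not close. The paper deliberately avoids assuming that the $d^{\bf u'}_{\bf u''}$ of~\cite{BombinDelgado} form a system of matrix units as \emph{operator} identities: it only uses relations that hold \emph{on the two-particle states}. Concretely, the paper (a) handles the conjugacy-class index ${\bf u}=(i,\cdot)\mapsto(1,\cdot)$ with the operators $A^{p_i}_{s_0}$, $A^{p_{i'}}_{s_1}$, which are already unitary, so no extension is needed there; and (b) handles the representation index via the operators $a^{x,y}_{s_0}$ of Lemma~\ref{lem:labelchangeops}, for which it proves only the state-level identities~\eqref{eq:psionetworeduced} and then invokes Uhlmann's theorem (equality of the reduced states outside the support of $a_{s_0}$) to conclude that \emph{some} local unitary maps $\Psi_1$ to $\Psi_2$. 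Your construction is more explicit and constructive — it produces a closed-form unitary $d^{\bf u'}_{\bf u}+d^{\bf u}_{\bf u'}+\sum_{{\bf w}\neq{\bf u},{\bf u'}}d^{\bf w}_{\bf w}+(I-\sum_{\bf w}d^{\bf w}_{\bf w})$ — and Steps~1 and~2 are internally sound \emph{given} the matrix-unit relations. But those relations are strictly stronger than~\eqref{eq:conversionlocaldegrees} (which only constrains the action on $\ket{(C,\pi);({\bf u},{\bf v})}$), and the paper's remark that the operators of~\cite{BombinDelgado} "are generally not unitary" together with its choice of the Uhlmann route indicates that this is precisely the point the authors did not want to take on faith. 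For the representation index alone the relations can in fact be verified: a Schur-orthogonality computation along the lines of Lemma~\ref{lem:labelchangeops} shows $(a^{x,y}_{s_0})^\dagger=a^{y,x}_{s_0}$ and $a^{x,y}_{s_0}a^{u,v}_{s_0}=\delta_{x,v}\,a^{u,y}_{s_0}$ as operator identities, so your argument goes through there. For the full label ${\bf u}=(i,j)$, however, you would need to check that the specific normalization in~\eqref{eq:generalribbonoperatoranyonic} matches the matrix units of the local $*$-representation $D_{(h,g)}\mapsto A^g_sB^h_s$ of $D(G)$ at each endpoint (and likewise at $s_1$, where the commutation relations~\eqref{eq:commutendpoint}--\eqref{eq:seccommute} differ from those at $s_0$). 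Either supply that verification, or split the argument as the paper does and use the unitary $A^{p_i}_s$ for the conjugacy-class part — at which point your matrix-unit completion can replace the Uhlmann step for the remaining index.
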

Lemma~\ref{lem:localmappingunitary}
is a refinement of~\eqref{eq:conversionlocaldegrees}, which guarantees the existence of local operators~$d^{{\bf u'}}_{{\bf u}}$, $d^{{\bf v'}}_{{\bf v}}$ whose product~$d^{{\bf u'}}_{{\bf u}}d^{{\bf v'}}_{{\bf v}}$
has the same action  on a state of the form~$\ket{(C,\pi);({\bf u},{\bf v})}$ (see \cite[Eq.~(23)]{BombinDelgado}). However, the operators constructed in~\cite{BombinDelgado} are generally not unitary. Lemma~\ref{lem:localmappingunitary} guarantees the existence of  map which is a product of unitaries.

 Lemma~\ref{lem:localmappingunitary} implies that if we can obtain the state $\ket{(C,\pi);({\bf u},{\bf v})}$ by some means, e.g., using the circuit of discussed in Lemma~\ref{lem:probabilisticimplementability}, then we can use local unitaries to create the state~$F^{(C,\pi)}_\xi \Psi$. Conversely, with the implementation of~$F_\xi^{(C,\pi)}$ as stated in the Theorem~\ref{thm:implementability}, one can realize
any state from the family~$\{F_\xi^{(C,\pi);({\bf u},{\bf v})}\Psi\}_{({\bf u},{\bf v})}$ using Lemma~\ref{lem:localmappingunitary}.

\begin{proof}
Since the claim
is that there are unitary maps~$U_{s_0}$, $U_{s_1}$ with property~\eqref{eq:usosone}, it suffices without loss of generality to consider the case where~${\bf u}$, ${\bf v}$ are arbitrary and ${\bf u'}={\bf v'}=(1,1)$. 

Let us again write~${\bf u}=(i,j)$ and ${\bf v}=(i',j')$.
We proceed in two steps:
\begin{enumerate}[(i)]
\item\label{it:unitaryconstructionfirst}
We show that there are local unitaries $V_{s_0}$ and 
$V_{s_1}$ such that
\begin{align}
V_{s_0} V_{s_1}\ket{(C,\pi);\left((i,j),(i',j')\right)}
&=\ket{(C,\pi);\left((1,j),(1,j')\right)}\ .
\end{align}
This follows from~\cite[Eq.~(D14)]{BombinDelgado} which states
 (because of our convention that $p_1=1$) that
\begin{align}
A_{s_0}^{p_i}A_{s_1}^{p_{i'}}\ket{(C,\pi);\left((i,j),(i',j')\right)}
&=\ket{(C,\pi);\left((1,j),(1,j')\right)}\ ,
\end{align}
together with the observation that the operators $A_{s_0}^{p_i}$ and $A_{s_1}^{p_{i'}}$ are unitary.
\item\label{it:unitaryconstructionsecond}
Next we show that there  are unitaries $W_{s_0}, W_{s_1}$ such that 
\begin{align}
W_{s_0}\ket{(C,\pi);\left((1,j),(1,j')\right)}
&=\ket{(C,\pi);\left((1,1),(1,j')\right)}\qquad\textrm{ and }\qquad \\
W_{s_1}\ket{(C,\pi);\left((1,1),(1,j')\right)}
&=\ket{(C,\pi);\left((1,1),(1,1)\right)}\ .
\end{align}
We give the argument for $W_{s_0}$ (the argument for $W_{s_1}$ is analogous), i.e., our goal is to show that there is   a local unitary $W_{s_0}$ that maps $\Psi_1$ to $\Psi_2$ where \begin{align}
\ket{\Psi_1}&=\ket{(C,\pi);\left((1,j),(1,j')\right)}\\
\ket{\Psi_2}&=\ket{(C,\pi);\left((1,j),(1,1)\right)}\ .
\end{align}
Following~\cite[Eq.~(D8)]{BombinDelgado}, there is a
local operator $a_{s_0}:=a^{j',1}_{s_0}$ supported near~$s_0$ such that 
\begin{align}
\begin{matrix}
\Psi_2&=&a_{s_0}\Psi_1\\
a_{s_0}^\dagger a_{s_0} \Psi_1&=&\Psi_1\ ,
\end{matrix}\label{eq:psionetworeduced}
\end{align}
see Lemma~\ref{lem:labelchangeops} below for details. 
Let $S$ be the support of~$a_{s_0}$.  Eq.~\eqref{eq:psionetworeduced} implies that the reduced density operators
$\rho_j:= \tr_{S}\proj{\Psi_j}$ for $j=1,2$ obtained when tracing out the qudits belonging to~$S$ are identical. The existence of a unitary $W_{s_0}$ supported on~$S$ and satisfying $W_{s_0}\Psi_1=\Psi_2$ now follows from Uhlmann's theorem.
\end{enumerate}
Combining~\eqref{it:unitaryconstructionfirst} and~\eqref{it:unitaryconstructionsecond}, we conclude that the unitaries $U_{s_0}:=W_{s_0}V_{s_0}$ and $U_{s_1}:=W_{s_1}V_{s_1}$ are supported near~$s_0$ and~$s_1$, respectively, and 
satisfy
\begin{align}
U_{s_0}U_{s_1}\ket{(C,\pi);\left((i,j),(i',j')\right)}&=
\ket{(C,\pi);\left((1,1),(1,1)\right)}
\end{align}
as claimed.
\end{proof}
It remains to show the existence of an operator~$a_{s_0}$ satisfying~\eqref{eq:psionetworeduced}.

\begin{lemma}\label{lem:labelchangeops}
Consider the operators
\begin{align}
a^{x,y}_{s_0}&=\frac{d_\pi}{|E(C)|}\sum_{k\in E(C)}\Gamma^{-1}_\pi(k)_{x,y} A_{s_0}^{k}\ .
\end{align}
Then 
\begin{align}
a^{x,y}_{s_0}\ket{(C,\pi);\left((1,j),(1,j')\right)}&=
\delta_{x,j'}\ket{(C,\pi);\left((1,j),(1,y)\right)}\label{eq:firstclaimaxyszero}\\
(a^{x,y}_{s_0})^\dagger a^{x,y}_{s_0}\ket{(C,\pi);\left((1,j),(1,j')\right)}&=\delta_{x,j'}\ket{(C,\pi);\left((1,j),(1,j')\right)}\ .
\label{eq:secondlcaimaxyzero}
\end{align}
\end{lemma}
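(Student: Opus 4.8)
The plan is to reduce both identities in Lemma~\ref{lem:labelchangeops} to a single ``key identity'' describing how the vertex operator $A_{s_0}^{m}$ with $m\in E(C)$ acts on the states $\ket{(C,\pi);((1,j),(1,j'))}$, and then to combine this with Schur's orthogonality relations. Concretely, I would first establish that for every $m\in E(C)$,
\[
A_{s_0}^{m}\ket{(C,\pi);((1,j),(1,j'))}=\sum_{\ell}\Gamma_\pi(m)_{\ell,j'}\,\ket{(C,\pi);((1,j),(1,\ell))}.
\]
To prove this, recall that by~\eqref{eq:generalribbonoperatoranyonic} and~\eqref{eq:cpiuvstate} (using $p_1=\idG$, $c_1=r_C$) one has $\ket{(C,\pi);((1,j),(1,j'))}=\frac{d_\pi}{|E(C)|}\sum_{k\in E(C)}\Gamma_\pi^{-1}(k)_{j,j'}\,F_\xi^{r_C^{-1},k}\Psi$. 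Since $m\in E(C)$ commutes with $r_C$, we have $m r_C^{-1}m^{-1}=r_C^{-1}$, so the startpoint relation~\eqref{eq:startpointcommutex} gives $A_{s_0}^{m}F_\xi^{r_C^{-1},k}=F_\xi^{r_C^{-1},mk}A_{s_0}^{m}$; combining this with $A_{s_0}^{m}\Psi=\Psi$ (which holds because $A_{s_0}^{m}A_{s_0}=A_{s_0}$, exactly as used at $s_1$ in the proof of Lemma~\ref{lem:KOlemma}) and the substitution $k\mapsto m^{-1}k$ (a bijection of $E(C)$) produces $\frac{d_\pi}{|E(C)|}\sum_k \Gamma_\pi^{-1}(m^{-1}k)_{j,j'}F_\xi^{r_C^{-1},k}\Psi$. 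Expanding $\Gamma_\pi^{-1}(m^{-1}k)=\Gamma_\pi^{-1}(k)\Gamma_\pi(m)$, writing out the matrix product, and collecting the $F$-sum back into a state of the form $\ket{(C,\pi);((1,j),(1,\ell))}$ yields the displayed key identity.

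Given the key identity, the first claim~\eqref{eq:firstclaimaxyszero} follows by inserting it into the definition of $a^{x,y}_{s_0}$: we obtain
\[
a^{x,y}_{s_0}\ket{(C,\pi);((1,j),(1,j'))}=\frac{d_\pi}{|E(C)|}\sum_{\ell}\Big(\sum_{m\in E(C)}\Gamma_\pi^{-1}(m)_{x,y}\,\Gamma_\pi(m)_{\ell,j'}\Big)\ket{(C,\pi);((1,j),(1,\ell))}.
\]
Since $\Gamma_\pi^{-1}(m)_{x,y}=\overline{\Gamma_\pi(m)_{y,x}}$ by unitarity of $\pi$, the inner sum is $\sum_{m}\overline{\Gamma_\pi(m)_{y,x}}\,\Gamma_\pi(m)_{\ell,j'}=\frac{|E(C)|}{d_\pi}\delta_{y,\ell}\delta_{x,j'}$ by Schur orthogonality; the prefactors cancel and leave $\delta_{x,j'}\ket{(C,\pi);((1,j),(1,y))}$, which is~\eqref{eq:firstclaimaxyszero}.

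For the second claim~\eqref{eq:secondlcaimaxyzero} I would first observe that $(a^{x,y}_{s_0})^\dagger=a^{y,x}_{s_0}$: taking the adjoint of the definition, using $(A_{s_0}^k)^\dagger=A_{s_0}^{k^{-1}}$ from~\eqref{eq:adjointAk}, substituting $k\mapsto k^{-1}$, and then using $\Gamma_\pi^{-1}(k^{-1})=\Gamma_\pi(k)$ together with $\overline{\Gamma_\pi(k)_{x,y}}=\Gamma_\pi^{-1}(k)_{y,x}$ reproduces exactly the definition of $a^{y,x}_{s_0}$. Applying~\eqref{eq:firstclaimaxyszero} twice then gives $(a^{x,y}_{s_0})^\dagger a^{x,y}_{s_0}\ket{(C,\pi);((1,j),(1,j'))}=\delta_{x,j'}\,a^{y,x}_{s_0}\ket{(C,\pi);((1,j),(1,y))}=\delta_{x,j'}\ket{(C,\pi);((1,j),(1,x))}$, and since the Kronecker delta forces $x=j'$ whenever it is nonzero, the right-hand side equals $\delta_{x,j'}\ket{(C,\pi);((1,j),(1,j'))}$. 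I expect the only real obstacle to be bookkeeping: keeping the transposes and complex conjugates in the $\Gamma_\pi$ matrix-element manipulations consistent (in particular the direction of the substitution $k\mapsto m^{-1}k$ versus $k\mapsto km$, which is precisely what distinguishes whether $A_{s_0}$ acts on the ${\bf u}$- or the ${\bf v}$-label), and making sure one uses $A_{s_0}^{m}\Psi=\Psi$ for all $m$ rather than merely $A_{s_0}\Psi=\Psi$. No idea beyond~\eqref{eq:startpointcommutex} and Schur orthogonality is required.
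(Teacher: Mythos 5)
Your proposal is correct and follows essentially the same route as the paper: both rest on the startpoint commutation relation~\eqref{eq:startpointcommutex} with $m r_C^{-1}m^{-1}=r_C^{-1}$, the stabilization $A^m_{s_0}\Psi=\Psi$, a change of summation variable, Schur orthogonality, and the adjoint relation $(a^{x,y}_{s_0})^\dagger=a^{y,x}_{s_0}$ for the second claim. The only difference is cosmetic: you factor out the single-operator identity $A^m_{s_0}\ket{(C,\pi);((1,j),(1,j'))}=\sum_\ell\Gamma_\pi(m)_{\ell,j'}\ket{(C,\pi);((1,j),(1,\ell))}$ before summing against $\Gamma_\pi^{-1}(m)_{x,y}$, whereas the paper carries out the same computation as one double sum.
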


\begin{proof}
We have 
with the commutation relations~\eqref{eq:startpointcommutex}
\begin{align}
a^{x,y}_{s_0}F^{(C,\pi);\left((1,j),(1,j')\right)}&=
\frac{d_\pi^2}{|E(C)|^2}
\sum_{k,\ell\in E(C)}\Gamma^{-1}_\pi(k)_{x,y} 
\Gamma^{-1}_\pi(\ell)_{j,j'} 
A_{s_0}^{k}F_\xi^{r_C^{-1},\ell}\\
&=\frac{d_\pi^2}{|E(C)|^2}
\sum_{k,\ell\in E(C)}\Gamma^{-1}_\pi(k)_{x,y} 
\Gamma^{-1}_\pi(\ell)_{j,j'} 
F_\xi^{r_C^{-1},k\ell}A_{s_0}^{k}
\end{align}
where we used that $k r_C^{-1}k^{-1}=r_C^{-1}$ for $k\in E(C)$. Since a ground state~$\Psi$ is stabilized by each operator~$A^k_{s_0}$, we have (substituting $k$ by $k^{-1}$)
\begin{align}
a^{x,y}_{s_0}F^{(C,\pi);\left((1,j),(1,j')\right)}
\Psi &=\frac{d_\pi^2}{|E(C)|^2}
\sum_{k,\ell\in E(C)}\Gamma_\pi(k)_{x,y} 
\overline{\Gamma_\pi(\ell)_{j',j}} 
F_\xi^{r_C^{-1},k^{-1}\ell}\Psi\\
&=\frac{d_\pi^2}{|E(C)|^2}
\sum_{k,m\in E(C)}\Gamma_\pi(k)_{x,y} 
\overline{\Gamma_\pi(km)_{j',j}} 
F_\xi^{r_C^{-1},m}\Psi\\
&=\frac{d_\pi^2}{|E(C)|^2}
\sum_{r=1}^{d_\pi}\sum_{k,m\in E(C)}\Gamma_\pi(k)_{x,y} 
\overline{\Gamma_\pi(k)_{j',r}}\overline{\Gamma_\pi(m)_{r,j}} 
F_\xi^{r_C^{-1},m}\Psi\\
&=\delta_{x,j'}\frac{d_\pi}{|E(C)|}
\sum_{m\in E(C)}\overline{\Gamma_\pi(m)_{y,j}} 
F_\xi^{r_C^{-1},m}\Psi\\
&=\delta_{x,j'}\frac{d_\pi}{|E(C)|}
\sum_{m\in E(C)}\Gamma^{-1}_\pi(m)_{j,y} 
F_\xi^{r_C^{-1},m}\Psi\ .
\end{align}
In summary, this shows that 
\begin{align}
a^{x,y}_{s_0}F^{(C,\pi);\left((1,j),(1,j')\right)}
&=\delta_{x,j'}
F^{(C,\pi);\left((1,j),(1,y)\right)}\Psi\ ,
\end{align}
which is the first claim~\eqref{eq:firstclaimaxyszero}.

It is easy to verify that
\begin{align}
(a^{x,y}_{s_0})^\dagger &=a^{y,x}_{s_0}\  .
\end{align}
With~\eqref{eq:firstclaimaxyszero}, this implies
the claim~\eqref{eq:secondlcaimaxyzero}.
\end{proof}

\subsection{Probabilistic implementation of a single operator}
For $h,g\in G$, a ``basic'' ribbon operator~$F_\xi^{h,g}$ associated with a ribbon~$\xi$ of length~$L$ acts on
qubits along the ribbon as
\begin{align}
F_\xi^{h,g}(\ket{y_1,\ldots,y_L}\ket{x_1,\ldots,x_L})&=\delta_{g,y_1\cdots y_L} 
\ket{y_1,\ldots,y_L}\ket{hx_1,\hat{y}_1^{-1}h\hat{y}_1 x_2,\hat{y}_2^{-1}h\hat{y}_2 x_2,\ldots,
\hat{y}_{L-1}^{-1}h\hat{y}_{L-1} x_L}
\end{align}
where $\hat{y}_k:=\prod_{j=1}^k y_j$, see Fig.~\ref{fig:ribbonoperatoraction}.
It will be convenient to factor this as
\begin{align}
  F^{h,g}_\xi&=P^g_{\sysC^L}CU^h_{\sysC^L\rightarrow\sysB^L}\label{eq:fhgxipg}
  \end{align}
where 
\begin{align}
P_{\sysC^L}^g &=\sum_{(y_1,\ldots,y_L): y_1\cdots y_L=g} \proj{y_1,\ldots,y_L}
\end{align}
is a projection
and
$CU^h_{\sysC^L\rightarrow \sysB^L}$ is the unitary introduced in~\eqref{eq:Cuhdefinition}.

\begin{lemma}\label{lem:probabilisticimplementability}
Let $G$ be a solvable group.
Let $(C,\pi)$ be fixed. Also fix ${\bf v}=(i',j')\in \{1,\ldots,|C|\}\times \{1,\ldots,d_\pi\}$. 
Then we can in constant adaptive depth probabilistically map an arbitrary state $\Psi$ to a state proportional to 
\begin{align}
F_\xi^{(C,\pi);({\bf u},{\bf v})}\Psi\ ,
\end{align} where the distribution over~${\bf u}\in \{1,\ldots,|C|\}\times \{1,\ldots,d_\pi\}$
is determined by~$\Psi$. 
\end{lemma}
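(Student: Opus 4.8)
The plan is to express $F_\xi^{(C,\pi);((i,j),(i',j'))}$ as the composition of (i) the group-multiplication gate that writes the product $y_1\cdots y_L$ along the ribbon into an ancilla, (ii) the controlled-unitary $CU^{c_i^{-1}}_{\sysC^L\rightarrow\sysB^L}$ applied via $CCU$, and (iii) a single-qudit instrument on that ancilla which simultaneously selects the internal charge label $j$ and performs the flux projection that pins down $i$, followed by uncomputation of the ancilla. The two non-trivial ingredients --- the product gate $\multgate^{\Rightarrow}_{\sysC^L\rightarrow\sysC_{L+1}}$ together with its inverse, and $CCU_{\sysA\sysC^L\rightarrow\sysB^L}$ --- are precisely the primitives shown to admit constant adaptive depth $1D$-local implementations for solvable $G$ in Lemmas~\ref{lem:vcxgatestoconstruct} and~\ref{lem:constandepthimplementationccu}; everything else is an $O(1)$-size gate or a single measurement near one endpoint of $\xi$.

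Concretely, combining~\eqref{eq:generalribbonoperatoranyonic} with the factorization $F_\xi^{h,g}=P^g_{\sysC^L}\,CU^h_{\sysC^L\rightarrow\sysB^L}$ of~\eqref{eq:fhgxipg}, and using that $CU^{c_i^{-1}}_{\sysC^L\rightarrow\sysB^L}$ is independent of the summation variable $k\in E(C)$, one obtains
\[
F_\xi^{(C,\pi);((i,j),(i',j'))}=M^{(i',j')}_{i,j}\,CU^{c_i^{-1}}_{\sysC^L\rightarrow\sysB^L},\qquad M^{(i',j')}_{i,j}:=\frac{d_\pi}{|E(C)|}\sum_{k\in E(C)}(\Gamma_\pi^{-1}(k))_{j,j'}\,P^{p_ikp_{i'}^{-1}}_{\sysC^L}\ .
\]
Here $M^{(i',j')}_{i,j}$ is diagonal in the computational basis of $\sysC^L$, and $CU^{c_i^{-1}}_{\sysC^L\rightarrow\sysB^L}$ leaves the register $\sysC^L$ (hence the product $y_1\cdots y_L$) unchanged, cf.~\eqref{eq:Cuhdefinition}. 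Using the unique coset decomposition $G=\bigsqcup_m p_mE(C)$, define for $g\in G$ the pair $(\mu(g),\nu(g))$ by $g\,p_{i'}=p_{\mu(g)}\,\nu(g)$ with $\mu(g)\in\{1,\ldots,|C|\}$ and $\nu(g)\in E(C)$; then $P^{p_ikp_{i'}^{-1}}_{\sysC^L}\ket{y_1,\ldots,y_L}$ is non-zero only if $g:=y_1\cdots y_L$ satisfies $\mu(g)=i$ and $k=\nu(g)$, so $M^{(i',j')}_{i,j}\ket{y_1,\ldots,y_L}=\tfrac{d_\pi}{|E(C)|}\,(\Gamma_\pi^{-1}(\nu(g)))_{j,j'}\ket{y_1,\ldots,y_L}$ when $\mu(g)=i$ and is $0$ otherwise. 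Finally, since $\pi$ is unitary, $\sum_j|(\Gamma_\pi^{-1}(k))_{j,j'}|^2=1$ for every $k$ (the $j'$-th column of the unitary $\Gamma_\pi(k^{-1})$ has unit norm).

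The circuit is: (1) adjoin $\sysC_{L+1}\cong\mathbb{C}^{|G|}$ in $\ket{\idG}$ and apply $\multgate^{\Rightarrow}_{\sysC^L\rightarrow\sysC_{L+1}}$, writing $g=y_1\cdots y_L$ into $\sysC_{L+1}$ (Lemma~\ref{lem:vcxgatestoconstruct}); (2) adjoin $\sysA\cong\mathbb{C}^{|G|}$ in $\ket{\idG}$ and apply the constant-size controlled permutation sending $\ket{g}\otimes\ket{x}$ to $\ket{g}\otimes\ket{c_{\mu(g)}^{-1}x}$ on $\sysC_{L+1}\sysA$, so that $\sysA$ holds $c_{\mu(g)}^{-1}$; (3) apply $CCU_{\sysA\sysC^L\rightarrow\sysB^L}$ (Lemma~\ref{lem:constandepthimplementationccu}), which applies $CU^h_{\sysC^L\rightarrow\sysB^L}$ with $h=c_{\mu(g)}^{-1}$ and leaves $\sysC^L,\sysC_{L+1}$ and the content of $\sysA$ intact; (4) undo step~(2) using the preserved $\ket{g}$ in $\sysC_{L+1}$ and discard $\sysA$; (5) adjoin a dilation ancilla of dimension $|C|\,d_\pi$, apply to $\sysC_{L+1}$ the single-qudit instrument $\{\mathcal{K}_{i,j}\}$ with $\mathcal{K}_{i,j}\ket{g}=\delta_{\mu(g),i}\,(\Gamma_\pi^{-1}(\nu(g)))_{j,j'}\ket{g}$ --- realized by a constant-size unitary on $\sysC_{L+1}$ and the ancilla followed by one computational-basis measurement of the ancilla --- and record the outcome $(i,j)=:{\bf u}$; (6) undo step~(1) with $(\multgate^{\Rightarrow}_{\sysC^L\rightarrow\sysC_{L+1}})^\dagger$ (Lemma~\ref{lem:vcxgatestoconstruct}) and discard $\sysC_{L+1}$. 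Every step is a constant number of layers of $1D$-local operations supported within an $O(1)$-neighborhood of $\xi$, so the total circuit is adaptive, local, constant-depth, and supported along $\xi$.

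It remains to check correctness. The instrument in step~(5) is trace-preserving because $\sum_{i,j}\mathcal{K}_{i,j}^\dagger\mathcal{K}_{i,j}\ket{g}=\big(\sum_i\delta_{\mu(g),i}\big)\big(\sum_j|(\Gamma_\pi^{-1}(\nu(g)))_{j,j'}|^2\big)\ket{g}=\ket{g}$, so the outcome ${\bf u}=(i,j)$ is produced with a probability distribution that depends only on the intermediate state, hence only on $\Psi$. Tracing the state through steps~(1)--(6) with ${\bf u}=(i,j)$ fixed, on the subspace $\{y:\mu(y_1\cdots y_L)=i\}$ --- the only part surviving the measurement --- the circuit first applies $CU^{c_i^{-1}}_{\sysC^L\rightarrow\sysB^L}$ to $\Psi$ (step~(3)) and then multiplies each computational-basis component by $(\Gamma_\pi^{-1}(\nu(y_1\cdots y_L)))_{j,j'}$ (step~(5)), while $\sysC_{L+1}$ is restored and discarded; by the displayed identity this output is proportional to $M^{(i',j')}_{i,j}\,CU^{c_i^{-1}}_{\sysC^L\rightarrow\sysB^L}\Psi=F_\xi^{(C,\pi);({\bf u},{\bf v})}\Psi$ with ${\bf v}=(i',j')$. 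The main point to get right --- and the only place the argument is delicate --- is the bookkeeping in steps~(2)--(5): the flux $c_i^{-1}$ used inside $CCU$, the weight $\Gamma_\pi^{-1}(\nu(g))$ applied by the instrument, and the flux index $\mu(g)$ selected by the instrument are all deterministic functions of the single quantity $g=y_1\cdots y_L$, which $CU^h$ never touches; this is exactly what lets all ancilla computations be uncomputed reversibly from the one preserved copy of $g$ and makes the circuit realize the single operator $F_\xi^{(C,\pi);({\bf u},{\bf v})}$ rather than a superposition over different fluxes. Solvability of $G$ enters only through Lemmas~\ref{lem:vcxgatestoconstruct} and~\ref{lem:constandepthimplementationccu}; the remaining ingredients (the controlled permutations of steps~(2) and~(4) and the instrument of step~(5)) are $O(1)$-size operations valid for any finite group.
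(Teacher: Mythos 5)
Your proof is correct and takes essentially the same route as the paper's: you factor $F_\xi^{(C,\pi);({\bf u},{\bf v})}$ into a flux-diagonal, representation-weighted operator times $CU^{c_i^{-1}}$ (cf.~\eqref{eq:fhgxipg}), implement the controlled part via $CCU$ and the group-multiplication primitives of Lemmas~\ref{lem:vcxgatestoconstruct} and~\ref{lem:constandepthimplementationccu}, and realize the diagonal part by coherently computing the total flux $y_1\cdots y_L$ into an ancilla, applying an $O(1)$-size instrument whose outcome yields ${\bf u}=(i,j)$, and uncomputing. The only minor differences — treating general ${\bf v}=(i',j')$ rather than $(1,1)$, and selecting the flux $c_{\mu(g)}^{-1}$ coherently instead of adaptively after the coset index is measured — are immaterial.
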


\begin{proof}
  Consider the ooperaotr $F_\xi^{(C,\pi);({\bf u},{\bf v})}$ with ${\bf v}=(1,1)$, i.e., (cf.~\eqref{eq:generalribbonoperatoranyonic})
\begin{align}
	F_\xi^{(C,\pi);((i,j),(1,1))}&:=\frac{d_\pi}{|E(C)|}\sum_{k\in E(C)}(\Gamma_{\pi}^{-1}(k))_{j,1} F_\xi^{(c_i^{-1},p_i k)}\ .
\end{align}
With~\eqref{eq:fhgxipg} we can rewrite this as
\begin{align}
  F_\xi^{(C,\pi);((i,j),(1,1))}&=
  G_{\sysC^L}^{(i,j)}
  CU^{c_i^{-1}}_{\sysC^L\rightarrow\sysB^L}
  \qquad\textrm{ where }\qquad G_{\sysC^L}^{(i,j)}:=\frac{d_\pi}{|E(C)|}\sum_{k\in E(C)}(\Gamma_{\pi}^{-1}(k))_{j,1} P_{\sysC^L}^{p_ik}\ .
\end{align}
The unitary~$CU^{c_i^{-1}}_{\sysC^L\rightarrow\sysB^L}$ can be realized by using an auxiliary system~$\sysA$ initialized in the state~$\ket{c_i^{-1}}_\sysA$ and applying the controlled version $CCU_{\sysA\sysC^L\rightarrow\sysB^L}$. As shown in Lemma~\ref{lem:constandepthimplementationccu} the unitary $CCU_{\sysA\sysC^L\rightarrow \sysB^L}$ can be implemented in adaptive constant depth.

Now consider the operators~$G_{\sysC^L}^{(i,j)}$. Let us first consider a set of single-qudit operator of a similar form, namely
\begin{align}
  \hat{G}^{(i,j)}:=\frac{d_\pi}{|E(C)|}\sum_{k\in E(C)} (\Gamma_{\pi}^{-1}(k))_{j,1} \proj{p_ik}
  \end{align}
By definition, every $y\in G$ can be written as $y=p_i k$ for a unique pair $(i,k)\in \{1,\ldots,|C|\}\times E(C)$.  Let us write $(i(y),k(y))$ for this pair and define an isomorphism $V:\mathbb{C}^{|G|}\rightarrow \mathbb{C}^{|C|}\otimes\mathbb{C}^{|E(C)|}$ acting as $V\ket{y}=\ket{i(y)}\ket{k(y)}$.  This implies that
\begin{align}
V \hat{G}^{(i,j)}V^\dagger &=\proj{i}\otimes  \sum_{k\in E(C)}\Gamma^{-1}_\pi(k)_{j,1}\proj{k}\ .\label{eq:gijdecomposition}
\end{align}
Let $J$ be an auxiliary system with  Hilbert space~$\mathbb{C}^{d_\pi}$  initially prepared in the state~$\ket{1}$. The irreducible unitary representation $\pi$ acts on~$J$ by the matrices $\{\Gamma_\pi(k)\}_{k\in E(C)}$. Let
\begin{align}
\Pi_{KJ}:=\sum_{k\in E(C)} \proj{k}_K\otimes (\Gamma^{-1}_\pi(k))_J
\end{align}
be a corresponding controlled unitary.  Then
\begin{align}
 \Gamma^{-1}_\pi(k)_{j,1}\proj{k}&=(I_K\otimes \bra{j}_J) \Pi_{KJ}(I_K\otimes\ket{1}_J)\label{eq:gammapiijdecompotwo}
\end{align}
for any $j\in \{1,\ldots,d_\pi\}$.
Eq.~\eqref{eq:gijdecomposition} and~\eqref{eq:gammapiijdecompotwo}
show that one element of $\{V\hat{G}^{(i,j)}V^\dagger\}_{(i,j)}$ can be applied probabilistically initializing the system~$J$  in the state~$\ket{1}$, applying $\Pi_{KJ}$, and subsequently measuring the systems $I$ and $J$ in the computational basis obtaining outcome~$(i,j)$.  Since~$V$ is an isometry, this immediately implies that one element of $\{\hat{G}^{(i,j)}\}_{(i,j)}$ can be applied probabilistically, see Fig.~\ref{fig:probabilisticapplicationG} for the corresponding circuit.
\begin{figure}
  \centering
  \[
  \begin{quantikz}[transparent]
\lstick{\ket{y}}&\qw&\gate[2,nwires={2}]{V} &\qw             &\qw      & \qw          &\qw & \meter{} & \push{\ i\ }   &\qw &\gate[2]{V^\dagger}    &\qw \\
                &   &                       &\qw             &\qw      & \gate[2]{\Pi}  &\qw & \qw      & \qw                &\qw &                   &\\
                &   &                       &\lstick{$\ket{1}_J$}&\qw  &              &\qw & \meter{}      & \push{\ j\ }     &\qw & \qw               &
  \end{quantikz}
  \]
  \caption{Circuit implementing one element of the family $\{\hat{G}^{(i,j)}\}_{(i,j)}$ at random\label{fig:probabilisticapplicationG}}
\end{figure}
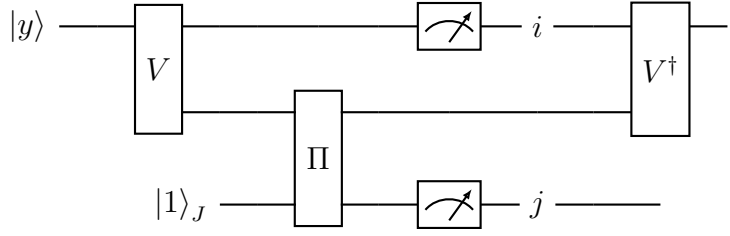
Finally, an element of the family~$\{G_{\sysC^L}^{(i,j)}\}_{(i,j)}$ can be applied by
coherently computing the product $y_1\cdots y_L$ into an auxiliary register~$\sysA_0$ using the group multiplication gate~$\multgate_{\sysA^L\rightarrow\sysA_0}$, probabilistically applying~$\{\hat{G}^{(i,j)}\}_{(i,j)}$, and uncomputing. This is because
\begin{align}
  (\bra{\idG}_{\sysC_{L+1}}\otimes I_{\sysC^L})
\multgate_{\sysC^L\rightarrow\sysC_{L+1}}^\dagger \hat{G}^{(i,j)}_{\sysC_{L+1}}\multgate_{\sysC^L\rightarrow\sysC_{L+1}}
  (\ket{\idG}_{\sysC_{L+1}}\otimes I_{\sysC^L})&=G^{(i,j)}_{\sysC^L}\ .
\end{align}
See Fig.~\ref{fig:figrefprobapplication} for an illustration of the corresponding circuit.
\begin{figure}
    \[
  \begin{quantikz}[transparent]
    \lstick{$\ket{g_1}$} & \qw &\gate[4]{\multgate_{\sysC^L\rightarrow\sysC_{L+1}}}   &\qw &\qw &\qw&\gate[4]{\multgate_{\sysC^L\rightarrow\sysC_{L+1}}^\dagger}&\qw\\
    \lstick{$\ket{g_2}$} & \qw &            &\qw& \qw&\qw&\qw&\qw\\
    \lstick{$\ket{g_3}$} & \qw &            &\qw &\qw&\qw&\qw&\qw\\
        \lstick{$\ket{\idG}$}& \qw &         &\qw &\gate{\{\hat{G}^{(i,j)}\}_{(i,j)}}&\qw&   &\qw &\rstick{$\ket{\idG}$}\\
  \end{quantikz}
  \]
  \caption{Circuit implementing one element of the family $\{G^{(i,j)}_{\sysC^L}\}_{(i,j)}$ at random, for $L=3$.\label{fig:figrefprobapplication}}
\end{figure}
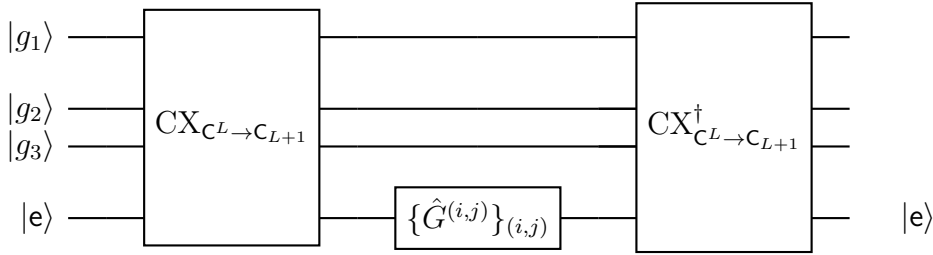
\end{proof}

\section{Adaptive implementation of charge measurements\label{sec:implementingtopologicalchargemeasurements}}
Let $\sigma$ be a closed ribbon. Recall (see Section~\ref{sec:topologicalchargeprojection}) that
the family~$\{K_\sigma^{(C,\pi)}\}_{(C,\pi)}$ of operators (cf. Eq.~\eqref{eq:ksigmarc}) defines a measurement consisting of orthogonal projections. Here we show the following:
\begin{theorem}
Suppose $G$ is a solvable group. Then the POVM~$\{K_\sigma^{(C,\pi)}\}_{(C,\pi)}$
can be realized by an adaptive constant-depth circuit with local gates and measurement. The circuit is supported on qudits along~$\sigma$.
\end{theorem}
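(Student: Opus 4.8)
The plan is to follow the strategy of Section~\ref{sec:anyonicribbonimplementation}: realize the whole POVM by a constant-depth adaptive unitary $W$ that transports the relevant holonomy data of $\sigma$ into a single auxiliary register of fixed dimension, followed by a size-$O_{|G|}(1)$ measurement on that register, followed by $W^\dagger$, all supported on the qudits of $\sigma$ and fresh ancillas placed alongside them. First I would record the closed-ribbon analogue of the factorization~\eqref{eq:fhgxipg}: writing the edge variables met by $\sigma$ as ``direct'' variables $(y_1,\ldots,y_L)$ and ``dual'' variables $(x_1,\ldots,x_L)$ with $L$ of order the length of $\sigma$, the basic closed-ribbon operator should act as $F_\sigma^{h,g}=P_\sigma^{g}\,CU^{h}_{\sysC^L\rightarrow\sysB^L}$ with $P_\sigma^{g}=\sum_{y_1\cdots y_L=g}\ketbra{y_1,\ldots,y_L}{y_1,\ldots,y_L}$ and $CU^{h}$ the conjugation-type unitary of~\eqref{eq:Cuhdefinition}, the only new feature being that the two ends of the thickened loop coincide at the basepoint of $\sigma$. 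Consequently each $K_\sigma^{(D,C)}$, and hence each $K_\sigma^{(C,\pi)}$ via~\eqref{eq:ksigmadc}--\eqref{eq:ksigmarc}, is a linear combination of products (projector in the $y$'s)$\,\cdot\,$($CU$-type unitary in the $x$'s) -- exactly the objects already shown to be constant-depth adaptively implementable in Lemmas~\ref{lem:vcxgatestoconstruct} and~\ref{lem:constandepthimplementationccu}.

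Concretely I would build $W$ as follows. Using the group-multiplication gate $\multgate^{\Rightarrow}$ (constant depth for solvable $G$ by Lemma~\ref{lem:vcxgatestoconstruct}), coherently compute the loop holonomy $g=y_1\cdots y_L$ into an ancilla; then, conditioned on an auxiliary register prepared in the appropriate superposition, run the $CCU$-type circuit of Lemma~\ref{lem:constandepthimplementationccu} together with a further constant-depth prefix-product circuit to form, from the $x$'s, the ``electric holonomy'' -- the ordered product of the dual-loop variables, conjugated by the $\hat{y}_j$'s and then by a coset representative $p\in P(C)$ chosen to bring $g$ to $r_C$, so that it lands in $E(C)$ as an element $n$. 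At this point the ancilla carries the data $(g,n)$. I would then apply on it the fixed, $O_{|G|}(1)$-size change of basis implementing $\mathbb{C}[E(C)]\cong\bigoplus_\pi\mathrm{Mat}_{d_\pi}(\mathbb{C})$, i.e. the map sending $\ket{n}$ to $\bigoplus_\pi\Gamma_\pi(n)$; measuring the conjugacy class $C$ of $g$ and the block label $\pi$ then yields the outcome $(C,\pi)$. Finally I would restore the coset-representative and matrix-index registers to canonical values by a local unitary near the basepoint (the mechanism of Lemma~\ref{lem:localmappingunitary}), and uncompute the group-product ancillas by running the above constant-depth circuits in reverse.

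To see correctness I would compute the POVM elements of this procedure and check they equal $K_\sigma^{(C,\pi)}$: not measuring but uncomputing the holonomy register retains the coherent sum $\sum_{c\in C}$ appearing in~\eqref{eq:ksigmadc}; not measuring the coset representative $p$ and the internal matrix indices retains the sum $\sum_j$ and turns the block data into the character $\chi_\pi$; and Schur's orthogonality relations convert the block-label measurement into precisely the character-weighted combination $\frac{1}{|E(C)|}\sum_{D}\overline{\chi_\pi(D)}K_\sigma^{(D,C)}$ of~\eqref{eq:ksigmarc}. That the resulting operators are orthogonal projections summing to the identity -- hence that the procedure is a consistent measurement -- is exactly the statement proved in Appendix~\ref{app:checkingorthogonality}. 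Depth and locality are then immediate: the prefix-product, $\multgate^{\Rightarrow}$ and $CCU$ circuits are constant-depth adaptive by Lemmas~\ref{lem:vcxgatestoconstruct}--\ref{lem:constandepthimplementationccu} and Theorem~\ref{thm:mainbarrington} (this is where solvability of $G$ enters), the $\mathbb{C}[E(C)]$-diagonalization, the measurement and the frame-fixing unitary act on $O_{|G|}(1)$ qudits, and $W^\dagger$ is the reverse of a constant-depth circuit; nothing acts outside the qudits of $\sigma$ and their adjacent ancillas.

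The hard part will be the correctness computation, which has two layers. First, nailing down the exact action of the closed-ribbon operators $F_\sigma^{h,g}$, where the thickened loop closes up, so one must track how the direct-loop holonomy and the dual-loop conjugations interact at the basepoint (and any commutativity constraint between $h$ and $g$ that this forces), in order to justify the factorization $F_\sigma^{h,g}=P_\sigma^{g}\,CU^{h}$ used above. Second, the Schur-orthogonality bookkeeping that matches which ancilla sub-registers are measured, uncomputed, or frame-fixed to the nested sums $\sum_j$, $\sum_{d\in D}$, $\sum_{D}\overline{\chi_\pi(D)}$ and the normalization $1/|E(C)|$ in~\eqref{eq:ksigmadc}--\eqref{eq:ksigmarc}, so that the back-action of the measurement is exactly the projection $K_\sigma^{(C,\pi)}$ and not some finer measurement refining it.
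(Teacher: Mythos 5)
Your overall architecture is the paper's: factor the closed-ribbon operators as $F_\sigma^{h,g}=P^g_{\sysC^L}CU^h_{\sysC^L\rightarrow\sysB^L}$ (the paper uses~\eqref{eq:fhgxipg} for closed ribbons without further ado, so the ``hard part'' you flag there is not an issue), compute the flux holonomy $y_1\cdots y_L$ into an ancilla with $\multgate$, split it into conjugacy class and coset index and measure the class, and then use an auxiliary register plus Schur orthogonality to realize the character-weighted sum over $E(C)$. The step where your proposal goes wrong is the ``electric'' half. You describe coherently computing an \emph{electric holonomy} $n\in E(C)$ -- ``the ordered product of the dual-loop variables'' $x_1,\ldots,x_L$, suitably conjugated -- into the ancilla, and then Fourier-sampling the block label of $\mathbb{C}[E(C)]$ on that ancilla. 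But $K_\sigma^{(C,\pi)}$ is a character-weighted sum of the \emph{left-multiplication unitaries} $CU^{p_jdp_j^{-1}}$ acting on the $x$-registers; it is not diagonal in, nor a function of, the computational-basis values of the $x$'s. The $G=\mathbb{Z}_2$ sanity check makes this concrete: the charge projector on a dual loop is $\tfrac12(I\pm\prod_e X_e)$, whereas ``compute $\sum_e x_e$ into an ancilla and Fourier-sample it'' measures $\prod_e Z_e$ -- the Fourier-dual, wrong observable. The correct mechanism (and the paper's) is a generalized Hadamard test: prepare the ancilla $\sysA$ in the uniform superposition $|E(C)|^{-1/2}\sum_{d\in E(C)}\ket{d}$, use it as the \emph{control} of $CCU_{\sysA\sysC^L\rightarrow\sysB^L}$ (conjugated by the map $E_{J\rightarrow\sysA}$ that inserts the $p_j$-conjugation), and measure $\sysA$ in the character basis $\{\ket{\varphi^C_\pi}\}_\pi$; Eq.~\eqref{eq:ijljm} then shows the back-action is exactly $G^C_\pi$. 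Your ancilla must stay a control register entangled through the controlled unitary -- it never ``carries the data $(g,n)$'' as a computed function of the $x$'s, and there is nothing to uncompute on it.

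A second, smaller trap in your ``Schur-orthogonality bookkeeping'': the block-label measurement of the Wedderburn decomposition of a \emph{regular} representation yields the central idempotents $\frac{d_\pi}{|E(C)|}\sum_d\overline{\chi_\pi(d)}L_d$, with a factor $d_\pi$. The operators $K_\sigma^{(C,\pi)}$ in~\eqref{eq:ksigmarc} carry $\frac{1}{|E(C)|}$ \emph{without} $d_\pi$ (this is precisely the normalization the paper corrects relative to Bombin--Delgado), and Appendix~\ref{app:checkingorthogonality} shows these, not the $d_\pi$-weighted versions, are the orthogonal projections. So the representation $d\mapsto\sum_jF_\sigma^{p_jdp_j^{-1},p_jr_Cp_j^{-1}}$ on the relevant flux sector is not simply regular, and if you import the standard group-algebra idempotents verbatim your POVM elements come out off by $d_\pi$ and fail to be projections. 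The paper sidesteps this by verifying the amplitudes $\langle\varphi^C_\pi|\cdot\,\ket{+}_\sysA$ directly rather than invoking the regular-representation decomposition.
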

\begin{proof}
Using the definitions~\eqref{eq:ksigmadc} and~\eqref{eq:ksigmarc}, the fact that
$\chi_\pi(D)=\chi_\pi(d)$ is the character of~$\pi$ for any representative~$d\in D$,
and that $E(C)$ is the disjoint union of its conjugacy classes, we have
  \begin{align}
    K_\sigma^{(C,\pi)}&=\frac{1}{|E(C)|}\sum_{d\in E(C)} \overline{\chi_\pi(d)} 
 \sum_{j=1}^{|C|} F_\sigma^{p_jd p_j^{-1},p_jr_Cp_j^{-1}}\ . \label{eq:rewrittenksigmarc}
  \end{align}
 Inserting  the factorization~\eqref{eq:fhgxipg} of a ribbon operator $F_\sigma^{h,g}$,
 this becomes 
 \begin{align}
    K_\sigma^{(C,\pi)}&=\frac{1}{|E(C)|}\sum_{d\in E(C)} \overline{\chi_\pi(d)} 
    \sum_{j=1}^{|C|}
    P^{p_jr_Cp_j^{-1}}_{\sysC^L}CU^{p_jd p_j^{-1}}_{\sysC^L\rightarrow\sysB^L}\ .
 \end{align}
 Using an auxiliary system~$\sysC_0$ in the state~$\ket{\idG}$, we can express this as
 \begin{align}
K_\sigma^{(C,\pi)}&=(\bra{\idG}\otimes I_{\sysB^L\sysC^L}) \multgate_{\sysC^L\rightarrow\sysC_0}^\dagger \hat{K}_\sigma^{(C,\pi)}\multgate_{\sysC^L\rightarrow\sysC_0}
   (\ket{\idG}\otimes I_{\sysB^L\sysC^L})
 \end{align}
 where
 \begin{align}
   \hat{K}_\sigma^{(C,\pi)}
   &:=\frac{1}{|E(C)|}\sum_{d\in E(C)} \overline{\chi_\pi(d)} 
    \sum_{j=1}^{|C|}
    \proj{p_jr_Cp_j^{-1}}_{\sysC_0}\otimes CU^{p_jd p_j^{-1}}_{\sysC^L\rightarrow\sysB^L}\ .
 \end{align}
 In particular, because the gate~$\multgate_{\sysC^L\rightarrow\sysC_0}$ and its inverse can be implemented in adaptive (local) constant depth,  it suffices
 to demonstrate that
 there is a constant-depth adaptive local circuit implementing the measurement~$\{\hat{K}_\sigma^{(C,\pi)}\}_{(C,\pi)}$.

 For every $y\in G$, there is a unique pair $(C(y),j(y))$ with $C(y)\in G^{cj}$ and
 $j(y)\in \{1,\ldots,|C(y)|\}$  such that $y\in C(y)$ and $y=p_{j(y)}r_{C(y)}p_{j(y)}^{-1}$.
 Let $c_{\max}=\max_{C\in G^{cj}}|C|$ be the maximal size of a conjugacy class~$C$ of $G$.
 Define an isometric map $W:\mathbb{C}^{|G|}\rightarrow\mathbb{C}^{|G^{cj}|}\times \mathbb{C}^{c_{\max}}$ by
 \begin{align}
W\ket{y}&= \ket{C(y)}\otimes\ket{j(y)}\qquad\textrm{ for every }\qquad y\in G\ .
 \end{align}
 Then
 \begin{align}
   W_{\sysC_0} \hat{K}^{(C,\pi)}_\sigma W_{\sysC_0}^\dagger &=
   \proj{C}\otimes G^C_\pi\label{eq:wckwc}
 \end{align}
 where
 \begin{align}
   G^C_\pi:=\frac{1}{|E(C)|}\sum_{j=1}^{|C|}\sum_{d\in E(C)} \overline{\chi_\pi(d)} 
\proj{j}_J\otimes CU^{p_jd p_j^{-1}}_{\sysC^L\rightarrow\sysB^L}\ .   
 \end{align}
 Eq.~\eqref{eq:wckwc} implies that the
 measurement~$\{W_{\sysC_0} \hat{K}^{(C,\pi)}_\sigma W_{\sysC_0}^\dagger\}_{(C,\pi)}$
 can be implemented by a two-step procedure: First a von Neumann measurement
 is performed on the first system to determine the conjugacy class~$C$.
 Subsequently, a measurement~$\{G^{C}_\pi\}_\pi$ is applied.
 It is straghtforward to check that this latter measurement consists of orthogonal projections. Indeed, this follows from Schur's orthogonality relation for
 characters and the fact that 
 $CU^{p_jd_1 p_j^{-1}}_{\sysC^L\rightarrow\sysB^L}CU^{p_jd_2 p_j^{-1}}_{\sysC^L\rightarrow\sysB^L}=CU^{p_j(d_1d_2)p_j^{-1}}_{\sysC^L\rightarrow\sysB^L}$
  (cf.~\eqref{eq:stringmultiplication}).

 Let now~$C$ be a fixed conjugacy class. It remains to argue  how to implement~$\{G^{C}_{\pi}\}_\pi$.
Let $\sysA\cong\mathbb{C}^{|G|}$ be an auxiliary system initialized in the state~$\ket{+}:=\frac{1}{|E(C)|^{1/2}}\sum_{d\in E(C)}\ket{d}$.
 For every irreducible representation~$\pi$ of $E(C)$ define the state
\begin{align}
\ket{\varphi^C_\pi}_\sysA:=\frac{1}{|E(C)|^{1/2}}\sum_{d\in E(C)}\chi_\pi(d)\ket{d}_\sysA\ .
\end{align}
By Schur's orthogonality relation for characters, the vectors~$\{\ket{\varphi^C_\pi}\}_{\pi}$ are orthonormal.

Let $E:\mathbb{C}^{|C|}\otimes \mathbb{C}^{|G|}\rightarrow \mathbb{C}^{|C|}\otimes \mathbb{C}^{|G|}$
be the unitary
that acts as
\begin{align}
E_{J\rightarrow\sysA}(\ket{j}_J\otimes \ket{d}_{\sysA})&=\ket{j}_J\otimes \ket{p_j d p_j^{-1}}_{\sysA}\ .
  \end{align}
Then
\begin{align}
  E_{J\rightarrow\sysA}(I_J\otimes \ket{+}_\sysA)
  &=\frac{1}{|E(C)|^{1/2}}\sum_{d\in E(C)} \sum_{j=1}^{|C|} \proj{j}_J \otimes
  \ket{p_j d p_j^{-1}}_{\sysA}
\end{align}
and thus
\begin{align}
  E_{J\rightarrow\sysA}^\dagger CCU_{\sysA\sysC^L\rightarrow\sysB^L} E_{J\rightarrow\sysA}(I_{J\sysB^L\sysC^L}\otimes \ket{+}_\sysA)
  &=\frac{1}{|E(C)|^{1/2}}\sum_{j=1}^{|C|}\sum_{d\in E(C)}\proj{j}_J\otimes 
CU^{p_jdp_j^{-1}}_{\sysC^L\rightarrow\sysB^L}\otimes\ket{d}_{\sysA}\ .
\end{align}
In particular, this implies that for any state~$\Psi_{J\sysB^L\sysC^L}$ we have
\begin{align}
  (I_{J\sysB^L\sysC^L}\otimes \bra{\varphi^C_\pi}_\sysA)  E_{J\rightarrow\sysA}^\dagger CCU_{\sysA\sysC^L\rightarrow\sysB^L} E_{J\rightarrow\sysA}(\ket{\Psi}_{J\sysB^L\sysC^L}\otimes \ket{+}_\sysA)
  &=G^{C}_\pi\ket{\Psi}\ .\label{eq:ijljm}
\end{align}
Using the fact that each $G^{C}_\pi$ is an orthogonal projection it follows from~\eqref{eq:ijljm} that
the statistics obtained by applying the projective measurement~$\{G^C_{\pi}\}_\pi$ to a state~$\Psi$
coincides to that obtained by applying the von Neumann measurement defined by~$\{\ket{\varphi^C_\pi}\}_\pi$
to the state~$E_{J\rightarrow\sysA}^\dagger CCU_{\sysA\sysC^L\rightarrow\sysB^L} E_{J\rightarrow\sysA}(\ket{\Psi}_{J\sysB^L\sysC^L}\otimes \ket{+}_\sysA)$.

It is straightforward to express this procedure as a constant-depth adaptive circuit.
\end{proof}

\textbf{Acknowledgments.} We thank  Nat Tantivasadakarn,
Ashvin Vishwanath, and Ruben Verresen 
for pointing out an error in an earlier version of this manuscript, and providing a helpful example elucidating the issue. In more detail,  our arguments in an earlier version of our paper relied on the additional  implicit assumption that the order of the considered group element~$a$ in Section~\ref{sec:statepreparation} matches that of the quotient group~$G/H$. This revised version lifts this assumption. RK thanks Amanda Young for discussions. IK thanks Tim Hsieh for discussions.  SB, AK and RK acknowledge support by the Army Research Office
under Grant Number W911NF-20-1-0014. The views and conclusions contained in
this document are those of the authors and should not be interpreted as representing
the official policies, either expressed or implied, of the Army Research Office or the
U.S. Government. The U.S. Government is authorized to reproduce and distribute
reprints for Government purposes notwithstanding any copyright notation herein. RK gratefully acknowledges support by the European Research Council under grant agreement no.~101001976 (project EQUIPTNT).

\textbf{Declarations.} Data sharing is not applicable to this article as no datasets were generated or analysed during the current study. The authors have no competing interests to declare that are relevant to the content of this article.

\bibliographystyle{unsrt}
\bibliography{q}

\appendix

\section{Orthogonality of the charge projections~\label{app:checkingorthogonality}}
Here we use expression~\eqref{eq:rewrittenksigmarc} to check that the operators
~$\{K_{\sigma}^{(C,\pi)}\}_{(C,\pi)}$ are indeed orthogonal projections. By definition, we have 
\begin{align}
K_\sigma^{(C,\pi)}K_\sigma^{(C',\pi')}&=\frac{1}{|E(C)|\cdot |E(C')|}\sum_{\substack{d\in E(C)\\
d'\in E(C')}} \overline{\chi_\pi(d)}\cdot \overline{\chi_{\pi'}(d')}
\sum_{\substack{j=1,\ldots,|C|\\
k=1,\ldots,|C'|}} F_\sigma^{p_j d p_j^{-1},p_j r_C p_j^{-1}} F_\sigma^{q_k d'q_k^{-1},q_k r_{C'}q_{k}^{-1}}\label{eq:Krcsigmakrcpsigmaprime}
\end{align}
where the conjugacy classes are given by $C=\{p_j r_C p_j\}_{j=1}^{|C|}$ and $C'=\{q_k r_{C'} q_k\}_{k=1}^{|C'|}$, respectively. With the multiplication rule~\eqref{eq:stringmultiplication} for basic ribbon operators we obtain
\begin{align}
F_\sigma^{p_j d p_j^{-1},p_j r_C p_j^{-1}}F_\sigma^{q_k d'q_k^{-1},q_k r_{C'}q_{k}^{-1}}&= \delta_{p_j r_C p_j^{-1},q_k r_{C'}q_k^{-1}} F_\sigma^{(p_j d p_j^{-1})(q_k d'q_k^{-1}), p_j r_C p_j^{-1}}\\
&=\delta_{C,C'}\delta_{p_j,q_k} F_\sigma^{p_j dd' p_j^{-1}, p_j r_C p_j^{-1}}\ ,
\end{align}
where we used that $p_j r_C p_j^{-1}=q_k r_{C'}q_k^{-1}$ if and only if these elements belong to the same conjugacy class $C=C'$, and $r_j=q_k$.  Inserting this into~\eqref{eq:Krcsigmakrcpsigmaprime}
yields
\begin{align}
K_\sigma^{(C,\pi)}K_\sigma^{(C',\pi')}&=\frac{\delta_{C,C'}}{|E(C)|^2}
\sum_{d,d'\in E(C)}\overline{\chi_\pi(d)}\cdot \overline{\chi_{\pi'}(d')}
\sum_{j=1}^{|C|}F^{p_j dd' p_j^{-1},p_jr_C p_j^{-1}}_\sigma\ .
\end{align}
Let us reparametrize this by setting $g:=dd'$, i.e., $d'=d^{-1}g$. Then we can write 
\begin{align}
K_\sigma^{(C,\pi)}K_\sigma^{(C',\pi')}&=\frac{\delta_{C,C'}}{|E(C)|^2}
\sum_{d,g\in E(C)}\overline{\chi_\pi(d)}\cdot \overline{\chi_{\pi'}(d^{-1}g)}
\sum_{j=1}^{|C|}F^{p_j g p_j^{-1},p_jr_C p_j^{-1}}_\sigma\ .\label{eq:ksigmarck}
\end{align}
We claim that
\begin{align}
\frac{1}{|E(C)|}\sum_{d\in E(C)} \overline{\chi_\pi(d)}\cdot \overline{\chi_{\pi'}(d^{-1}g)} &=\delta_{\pi,\pi'}\overline{\chi_\pi(g)}\qquad\textrm{ for every }\qquad g\in G\ .\label{eq:toproveec}
\end{align}
Inserting~\eqref{eq:toproveec} into~\eqref{eq:ksigmarck} gives 
\begin{align}
K_\sigma^{(C,\pi)}K_\sigma^{(C',\pi')}&=\frac{\delta_{C,C'}\delta_{\pi,\pi'}}{|E(C)|}
\sum_{g\in E(C)}\overline{\chi_\pi(g)}\sum_{j=1}^{|C|}F^{p_j g p_j^{-1},p_jr_C p_j^{-1}}_\sigma\\
&=\delta_{C,C'}\delta_{\pi,\pi'}K_\sigma^{(C,\pi)}\ ,
\end{align}
see~\eqref{eq:rewrittenksigmarc}, as claimed.

It remains to show~\eqref{eq:toproveec}. We have for every $g\in G$
\begin{align}
\frac{1}{|E(C)|}\sum_{d\in E(C)} \chi_\pi(d)\cdot \chi_{\pi'}(d^{-1}g) &=
\frac{1}{|E(C)|}\sum_{d\in E(C)}\sum_{k=1}^{\dim \pi}\sum_{\ell=1}^{\dim \pi'} \Gamma_\pi(d)_{k,k}
\Gamma_{\pi'}(d^{-1}g)_{\ell,\ell}\\
&=\frac{1}{|E(C)|}\sum_{d\in E(C)}\sum_{k=1}^{\dim \pi}\sum_{\ell,m=1}^{\dim \pi'} \Gamma_\pi(d)_{k,k}
\Gamma_{\pi'}(d^{-1})_{\ell,m} \Gamma_{\pi'}(g)_{m,\ell}\\
&=\frac{1}{|E(C)|}\sum_{d\in E(C)}\sum_{k=1}^{\dim \pi}\sum_{\ell,m=1}^{\dim \pi'} \Gamma_\pi(d)_{k,k}
\overline{\Gamma_{\pi'}(d)}_{m,\ell} \Gamma_{\pi'}(g)_{m,\ell}\\
\end{align}
where we used the unitarity of the representation~$\pi'$. The Schur orthogonality relations for matrix elements of irreducible representations state that
\begin{align}
\frac{1}{|E(C)|}\sum_{d\in E(C)} \Gamma_{\pi}(d)_{k,\ell}\overline{\Gamma_{\pi'}(d)_{s,t}}&=\delta_{\pi,\pi'}\delta_{k,s}\delta_{\ell,t}\ ,
\end{align}
which implies that
\begin{align}
\frac{1}{|E(C)|}\sum_{d\in E(C)} \chi_\pi(d)\cdot \chi_{\pi'}(d^{-1}g) &=
\delta_{\pi,\pi'}\cdot \sum_{k=1}^{\dim \pi}\Gamma_{\pi'}(g)_{k,k}
\\
&=\delta_{\pi,\pi'}\chi_\pi(g)\ .
\end{align}
Taking the complex conjugate implies the claim~\eqref{eq:toproveec}.

\section{On unitary linear combinations of anyonic string operators\label{app:unitarycombinations}}

Let $\xi$ be a ribbon. Fix $(C,\pi)$. In this appendix, we show that 
the operators~$\{F_\xi^{(C,\pi);((i,j),(i',j'))}\}_{(i,j),(i',j')}$ cannot  be linearly combined to form a unitary in general.
More precisely, we show the following:
\begin{lemma}\label{lem:linearcombinationunitary}
Let $\xi$ be a ribbon. 
\begin{enumerate}[(i)]
\item\label{it:firstitemunitarycombination}
If $G$ is abelian, then for any pair $(C,\pi)$ and any $i,i'\in \{1,\ldots,\dim(\pi)\}$, there are coefficients~$\{m_{j,j'}\}_{j,j'}$ such that  the linear combination
\begin{align}
F_\xi&:=\sum_{j,j'} m_{j,j'}F_\xi^{(C,\pi);((i,j),(i',j'))}
\end{align}
is unitary.
\item\label{it:seconditemunitarycombination}
For $G=S_3$, the conjugacy class~$C=\{\idG\}$,
and the $2$-dimensional irreducible representation~$\pi$ of~$S_3$, there is no choice of coefficients~$\{m_{i,i',j,j'}\}_{i,i',j,j'}$ such that 
\begin{align}
F_\xi &:=\sum_{i,i',j,j'} m_{i,i',j,j'}F_\xi^{(C,\pi);((i,j),(i',j'))}\label{eq:generalunitaryoplinearcomb}
\end{align}
is unitary.
\end{enumerate}
\end{lemma}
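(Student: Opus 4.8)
\section*{Proof proposal for Lemma~\ref{lem:linearcombinationunitary}}

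The plan is to reduce both parts to the algebra of the basic ribbon operators $F_\xi^{h,g}$, i.e.\ the relations~\eqref{eq:stringmultiplication} and~\eqref{eq:adjointstringops}, together with the completeness relation $\sum_{k\in G}F_\xi^{\idG,k}=I$. The latter holds because $F_\xi^{\idG,k}$ is the orthogonal projector onto the sector $\{y_1\cdots y_L=k\}$ --- it is the factor $P^k$ in the decomposition $F_\xi^{h,g}=P^g_{\sysC^L}CU^h_{\sysC^L\to\sysB^L}$ used in Section~\ref{sec:anyonicribbonimplementation}. In both parts the combination $F_\xi$ will be of the form $\sum_k c_k F_\xi^{h_0,k}$ with one fixed magnetic label $h_0$, so unitarity becomes a finite-dimensional statement about the coefficient vector $c=(c_k)_{k}$.

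For part~(i), when $G$ is abelian one has $|C|=d_\pi=1$ and $E(C)=G$, so the family $\{F_\xi^{(C,\pi);({\bf u},{\bf v})}\}$ has the single member $F_\xi^{(C,\pi)}=\tfrac1{|G|}\sum_{k\in G}\overline{\pi(k)}\,F_\xi^{r_C^{-1},k}$, where $\pi$ is a one-dimensional unitary character (so $|\pi(k)|=1$). I would take $m_{1,1}=|G|$ and set $F_\xi=\sum_{k\in G}\overline{\pi(k)}F_\xi^{r_C^{-1},k}$. Using $(F_\xi^{r_C^{-1},k})^\dagger=F_\xi^{r_C,k}$ and $F_\xi^{r_C,k}F_\xi^{r_C^{-1},k'}=\delta_{k,k'}F_\xi^{\idG,k}$ one computes $F_\xi^\dagger F_\xi=\sum_k|\pi(k)|^2 F_\xi^{\idG,k}=\sum_k F_\xi^{\idG,k}=I$, and symmetrically $F_\xi F_\xi^\dagger=\sum_k F_\xi^{r_C^{-1}r_C,k}=I$. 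Hence $F_\xi$ is unitary; this step is routine.

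For part~(ii), with $G=S_3$, $C=\{\idG\}$ and $\pi$ the two-dimensional irrep, one has $r_C=\idG$ and $E(C)=S_3$, so by~\eqref{eq:generalribbonoperatoranyonic} every family member is $F_\xi^{(C,\pi);((1,j),(1,j'))}=\tfrac13\sum_{k\in S_3}(\Gamma_\pi^{-1}(k))_{j,j'}F_\xi^{\idG,k}$, a combination of the mutually orthogonal flux projectors $F_\xi^{\idG,k}$. Thus every linear combination is $F_\xi=\sum_{k}c_k F_\xi^{\idG,k}$ with $c=\tfrac13\sum_{j,j'}m_{j,j'}(\Gamma_\pi^{-1}(\cdot))_{j,j'}$; since $\sum_k F_\xi^{\idG,k}=I$ and the $F_\xi^{\idG,k}$ are nonzero orthogonal projectors, $F_\xi$ can reproduce the anyonic operator $F_\xi^{(C,\pi)}$ (i.e.\ send ground states to a multiple of $F_\xi^{(C,\pi)}$ applied to them) only if $c$ is proportional to the coefficient vector $\big((\Gamma_\pi^{-1}(k))_{1,1}\big)_{k\in S_3}$ of $F_\xi^{(C,\pi)}$, while it is unitary only if all $|c_k|$ coincide. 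These two requirements are incompatible exactly because the vector $\big((\Gamma_\pi^{-1}(k))_{1,1}\big)_k$ does not have all entries of equal modulus: by Lemma~\ref{lem:conditiondimension} (equivalently property~\eqref{eq:Kproperty}), since $d_\pi=2>1$ there is $r\in S_3$ with $|(\Gamma_\pi(r))_{1,1}|<1=|(\Gamma_\pi(\idG))_{1,1}|$. The main obstacle is turning this observation into a clean formal statement --- in particular identifying precisely which condition singles out the combinations that ``genuinely realize'' the anyonic ribbon operator (so that unitarity is actually obstructed), as opposed to arbitrary superpositions over distinct local degrees of freedom. The only representation-theoretic input needed is the explicit $2$-dimensional irrep of $S_3$ (as given after Lemma~\ref{lem:conditiondimension}) together with the fact that, by Peter--Weyl, the trivial and sign characters span the orthogonal complement of the span of $\pi$'s matrix coefficients inside $\mathbb{C}[S_3]$.
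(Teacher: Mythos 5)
Your part~(i) is correct and is essentially the paper's argument in disguise: in the abelian case the unitarity criterion reduces to requiring that $|m_{1,1}\,\pi(k)|$ be the fixed constant $|E(C)|/d_\pi$ for all $k$, which a single scalar coefficient achieves; your direct verification via \eqref{eq:stringmultiplication}, \eqref{eq:adjointstringops} and $\sum_k F_\xi^{\idG,k}=I$ computes the same thing.

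Part~(ii) has a genuine gap, which you have half-acknowledged, but the repair you gesture at does not prove the lemma as stated. The lemma asserts that \emph{no} linear combination of the operators $F_\xi^{(C,\pi);({\bf u},{\bf v})}$ is unitary, with the coefficients completely unconstrained. Your argument only excludes coefficient vectors $c$ proportional to $\bigl((\Gamma_\pi^{-1}(k))_{1,1}\bigr)_k$, i.e., it shows that $F_\xi^{(C,\pi)}$ itself is not proportional to a unitary --- which is already the content of Lemma~\ref{lem:KOlemma}. The side condition you propose ("the combination genuinely realizes the anyonic operator") is not part of the statement, and adding it turns the lemma into a different, weaker claim. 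What must actually be shown is that the full four-dimensional span of the matrix coefficients $k\mapsto(\Gamma_\pi(k))_{j,j'}$ --- by Peter--Weyl, exactly the functions on $S_3$ orthogonal to the trivial and sign characters --- contains no function of constant modulus; this is what the paper's condition \eqref{eq:cexistencem} encodes, and the paper disposes of it with an unverified "easy to check". Before trying to reproduce that step, note that it appears to fail: the function $f$ taking the values $1,\omega,\omega^2$ (with $\omega=e^{2\pi i/3}$) on the three elements of $A_3$ and again $1,\omega,\omega^2$ on the three transpositions is unimodular and satisfies $\sum_k f(k)=\sum_k\mathrm{sgn}(k)f(k)=0$, hence lies in that span; consequently $\sum_k f(k)F_\xi^{\idG,k}$ is a unitary linear combination of the family $\{F_\xi^{(C,\pi);((1,j),(1,j'))}\}_{j,j'}$. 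So the obstacle you ran into is real: the statement cannot be established along these lines, and the paper's own concluding assertion for part~(ii) deserves re-examination rather than imitation.
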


\begin{proof}
To prove Lemma~\ref{lem:linearcombinationunitary},
we will argue that a necessary and sufficient condition for~\eqref{eq:generalunitaryoplinearcomb} to be unitary
is that there is some $i_0\in \{1,\ldots,\dim(\pi)\}$  such that
\begin{align}
 m_{i,i',j,j'}&=\delta_{i,i_0}m_{j,j'}\textrm{ for some coefficients }\{m_{j,j'}\}_{j,j'}\textrm{ satisfying }\\
 \left|\sum\limits_{\substack{j, j'}}
 m_{j,j'}\left( \Gamma_\pi  \left( k \right) \right)_{j, j'} \right|&=\frac{|E(C)|}{\dim(\pi)}\qquad\textrm{ for all }\qquad k\in E(C)\ .\label{eq:modifiedconditionsimplified}
\end{align}
If $G$ and thus~$E(C)$ is abelian, then every irreducible representation is $1$-dimensional, and taking~$m_{j,j'}=\textfrac{|E(C)|}{\dim(\pi)}$, property~\eqref{eq:modifiedconditionsimplified} is satisfied, giving the first  claim~\eqref{it:firstitemunitarycombination}.

Let us now show~\eqref{it:seconditemunitarycombination}:  Consider $G=S_3$ and the conjugacy class~$C=\{\idG\}$. Here $E(C)=S_3$. We again consider the $2$-dimensional irreducible representation~$\pi$ of~$S_3$ obtained 
from the representation of~$S_3$ on~$\mathbb{C}^3$ which permutes standard basis vectors as
\begin{align}
\pi(k) e_j&=e_{k^{-1}(j)}\qquad\textrm{ for }\qquad j=1,\ldots,3\qquad\textrm{ for all }k\in S_3\ , \label{eq:basicrepresentationrestricted}
\end{align}
by restricting  to the subspace 
  \begin{align}
V:= \left\{
    \begin{pmatrix}
      a\\
      b\\
      c
      \end{pmatrix}\ \Big|\ a+b+c=0
    \right\}\ .
  \end{align} First observe that the question of  existence of  coefficients~$\{m_{j,j'}\}_{j,j'}$ satisfying~\eqref{eq:modifiedconditionsimplified} is independent of the basis of~$V$ chosen: It can be phrased as the existence of a $2\times 2$-matrix~$M$ such that
  \begin{align}
  \left|\tr\left(M\Gamma_\pi(k)\right)\right|=3\qquad\textrm{ for all }\qquad k\in E(C)\ .\label{eq:cexistencem}
  \end{align}
 By the cyclicity of the trace,
 for any invertible matrix~$W$ corresponding to a basis change, the existence of~$M$ satisfying~\eqref{eq:cexistencem} is equivalent to existence of a $2\times 2$-matrix~$\tilde{M}$ such that
 \begin{align}
  \left|\tr\left(\tilde{M}W\Gamma_\pi(k)W^{-1}\right)\right|=3\qquad\textrm{ for all }\qquad k\in E(C)\ .
 \end{align}
 This shows the basis independence of the question of interest.    We again use the orthonormal basis
  \begin{align}
    e_1=\begin{pmatrix}
    \sqrt{2/3}\\
    -1/\sqrt{6}\\
    -1/\sqrt{6}
    \end{pmatrix}\qquad\qquad\qquad
    e_2&=\begin{pmatrix}
    0\\
    1/\sqrt{2}\\
    -1/\sqrt{2}
    \end{pmatrix}\ .
    \end{align}    
    Then, expressed in this basis, we have
    \begin{align}
      \Gamma_\pi\left(
      \idG
      \right)&=\begin{pmatrix}
      1 & 0\\
      0 & 1
      \end{pmatrix}\\
      \Gamma_\pi\left(\begin{pmatrix}
      1 & 2 & 3\\
      1 & 3 & 2
      \end{pmatrix}
            \right)&=\begin{pmatrix}
      1 & 0\\
      0 & -1
      \end{pmatrix}\\
       \Gamma_\pi\left(\begin{pmatrix}
      1 & 2 & 3\\
      2 & 1 & 3
      \end{pmatrix}
            \right)&=\frac{1}{2}\begin{pmatrix}
      -1 & \sqrt{3}\\
      \sqrt{3} & 1
      \end{pmatrix}\\
        \Gamma_\pi\left(\begin{pmatrix}
      1 & 2 & 3\\
       3 & 1 & 2
      \end{pmatrix}
            \right)&=\frac{1}{2}\begin{pmatrix}
      -1 & -\sqrt{3}\\
      \sqrt{3} & -1
      \end{pmatrix}\\
        \Gamma_\pi\left(\begin{pmatrix}
          1 & 2 & 3\\
      2 & 3 & 1
      \end{pmatrix}
            \right)&=\frac{1}{2}\begin{pmatrix}
      -1 & \sqrt{3}\\
      -\sqrt{3} & -1
      \end{pmatrix}\\
       \Gamma_\pi\left(\begin{pmatrix}
      1 & 2 & 3\\
      3 & 2 & 1
      \end{pmatrix}
            \right)&=\frac{1}{2}\begin{pmatrix}
      -1 & -\sqrt{3}\\
      -\sqrt{3} & 1
      \end{pmatrix}\ .
    \end{align}
    It is easy to check from these expressions that a matrix~$M\in\mathsf{Mat}_{2\times 2}(\mathbb{C})$ satisfying~\eqref{eq:cexistencem} does not exist. This gives the claim~\eqref{it:seconditemunitarycombination}.
    
   It remains to verify that~\eqref{eq:modifiedconditionsimplified} is indeed a necessary and sufficient condition for a unitary linear combination~$F_\xi$ of the form~\eqref{eq:generalunitaryoplinearcomb} to exist. From the multiplication rule~\eqref{eq:stringmultiplication}
   of elementary ribbon operators,  the definition of~$F_\xi^{(C,\pi);((i,j),(i',j'))}$ and the action of the elementary ribbon operator~$F_\xi^{h,g}$ (by conjugation with $h$), it is clear that $F_\xi$ can only be unitary if $m_{i,i',j,j'}=\delta_{i,i_0}m_{i,i',j,j'}$
   for some fixed $i_0\in \{1,\ldots,\dim(\pi)\}$. In other words, it suffices to consider operators of the form
   \begin{align}
	F_\xi := \sum\limits_{i', j, j'} m_{i',j,j'} F_\xi^{(C,\pi);((i,j),(i',j'))}
\end{align}
for some complex coefficients~$\{m_{i',j,j'}\}_{i',j,j'}$. We have
\begin{align}
	F_\xi^\dagger F_\xi &= \frac{\dim \left( \pi \right)^2}{\left| E(C) \right|^2} \sum\limits_{i_1', j_1, j_1'} \sum\limits_{i_2', j_2, j_2'} \overline{m_{i_1',j_1,j_1'}} c_{i_2'j_2j_2'} \sum\limits_{k_1, k_2 \in E(C)} \overline{\left( \Gamma_\pi^{-1}  \left( k_1 \right) \right)_{j_1, j_1'}} \left( \Gamma_\pi^{-1}  \left( k_2 \right) \right)_{j_2, j_2'} F_\xi^{(c_i, p_i k_1 p_{i_1'}^{-1})} F_\xi^{(c_i^{-1}, p_i k_2 p_{i_2'}^{-1})} \\ 
	&= \frac{\dim \left( \pi \right)^2}{\left| E(C) \right|^2} \sum\limits_{i_1', j_1, j_1'} \sum\limits_{i_2', j_2, j_2'} \overline{m_{i_1',j_1,j_1'}} m_{i_2',j_2,j_2'} \sum\limits_{k_1, k_2 \in E(C)} \overline{\left( \Gamma_\pi^{-1}  \left( k_1 \right) \right)_{j_1, j_1'}} \left( \Gamma_\pi^{-1}  \left( k_2 \right) \right)_{j_2, j_2'} \delta_{p_i k_1 p_{i_1'}^{-1}, p_i k_2 p_{i_2'}^{-1}} F_\xi^{(\idG, p_i k_1 p_{i_1'}^{-1})} \\ 
	&= \frac{\dim \left( \pi \right)^2}{\left| E(C) \right|^2} \sum\limits_{i_1', j_1, j_1'} \sum\limits_{i_2', j_2, j_2'} \overline{m_{i_1',j_1,j_1'}} m_{i_2',j_2,j_2'} \sum\limits_{k \in E(C)} \overline{\left( \Gamma_\pi^{-1}  \left( k \right) \right)_{j_1, j_1'}} \left( \Gamma_\pi^{-1}  \left( k p_{i_1'}^{-1} p_{i_2'} \right) \right)_{j_2, j_2'} \delta_{kp_{i'_1}^{-1}p_{i'_2}\in E(C)}
	F_\xi^{(\idG, p_i k p_{i_1'}^{-1})} \ .
\end{align}
Here we used the multiplication rule~\eqref{eq:stringmultiplication} of the elementary ribbon operators and performed a variable substitution. We note  the following equivalences:
\begin{align}
	 k p_{i_1'}^{-1} p_{i_2'} \in E(C) \qquad   &\Leftrightarrow\qquad  k p_{i_1'}^{-1} \underbrace{p_{i_2'} r_C p_{i_2'}^{-1}}_{=c_{i_2'}}  p_{i_1'} k^{-1} = r_C \\
	 & \Leftrightarrow \qquad p_{i_1'}^{-1} c_{i_2'} p_{i_1'} = r_C \\ 
	 &\Leftrightarrow \qquad c_{i_2'} = p_{i_1'} r_C p_{i_1'}^{-1} = c_{i_1'} \\ 
	 &\Leftrightarrow \qquad i_1' = i_2' \ .
\end{align}
Therefore we obtain
\begin{align}
	F_\xi^\dagger F_\xi &= \frac{\dim \left( \pi \right)^2}{\left| E(C) \right|^2} \sum\limits_{\substack{i_1', j_1, j_1', \\ j_2, j_2'}} \overline{m_{i_1',j_1,j_1'}} m_{i_1',j_2,j_2'} \sum\limits_{k \in E(C)} \overline{\left( \Gamma_\pi^{-1}  \left( k \right) \right)_{j_1, j_1'}} \left( \Gamma_\pi^{-1}  \left( k \right) \right)_{j_2, j_2'} F_\xi^{(\idG, p_i k p_{i_1'}^{-1})} \\ 
	&= \frac{\dim \left( \pi \right)^2}{\left| E(C) \right|^2} \sum\limits_{k \in E(C)} \sum\limits_{i_1'}
	\left( \sum\limits_{\substack{j_1, j_1', \\ j_2, j_2'}} \overline{m_{i_1',j_1,j_1'}} m_{i_1',j_2,j_2'}  \overline{\left( \Gamma_\pi^{-1}  \left( k \right) \right)_{j_1, j_1'}} \left( \Gamma_\pi^{-1}  \left( k \right) \right)_{j_2, j_2'} \right) F_\xi^{(\idG, p_i k p_{i_1'}^{-1})} \ .
\end{align}
Using the fact that $E(C)$ is a group, we can replace~$k$ by $k^{-1}$, obtaining
\begin{align}
F_\xi^\dagger F_\xi &= \frac{\dim \left( \pi \right)^2}{\left| E(C) \right|^2} \sum\limits_{k \in E(C)} \sum\limits_{i_1'}
	\left( \sum\limits_{\substack{j_1, j_1', \\ j_2, j_2'}} \overline{m_{i_1',j_1,j_1'}} m_{i_1',j_2,j_2'}  \overline{\left( \Gamma_\pi  \left( k \right) \right)_{j_1, j_1'}} \left( \Gamma_\pi  \left( k \right) \right)_{j_2, j_2'} \right) F_\xi^{(\idG, p_i k^{-1} p_{i_1'}^{-1})} \ .
\end{align}
Recall that every group element $g\in G$ can be written as $g=p_j n$ for some $j\in \{1,\ldots,|C|\}$ and $n\in E(C)$. In particular, as $k$ ranges over~$E(C)$ and $i_1'$ ranges over $\{1,\ldots,|C|\}$, the product~$k^{-1} p_{i_1'}^{-1}$ ranges over all elements of~$G$. 
Observing that $\sum_{g\in G} \alpha_g F_\xi^{(\idG,g)}=1$ if and only if $\alpha_g=1$ for every $g\in G$ by definition of the basic ribbon operators~$\{F_\xi^{(h,g)}\}_{(h,g)}$, we conclude that
$F_\xi$ is unitary if and only if 
\begin{align} \label{eq:suffCondition}
	\sum\limits_{\substack{j_1, j_1', \\ j_2, j_2'}} \overline{m_{i_1',j_1,j_1'}} m_{i_1',j_2,j_2'}  \overline{\left( \Gamma_\pi^{-1}  \left( k \right) \right)_{j_1, j_1'}} \left( \Gamma_\pi^{-1}  \left( k \right) \right)_{j_2, j_2'} &=\frac{|E(C)|^2}{\dim(\pi)^2}	 \quad\textrm{ for every } k\in E(C)\textrm{ and } i_1'\in \{1,\ldots,\dim(\pi)\}\ .
\end{align}
In particular, this condition is satisfied if and only if 
there are coefficients $\{m_{j,j'}\}_{j,j'}$ as required in Eq.~\eqref{eq:modifiedconditionsimplified}.
\end{proof}

\section{The ground space of Kitaev's quantum double Hamiltonian\label{app:groundspacekitaev}}
In this appendix, we show that Kitaev's quantum double Hamiltonian defined on a planar connected graph~$\cL$ has a unique ground state which is a superposition of all exact $1$-forms, that is,
\begin{align}
\label{appeq:ground_state}
|\Psi(G)\ra \sim \sum_{\theta \in \calF_0}\; |\der\theta\ra.
\end{align}
as claimed (cf.~\eqref{ground_state}). 

We first recall some basic concepts and associated properties, see e.g.,~\cite[Section 3]{Cha17} or~\cite{Cui2020}. A $1$-form $\omega\in\cF_1(G)$ is called a flat $G$-connection or simply flat if for every site~$s=(v,p)$ of~$\cL$, we have 
\begin{align}
\omega(e_1)^{\sigma_1}\cdots \omega(e_m)^{\sigma_m}=\idG\ ,
\end{align}
where $(e_1,\ldots,e_m)$ are the edges of the boundary~$\partial p$ of the plaquette~$p$ traversed in counterclockwise order starting from~$v$ and where 
\begin{align}
\sigma_j&=
\begin{cases}
1\qquad &\textrm{ if the orientation of }e_j\textrm{ agrees  with the direction of traversal}\\
-1 &\textrm{ otherwise}\ .
\end{cases}\label{eq:traversalsign}
\end{align} It follows immediately from the definition that any exact $1$-form~$\omega=\der\theta$ is flat. 
We denote by $\cF_1^{\textrm{flat}}(G)$ the set of flat $G$-connections on~$\cL$.

For $\omega\in\cF_1(G)$ and
any simple closed path~$\Gamma=(e_1,\ldots,e_m)$ of edges
we can consider the product
\begin{align}
\Phi_\omega(\Gamma)=\omega(e_1)^{\sigma_1}\cdots \omega(e_m)^{\sigma_m}\ ,
\end{align} with~$\sigma_j$ defined as in~\eqref{eq:traversalsign}. 
We will need the following statement:
\begin{lemma}\label{lem:simpleclosedgraphtrivialflux}
  Let $\cL$ be a planar connected graph. Let $\omega\in\cF^{\textrm{flat}}_1(G)$ be a flat $G$-connection on~$\cL$  and $\Gamma$  a simple closed path. Then
    \begin{align}
\Phi_\omega(\Gamma)&=\idG\ .\label{eq:gammaomegaphi}
    \end{align}
\end{lemma}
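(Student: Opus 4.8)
The plan is to prove Lemma~\ref{lem:simpleclosedgraphtrivialflux} by induction on the number of plaquettes enclosed by the simple closed path~$\Gamma$. Since $\cL$ is a planar connected graph, a simple closed path $\Gamma$ bounds a disk~$R$ in the plane, and $R$ is tiled by finitely many plaquettes $p_1,\ldots,p_N$ of~$\cL$. The base case $N=1$ is exactly the flatness condition: if $\Gamma$ encloses a single plaquette~$p$, then $\Gamma$ traverses the boundary $\partial p$, and the definition of $\omega\in\cF_1^{\textrm{flat}}(G)$ (possibly after cyclically permuting the starting edge and using that $\omega(e_1)^{\sigma_1}\cdots\omega(e_m)^{\sigma_m}=\idG$ is invariant under cyclic rotation up to conjugation by a fixed group element, which does not affect whether the product equals~$\idG$) gives $\Phi_\omega(\Gamma)=\idG$.

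For the inductive step, suppose the claim holds for all simple closed paths enclosing fewer than $N$ plaquettes, and let $\Gamma$ enclose $N\geq 2$ of them. Pick a plaquette~$p$ adjacent to the boundary, i.e., such that $\partial p$ shares at least one edge with~$\Gamma$. The key geometric fact I will use is that one can ``peel off'' $p$: the region $R\setminus p$ is again a disk (here I should be slightly careful and choose $p$ so that $R\setminus p$ remains simply connected — such a $p$ always exists for a disk tiled by plaquettes), bounded by a simple closed path~$\Gamma'$ enclosing $N-1$ plaquettes. Algebraically, $\Gamma$ can be written as a concatenation whose holonomy $\Phi_\omega(\Gamma)$ equals (up to conjugation by the holonomy along a connecting path from the basepoint) the product of $\Phi_\omega(\Gamma')$ and $\Phi_\omega(\partial p)$, because the edges of $\partial p$ interior to $R$ are traversed once in each direction and cancel. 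By the base case $\Phi_\omega(\partial p)=\idG$ and by the inductive hypothesis $\Phi_\omega(\Gamma')=\idG$, so $\Phi_\omega(\Gamma)=\idG$.

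The main obstacle is making the ``peeling'' and the holonomy-composition step rigorous: one must set up notation for holonomies along open paths (not just closed loops), keep track of basepoints and the conjugation ambiguity that arises when splitting a loop into a product of sub-loops based at different vertices, and verify that a boundary-adjacent plaquette whose removal keeps the region a disk always exists. I would handle the basepoint bookkeeping by fixing, for each vertex on $\Gamma$, a path inside~$R$ to the basepoint $v_0$ of~$\Gamma$, defining $\Phi_\omega$ of an open path in the obvious way, and noting the cocycle/concatenation identity $\Phi_\omega(\gamma_1\gamma_2)=\Phi_\omega(\gamma_1)\Phi_\omega(\gamma_2)$ together with $\Phi_\omega(\bar\gamma)=\Phi_\omega(\gamma)^{-1}$ for the reversed path. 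With these in hand, the inductive step is a short computation: writing $\Gamma$ as $\gamma_1 \cdot (\text{shared boundary of }p) \cdot \gamma_2$ and inserting a trip out to $\partial p$ and back along a path that cancels, one rewrites $\Phi_\omega(\Gamma)$ as $\Phi_\omega(\Gamma') \cdot h\,\Phi_\omega(\partial p)\,h^{-1}$ for some $h\in G$, and both factors are~$\idG$. Finally, the existence of a removable boundary plaquette follows from elementary planar topology (a tiled disk with $\geq 2$ tiles has a tile meeting the boundary in a connected arc whose removal leaves a disk), which I would state as a small combinatorial claim rather than prove in detail.
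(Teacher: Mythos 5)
Your proposal is correct and follows essentially the same route as the paper's proof: the paper likewise decomposes the region enclosed by $\Gamma$ into plaquettes and uses the multiplicativity of $\Phi_\omega$ under gluing faces along shared edges (so that flatness extends from individual faces to their union), which is exactly what your plaquette-peeling induction carries out. Your version is in fact somewhat more explicit than the paper's about the basepoint/conjugation bookkeeping and the combinatorial fact that a removable boundary plaquette exists, both of which the paper leaves implicit.
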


\begin{proof}
  The proof follows the reasoning given, e.g., in the proof of~\cite[Lemma 3.1.3]{Cha17}: It relies on the fact that 
  for two faces~$f_1,f_2\in\cL_2$ sharing an edge,  we have
  \begin{align}
\Phi_\omega(\partial (f_1+f_2))&=\Phi_\omega(\partial f_1)\Phi_\omega(\partial f_2)\ .\label{eq:multiplicativityconditionfaces}
  \end{align}
  Here $\partial$ denotes the boundary map which respects the corresponding orientation. 
(In fact, the map $\Phi_\omega$ belongs to $\mathsf{Hom}(\pi_1(\cL),G)$, where $\pi_1(\cL)$ is the fundamental group of~$\cL$. We will not need this here.)
 This implies that the flatness condition extends from the two faces to their union.  

  In a planar connected graph~$\cL$, a cycle basis is given by the set of bounded faces of an embedding of~$\cL$, i.e., the set~$\cF$ of (bounded) faces. Decomposing a closed path~$\Gamma$ into faces, the claim~\eqref{eq:gammaomegaphi} follows from~\eqref{eq:multiplicativityconditionfaces}. 
  \end{proof}

Let $\theta\in \cF_0(G)$ be a $0$-form. For every $1$-form~$\omega$, we define the $1$-form~$\Delta_\theta\left(\omega\right)$ by
\begin{align}
\Delta_\theta\left(\omega\right)(e)=\theta(e^+) \omega(e)\theta(e^-)^{-1}\qquad\textrm{ for any }\qquad e\in E
\end{align}
where $(e^+,e^-)$ are the head and tail of~$e$. The map~$\Delta_\theta$ is  called a gauge transformation. It preserves the set of flat connections, i.e.,
\begin{align}
\Delta_\theta\left(\cF_1^{\textrm{flat}}(G)\right)\subset \cF_1^{\textrm{flat}}(G)
\end{align}
and has inverse 
\begin{align}
(\Delta_\theta)^{-1}=\Delta_{\theta^{-1}}\ .\label{eq:inversedeltathetadef}
\end{align}
 Here $\theta^{-1}$ is defined pointwise by
\begin{align}
\theta^{-1}(g)=\theta(g)^{-1}\qquad\textrm{ for all }\qquad g\in G\ .
\end{align}
We say that two flat connections~$\omega, \omega'\in\cF_1^{\textrm{flat}}(G)$ are equivalent (written~$\omega\sim\omega'$) if there is some~$\theta\in\cF_0(G)$ such that $\omega'=\Delta_{\theta}\left(\omega\right)$.   Let us denote by $[\omega]$ an equivalence class with representative~$\omega$, and by $\left[\cF_1^{\textrm{flat}}(G)\right]$ the set of equivalence classes.

These concepts are meaningful because the plaquette-constraints
\begin{align}
B_s\Psi&=\Psi\qquad\textrm{ for every  site }s
\end{align}
for a ground state~$\Psi$ amount to the statement that~$\Psi$ is a superposition of flat connections. On the other hand, the plaquette constraints
\begin{align}
A_v\Psi&=\Psi\qquad\textrm{ for every vertex }v\ 
\end{align}
amount to the statement that equivalent flat $G$-connections have identical weight in the superposition; this is because
\begin{align}
\left|\Delta_\theta\left(\omega\right)\rangle\right.&=\prod_{v}A_v^{\theta(v)}\ket{\omega}\qquad\textrm{ for any }\qquad \omega\in \cF_1(G)
\end{align}
and any $0$-form~$\theta$. In particular, this means that
for any equivalence class $[\omega]\in \left[\cF_1^{\textrm{flat}}(G)\right]$, the state
\begin{align}
\ket{\Psi_{[\omega]}}:=\sum_{\omega': \omega'\sim\omega}\ket{\omega'}\label{eq:equivalenceclassbndstate}
\end{align}
is a ground state, and the ground space is spanned by
\begin{align}
\{\ket{\Psi_{[\omega]}}\}_{[\omega]\in \left[\cF_1^{\textrm{flat}}(G)\right]}\ .
\end{align}
In particular, the dimension of the ground space is equal to the number of equivalence classes~$\left|\left[\cF_1^{\textrm{flat}}(G)\right]\right|$ of flat connections.

\begin{lemma}\label{lem:equivalenceclassesconnectedplanar}
Let $\cL$ be a planar connected graph. 
\begin{enumerate}[(i)]
\item We have 
\label{eq:dimboundplanarconnect}
\begin{align}
\left|\left[\cF_1^{\textrm{flat}}(G)\right]\right|&=1\ .
\end{align}
That is, the ground space is $1$-dimensional.
\item\label{it:exactnessomegadefinition}
Every flat $1$-form~$\omega$ is exact.
\end{enumerate}
\end{lemma}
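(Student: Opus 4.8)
## Proof Plan for Lemma~\ref{lem:equivalenceclassesconnectedplanar}

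The plan is to prove both parts together by constructing, for an arbitrary flat $1$-form $\omega\in\cF_1^{\textrm{flat}}(G)$, an explicit $0$-form $\theta$ witnessing $\omega=\der\theta$; this simultaneously establishes exactness (part~\eqref{it:exactnessomegadefinition}) and the fact that every flat connection is equivalent to the trivial one $\omega\equiv\idG$ (since $\der\theta=\Delta_\theta(\mathbf{1})$, where $\mathbf{1}$ is the all-$\idG$ connection), which gives $\left|\left[\cF_1^{\textrm{flat}}(G)\right]\right|=1$ for part~\eqref{eq:dimboundplanarconnect}. The key tool is Lemma~\ref{lem:simpleclosedgraphtrivialflux}, which guarantees that the holonomy $\Phi_\omega(\Gamma)$ around any simple closed path is $\idG$.

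First I would fix a root vertex $r\in\cL_0$. For each vertex $v\in\cL_0$, choose a path $\Gamma_v$ in $\cL$ from $r$ to $v$ (this exists since $\cL$ is connected), and define
\[
\theta(v):=\Phi_\omega(\Gamma_v)=\omega(e_1)^{\sigma_1}\cdots\omega(e_m)^{\sigma_m},
\]
where $(e_1,\ldots,e_m)$ are the edges of $\Gamma_v$ with signs $\sigma_j$ recording agreement of orientation with the direction of traversal, as in~\eqref{eq:traversalsign}. The first thing to check is that $\theta(v)$ is well-defined, i.e., independent of the chosen path $\Gamma_v$: given two paths $\Gamma_v,\Gamma_v'$ from $r$ to $v$, their concatenation $\Gamma_v\cdot(\Gamma_v')^{-1}$ is a closed path, and by decomposing it into simple closed subpaths and applying Lemma~\ref{lem:simpleclosedgraphtrivialflux} (together with the concatenation/reversal behavior of $\Phi_\omega$), one gets $\Phi_\omega(\Gamma_v)\Phi_\omega(\Gamma_v')^{-1}=\idG$, hence $\Phi_\omega(\Gamma_v)=\Phi_\omega(\Gamma_v')$. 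Second, for any edge $e\in\cL_1$ with tail $e^-$ and head $e^+$, I would take a path $\Gamma_{e^-}$ from $r$ to $e^-$ and extend it by $e$ to obtain a valid choice of path to $e^+$; by well-definedness $\theta(e^+)=\theta(e^-)\,\omega(e)$, so $\omega(e)=\theta(e^-)^{-1}\theta(e^+)$. This is not quite the form $\der\theta(e)=\theta(e^+)\theta(e^-)^{-1}$ in~\eqref{derivative}, so one either replaces $\theta$ by $v\mapsto\theta(v)^{-1}$ or — matching the convention in the preceding sections — defines $\theta$ via paths \emph{from $v$ to $r$}; a brief remark handling this bookkeeping finishes exactness.

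For part~\eqref{eq:dimboundplanarconnect}: exactness of every flat $\omega$ means $\omega\sim\mathbf{1}$ for all $\omega\in\cF_1^{\textrm{flat}}(G)$ (any exact form $\der\theta$ equals $\Delta_\theta(\mathbf{1})$), so all flat connections lie in a single equivalence class, giving $\left|\left[\cF_1^{\textrm{flat}}(G)\right]\right|=1$; combined with the discussion preceding the lemma (the ground space is spanned by $\{\ket{\Psi_{[\omega]}}\}$, one vector per equivalence class), the ground space is one-dimensional, and it is spanned by $\ket{\Psi_{[\mathbf{1}]}}=\sum_{\omega'\sim\mathbf{1}}\ket{\omega'}=\sum_{\theta\in\cF_0(G)}\ket{\der\theta}$, which is precisely~\eqref{appeq:ground_state}. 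The main obstacle I anticipate is purely the combinatorial/topological bookkeeping in the well-definedness step — specifically, carefully justifying that an arbitrary closed path in a planar connected graph can be reduced, via Lemma~\ref{lem:simpleclosedgraphtrivialflux} and the multiplicativity~\eqref{eq:multiplicativityconditionfaces}, to a product of trivial holonomies (this is where planarity and connectedness are genuinely used; one can cite the cycle-basis argument already invoked in the proof of Lemma~\ref{lem:simpleclosedgraphtrivialflux}). Everything else is routine group-element manipulation.
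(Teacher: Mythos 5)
Your proposal is correct and follows essentially the same route as the paper: fix a root, define $\theta(v)=\Phi_\omega(\Gamma_v)$ along a chosen path, use Lemma~\ref{lem:simpleclosedgraphtrivialflux} for well-definedness, and deduce both exactness and the single equivalence class. Your explicit attention to the orientation convention (whether one obtains $\theta(e^+)\theta(e^-)^{-1}$ or its inverse) addresses a bookkeeping point the paper passes over with ``it is easy to check.''
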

\begin{proof}
Let~$\tilde{\idG}$ denote the constant~$1$-form defined by 
\begin{align}
\tilde{\idG}(e)=\idG\qquad\textrm{ for every }\qquad e\in E\ .
\end{align}
Clearly, we have $\tilde{\idG}\in \cF_1^{\textrm{flat}}(G)$.

Let now $\omega\in\cF_1^{\textrm{flat}}(G)$ be arbitrary. We show that $\omega\sim \tilde{\idG}$, i.e., we construct a $0$-form~$\theta$ such that 
\begin{align}
\omega&=\Delta_{\theta}\left(\tilde{\idG}\right)\ .\label{eq:omegadeltathetacond}
\end{align}
The claim then follows with~\eqref{eq:inversedeltathetadef}.


Let $r\in \cL_0$ be some fixed vertex which we call the root. For every $v\in V$, let $\Gamma_v\subseteq \cL_1$ be some fixed path on~$\cL$ connecting~$r$ and $v$. 
Then we define the $0$-form $\theta$ by
\begin{align}
\theta(v)&=\Phi_\omega(\Gamma_v)\qquad\textrm{ for all }\qquad v\in V\ .
\end{align}
Let us argue that~$\theta$ is well-defined, i.e., this definition does not depend on the choice of~$\Gamma_v$ for any~$v\in V$. Indeed, let $\Gamma_v$ and $\Gamma'_v$ be two distinct paths connecting~$r$ with~$v$. Let $\Gamma:=\Gamma_v-\Gamma'_v$ denote the path that extends from~$r$ to~$v$ along~$\Gamma_v$ and returns along~$\Gamma'_v$ traversed in the opposite direction.  Then~$\Gamma$ is closed and thus
\begin{align}
\Phi_\omega(\Gamma)&=\idG 
\end{align}
by Lemma~\ref{lem:simpleclosedgraphtrivialflux}. But
\begin{align}
\Phi_\omega(\Gamma)&=\left(\Phi_\omega(\Gamma_{v}')\right)^{-1}\Phi_\omega(\Gamma_v)\ 
\end{align}
by definition. We conclude that~$\Phi_\omega(\Gamma_v)=\Phi_\omega(\Gamma_v')$. This shows that~$\theta$ is well-defined.

It is easy to check  that $\theta$ satisfies~\eqref{eq:omegadeltathetacond}. This shows~\eqref{eq:dimboundplanarconnect}.

The claim~\eqref{it:exactnessomegadefinition} follows from~\eqref{eq:omegadeltathetacond} and the fact that~$\Delta_\theta(\tilde{e})=\der\theta$, as follows immediately from the definitions.
\end{proof}

Lemma~\ref{lem:equivalenceclassesconnectedplanar}~\eqref{eq:dimboundplanarconnect}
and expression~\eqref{eq:equivalenceclassbndstate} imply that the ground space (on a planar connected graph~$\cL$) is spanned by the vector
\begin{align}
\ket{\Psi(G)}&=\sum_{\omega\in \cF_1^{\textrm{flat}}(G)}\ket{\omega}\ ,
\end{align}
i.e., the uniform superposition of flat $1$-forms. Because every exact $1$-form is flat and Lemma~\ref{lem:equivalenceclassesconnectedplanar}~\eqref{it:exactnessomegadefinition}, the claim~\eqref{appeq:ground_state} follows.

\end{document}